\def\BibTeX{{\rm B\kern-.05em{\sc i\kern-.025em b}\kern-.08em
    T\kern-.1667em\lower.7ex\hbox{E}\kern-.125emX}}
\newtheorem{claim}{Claim}
\newtheorem{corollary}{Corollary}
\newtheorem{theorem}{Theorem}
\newtheorem{definition}{Definition}
\newtheorem{lemma}{Lemma}
\newtheorem{remark}{Remark}
\begin{document}

\title{Interference Channels with Rate-Limited Feedback}

\author{Alireza~Vahid,
        Changho~Suh,
        and~A.~Salman~Avestimehr
        \thanks{A. Vahid and A. S. Avestimehr are with the School of Electrical and Computer Engineering, Cornell University, Ithaca, USA. Email: {\sffamily av292@cornell.edu} and {\sffamily avestimehr@ece.cornell.edu}.}
\thanks{C. Suh is with the Research Laboratory of Electronics at Massachusetts Institute of Technology, Cambridge, USA. Email: {\sffamily chsuh@mit.edu}.}
\thanks{The research of A. S. Avestimehr and A. Vahid was supported in part by
the NSF CAREER award 0953117 and U.S. Air Force Young Investigator Program award FA9550-11-1-0064.}
\thanks{Copyright (c) 2011 IEEE.}}

\maketitle

\begin{abstract}
We consider the two-user interference channel with \emph{rate-limited} feedback. Related prior works focus on the case where feedback links have \emph{infinite} capacity, while no research has been done for the rate-limited feedback problem. Several new challenges arise due to the capacity limitations of the feedback links, both in deriving inner-bounds and outer-bounds. We study this problem under three different interference models: the El Gamal-Costa deterministic model, the linear deterministic model, and the Gaussian model. For the first two models, we develop an achievable scheme that employs three techniques: Han-Kobayashi message splitting, quantize-and-binning, and decode-and-forward. We also derive new outer-bounds for all three models and we show the optimality of our scheme under the linear deterministic model. In the Gaussian case, we propose a transmission strategy that incorporates lattice codes, inspired by the ideas developed in the first two models. For symmetric channel gains, we prove that the gap between the achievable sum-rate of the proposed scheme and our new outer-bounds is bounded by a constant number of bits, independent of the channel gains.
\end{abstract}

\begin{IEEEkeywords}
El Gamal-Costa deterministic model, Gaussian interference channel, linear deterministic model, rate-limited feedback, multi-user information theory.
\end{IEEEkeywords}

\section{Introduction}
\label{introduction}
The history of feedback in communication systems traces back to Shannon. It is well-known that feedback does not increase the capacity of discrete memoryless point-to-point channels~\cite{Sha}. However, feedback can enlarge the capacity region of multi-user networks, even in the most basic case of the two-user memoryless multiple-access channel~\cite{Gaar,Oza}. Hence, there has been a growing interest in developing feedback strategies and understanding the fundamental limits of communication over multi-user networks with feedback, in particular the two-user interference channel (IC). See~\cite{Kra, Kram, Gas, Tandon:08, Jia, Suh, Sahai} for example.

Especially in~\cite{Suh}, the infinite-rate feedback capacity of the two-user Gaussian IC has been characterized to within a 2-bit gap. One consequence of this result is that interestingly feedback can provide an unbounded capacity increase. This is in contrast to point-to-point and multiple-access channels where feedback provides no gain and bounded gain respectively.

While the feedback links are assumed to have {\em infinite} capacity in~\cite{Suh}, a more realistic feedback model is one where feedback links are {\em rate-limited}. In this paper, we study the impact of the rate-limited feedback in the context of the two-user IC. We focus on two fundamental questions: (1) what is the maximum capacity gain that can be obtained with access to feedback links at a specific rate of $C_{\sf FB}$? (2) what are the transmission strategies that exploit the available feedback links efficiently? Specifically, we address these questions under three channel models: the El Gamal-Costa deterministic model~\cite{ElGamal:it82}, the linear deterministic model of~\cite{ADT10}, and the Gaussian model.

Under the El Gamal-Costa deterministic model, we derive inner-bounds and outer-bounds on the capacity region. As a result, we show that the capacity region can be enlarged using feedback by at most the amount of available feedback, \emph{i.e.}, ``one bit of feedback is at most worth one bit of capacity''. Our achievable scheme employs three techniques: $(1)$  Han-Kobayashi message splitting; $(2)$ quantize-and-binning; and $(3)$ decode-and-forward.
Unlike the infinite-rate feedback case~\cite{Suh}, in the rate-limited feedback case, a receiver cannot provide its \emph{exact} received signal to its corresponding transmitter; therefore, the main challenge is how to smartly decide what to send back through the available rate-limited feedback links. We overcome this challenge as follows. We first split each transmitter's message into three parts: the cooperative common, the non-cooperative common, and the private message. Next, each receiver quantizes its received signal and then generates a binning index so as to capture part of the other user's common message (that we call the cooperative common message) which causes interference to its own message. The receiver will then send back this binning index to its intended transmitter through the rate-limited feedback links.
With this feedback, each transmitter decodes the other user's cooperative common message by exploiting its own message as side information. This way transmitters will be able to cooperate by means of the feedback links, thereby enhancing the achievable rates. This result will be described in Section~\ref{elgamal}.



We then study the problem under the linear deterministic model~\cite{ADT10} which captures the key properties of the wireless channel, and thus provides insights that can lead to an approximate capacity of Gaussian networks~\cite{ADT10, Suh, Guy, Guy2, Ave3}. We show that our inner-bounds and outer-bounds match under this linear deterministic model, thus establishing the capacity region. While this model is a special case of the El Gamal-Costa model, it has a significant role to play in motivating a generic achievable scheme for the El Gamal-Costa model. Moreover, the explicit achievable scheme in this model provides a concrete guideline to the Gaussian case.
We will explain this result in Section~\ref{linear}.

Inspired by the results in the deterministic models, we develop an achievable scheme and also derive new outer-bounds for the Gaussian channel. In order to translate the main ideas in our achievability strategy for the deterministic models into the Gaussian case, we employ lattice coding which enables receivers to decode superposition of codewords. Specifically at each transmitter, we employ lattice codes for cooperative messages. By appropriate power assignment of the codewords, we make the desired lattice codes arrive at the same power level, hence, receivers being able to decode the superposition of codewords. Each receiver will then decode the index of the lattice code corresponding to the superposition and sends it back to its corresponding transmitter where the cooperative common message of the other user will be decoded. For symmetric channel gains, we show that the gap between the achievable sum-rate and the outer-bounds can be bounded by a constant, independent of the channel gains. This will be explained in Section~\ref{gaussian}.

{\bf Related Work:} Interference channels with infinite-rate feedback have received previous attention~\cite{Kra, Kram, Gas, Tandon:08, Jia, Suh, Sahai}. Kramer~\cite{Kra,Kram} developed a feedback strategy in the Gaussian IC; In~\cite{Gas}, Gastpar and Kramer established an outer-bound on the usefulness of noisy feedback for the two-user IC. Tandon and Ulukus in~\cite{Tandon:08} derived an outer bound using the dependence balance bound technique~\cite{Hekstra:89}. However, the gap between the inner-bounds and the outer-bounds becomes arbitrarily large with the increase of signal to noise ratio ($\mathsf{SNR}$) and interference to noise ratio ($\mathsf{INR}$). Jiang-Xin-Garg~\cite{Jia} derived an achievable region in the discrete memoryless IC with feedback, based on block Markov encoding~\cite{Cover:it79} and binning. However, no outer-bounds are provided. Suh and Tse in~\cite{Suh} developed new inner bounds and outer bounds to characterize the feedback capacity of the Gaussian IC to within 2 bits. Sahai \emph{et al.} in~\cite{Sahai}, have shown that in order to achieve the infinite-rate feedback capacity, it is sufficient to have only one feedback link of infinite rate from one receiver to either of the two transmitters.

While no research has been done for the rate-limited feedback problem, some works have been done for the different yet related problem - the conferencing encoder problem~\cite{ Tuninetti:isit07, yang2011interference, CaoChen:07,Vinod:arix09, IHWang, Bagheri}. Tuninetti in~\cite{Tuninetti:isit07} has proposed a coding strategy for the two-user IC that results in higher achievable rates by expoliting overheard information by the transmitters; backward decoding is incorporated at the receivers which we also employ in our achievability scheme. This result was improved in~\cite{yang2011interference} by incorporating Gelfand-Pinsker coding~\cite{costa1983writing,gel1980coding} to send cooperatively the private messages. Prabhakaran and Viswanath~\cite{Vinod:arix09} have made a connection between the feedback problem and the conferencing encoder problem
However, the connection is loose especially when the feedback link is rate-limited, although it can be strong for the infinite-rate feedback case. It turns out this distinction between the two problems leads to developing a new lattice-code-based achievable scheme in our problem.

The rest of the paper is organized as follows. In Section~\ref{problem}, we formulate our problem and give a brief overview of the channel models. In Section~\ref{motivation}, we provide a motivating example which forms inspiration of our achievable scheme. In Section~\ref{elgamal}, we will provide our main results under the El Gamal-Costa deterministic model. We will then present the capacity theorem for the linear deterministic model in Section~\ref{linear}. Next, in Section~\ref{gaussian}, we describe our main results for the Gaussian channel. Section~\ref{conclusion} concludes the paper. 

\section{Problem Formulation and Network Model}
\label{problem}
We consider a two-user interference channel (IC) where a noiseless rate-limited feedback link is available from each receiver to its corresponding transmitter. See Figure~\ref{interference}. The feedback link from receiver $k$ to transmitter $k$ is assumed to have a capacity of $C_{\sf FBk}$, $k=1,2$.
on.

\begin{figure}[ht]
\centering
\includegraphics[height=4.4cm]{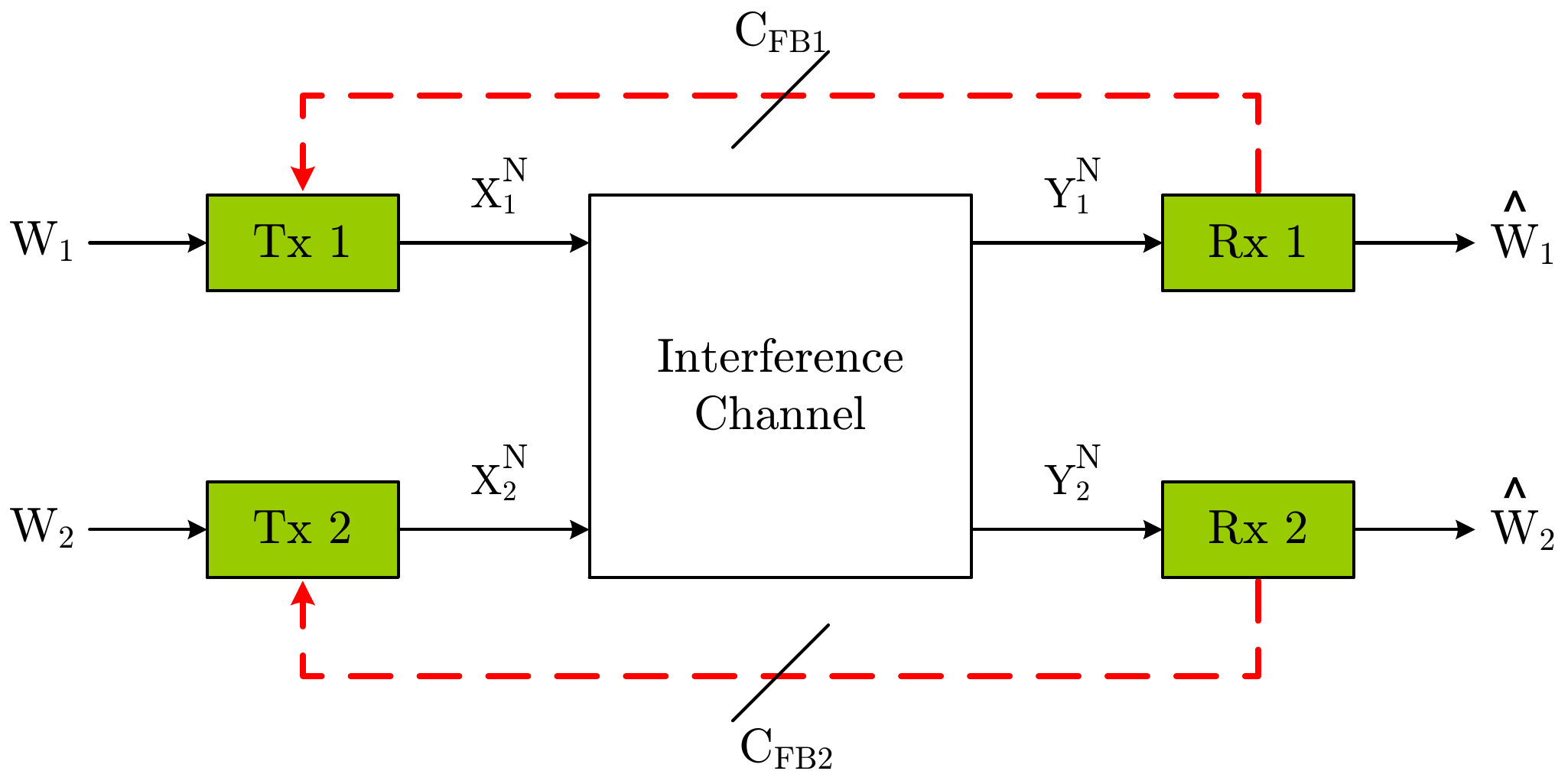}
\caption{Two-user interference channel with rate-limited feedback.}
\label{interference}
\end{figure}

Transmitters $1$ and $2$ wish to reliably communicate independent and uniformly distributed messages $W_1 \in \{1,2,\ldots,2^{N R_1} \}$ and $W_2 \in \{1,2,\ldots,2^{N R_2} \}$ to receivers $1$ and $2$ respectively, during $N$ uses of the channel. The transmitted signal of transmitter $k$, $k=1,2$, at time $i$, $1 \leq i \leq N$, and the received signal of receiver $k$, $k=1,2$, at time $i$, $1 \leq i \leq N$, are respectively denoted by $X_{k,i}$ and $Y_{k,i}$. There are two feedback encoders at the receivers that create the feedback signals from the received signals:
\begin{equation} \label{}
\tilde{Y}_{k,i} = \tilde{e}_{k,i}(Y_{k,1},\ldots,Y_{k,i-1}) = \tilde{e}_{k,i}(Y_k^{(i-1)}), \hspace{2mm} k = 1, 2.
\end{equation}
where we use shorthand notation to indicate the sequence up to $i-1$.

Due to the presence of feedback, the encoded signal $X_{k,i}$ of user $k$ at time $i$ is a function of both its own message and previous outputs of the corresponding feedback encoder:
\begin{equation}
\label{}
X_{k,i} =  e_{k,i}(W_k,\tilde{Y}_k^{(i-1)}), \hspace{5mm} k = 1, 2.
\end{equation}

Each receiver $k$, $k=1,2$, uses a decoding function $d_{k,N}$ to get the estimate $\hat{W}_k$, from the channel outputs $\{ Y_{k,i} : i = 1, \ldots, N \}$. An error occurs whenever $\hat{W}_k \neq W_k$. The average probability of error is given by
\begin{equation}\label{}
\lambda_{k,N} = \mathbb{E}[P(\hat{W}_k \neq W_k)], \hspace{5mm} k = 1, 2,
\end{equation}
where the expectation is taken with respect to the random choice of the transmitted messages $W_1$ and $W_2$.

We say that a rate pair $(R_1,R_2)$ is achievable, if there exists a block encoder at each transmitter, a block encoder at each receiver that creates the feedback signals, and a block decoder at each receiver as described above, such that the average error probability of decoding the desired message at each receiver goes to zero as the block
length $N$ goes to infinity. The capacity region $\mathcal{C}$ is the closure of the set of the achievable rate pairs.

We will consider the following three channel models to investigate this problem.

\noindent {\bf 1- El Gamal-Costa Deterministic Interference Channel}:

\begin{figure}[!htp]
\centering
\includegraphics[height=5cm]{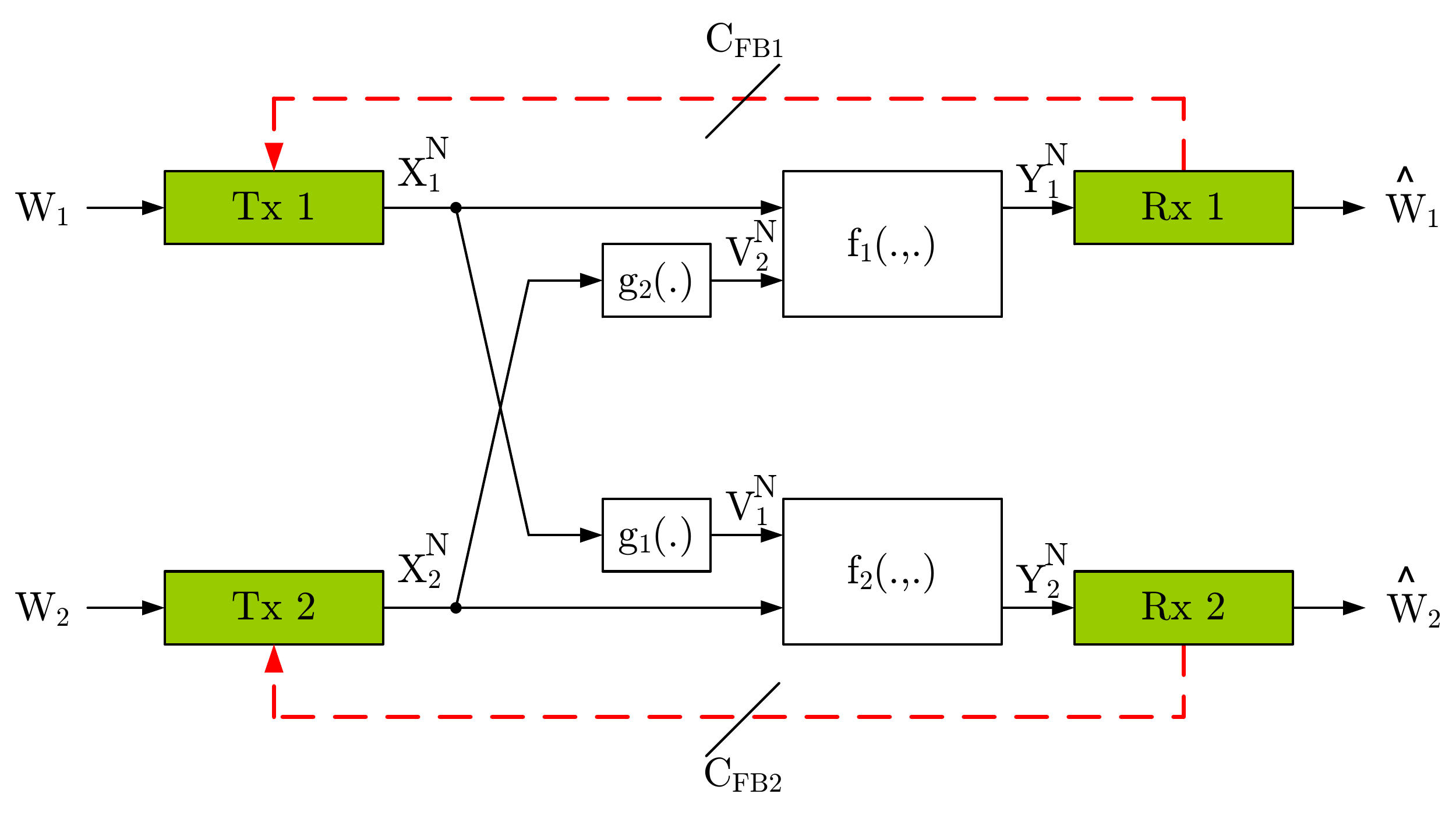}
\caption{The El Gamal-Costa deterministic IC with rate-limited feedback.} \label{fig_ElGamalCosta}
\end{figure}


Figure~\ref{fig_ElGamalCosta} illustrates the El Gamal-Costa deterministic IC~\cite{ElGamal:it82} with rate-limited feedback. In this model the outputs $Y_1$ and $Y_2$ and the interferences $V_1$ and $V_2$ are (deterministic) functions of inputs $X_1$ and $X_2$ \cite{ElGamal:it82}:
\begin{align}
\begin{split}
\label{eq-ElGamalCostaCondition0}
Y_{1,i} &= f_1(X_{1,i},V_{2,i}), \\
Y_{2,i} &= f_2(X_{2,i},V_{1,i}), \\
V_{1,i} &= g_1(X_{1,i}), \\
V_{2,i} &= g_2(X_{2,i}),
\end{split}
\end{align}
where $f_1(.,.)$ and $f_2(.,.)$ are such that
\begin{align}
\begin{split}
\label{eq-ElGamalCostaCondition}
H(V_{2,i}|Y_{1,i},X_{1,i}) &= 0, \\
H(V_{1,i}|Y_{2,i},X_{2,i}) &= 0.
\end{split}
\end{align}

Here $V_k$ is a part of $X_k$ ($k=1,2$), visible to the unintended receiver. This implies that in any system where each decoder can decode its message with arbitrary small error probability, $V_1$ and $V_2$ are completely determined at receivers $2$ and $1$, respectively, \emph{i.e.}, these are common signals.

\noindent {\bf 2- Linear Deterministic Interference Channel}:

This model, which was introduced in~\cite{ADT10}, captures the effect of broadcast and superposition in wireless networks. We study this model to bridge from general deterministic networks into Gaussian networks. In this model, there is a non-negative integer $n_{kj}$ representing channel gain from transmitter $k$ to receiver $j$, $k=1,2$, and $j=1,2$. In the linear deterministic IC, we can write the channel input to the transmitter $k$ at time $i$ as $X_{k,i} =[X_{k,i}^1~ X_{k,i}^2 \ldots X_{k,i}^q]^T \in \mathbb{F}_2^q$, $k=1,2$, such that $X^1_{k,i}$ and $X_{k,i}^q$ represent the most and the least significant bits of the transmitted signal respectively. Also, $q$ is the maximum of the channel gains in the network, \emph{i.e.}, $q=\max_{k,j} \left(  n_{kj} \right)$. At each time $i$, the received signals are given by
\begin{equation}
\begin{split}
&Y_{1,i} = \mathbf{S}^{q-n_{11}} X_{1,i} \oplus \mathbf{S}^{q-n_{21}} X_{2,i}, \\
&Y_{2,i} = \mathbf{S}^{q-n_{12}} X_{1,i} \oplus \mathbf{S}^{q-n_{22}} X_{2,i},
\end{split}
\end{equation}
where $\mathbf{S}$ is the $q \times q$ shift matrix and operations are performed in $\mathbb{F}_2$ (\emph{i.e.}, modulo two).  See Figure~\ref{fig:DIC} for an example.

\begin{figure}[!htp]
\centering
\includegraphics[height=5.5cm]{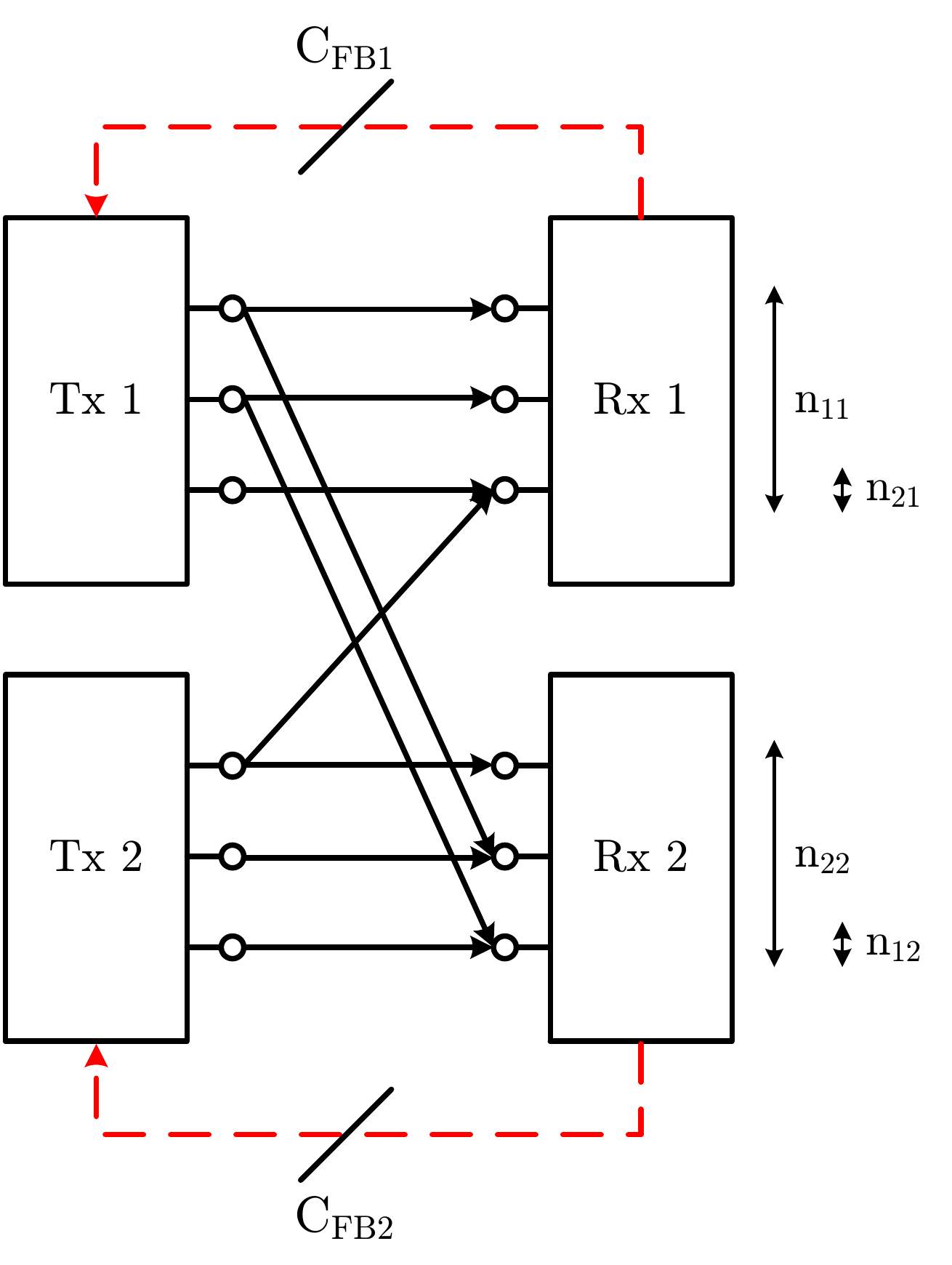}
\caption{An example of a linear deterministic IC with rate-limited feedback, where $n_{11} = n_{22} = 3$, $n_{12} = 2$, $n_{21} = 1$, and $q = 3$.} \label{fig:DIC}
\end{figure}


It is easy to see that this model also satisfies the conditions of (\ref{eq-ElGamalCostaCondition}), hence it is a special class of the El Gamal-Costa deterministic IC.

\noindent {\bf 3- Gaussian Interference Channel}:

In this model, there is a complex number $h_{kj}$ representing the channel from transmitter $k$ to receiver $j$, $k=1,2$, and $j=1,2$. The received signals are
\begin{equation}
\begin{split}
Y_{1,i} = h_{11} X_{1,i} + h_{21} X_{2,i} + Z_{1,i}, \\
Y_{2,i} = h_{12} X_{1,i} + h_{22} X_{2,i} + Z_{2,i},
\end{split}
\end{equation}
where $\{Z_{j,i}\}_{i=1}^N$ is the additive white complex Gaussian noise process with zero mean and unit variance at receiver $j$, $j=1,2$. Without loss of generality, we assume a power constraint of $1$ at all nodes, \emph{i.e.},
\begin{equation}
\frac{1}{N} \mathbb{E}(\sum_{i=1}^N{|X_{k,i}|^2}) \leq 1 \qquad k=1,2,
\end{equation}
where $N$ is the block length. We will use the following notations:
\begin{equation}
\begin{split}
{\sf SNR}_1 = |h_{11}|^2, &\hspace{4mm} {\sf SNR}_2 = |h_{22}|^2, \\
{\sf INR}_{12} = |h_{12}|^2, &\hspace{4mm} {\sf INR}_{21} = |h_{21}|^2.
\end{split}
\end{equation} 

\section{Motivating Example}
\label{motivation}
We start by analyzing a motivating example. Consider the linear deterministic IC with rate-limited feedback as depicted in Figure~\ref{fig:motivation}(a). As we will see in Section~\ref{linear}, the capacity region of this network is given by the region shown in Figure~\ref{fig:motivation}(b). Our goal in this section is to  demonstrate how feedback can help increase the capacity. In particular, we describe the achievability strategy for one of the corner points, \emph{i.e.}, $(R_1,R_2) = (4,1)$. From this example, we will make important observations that will later provide insights into the achievable scheme.

\begin{figure}
\centering
\subfigure[]{\includegraphics[height = 6cm]{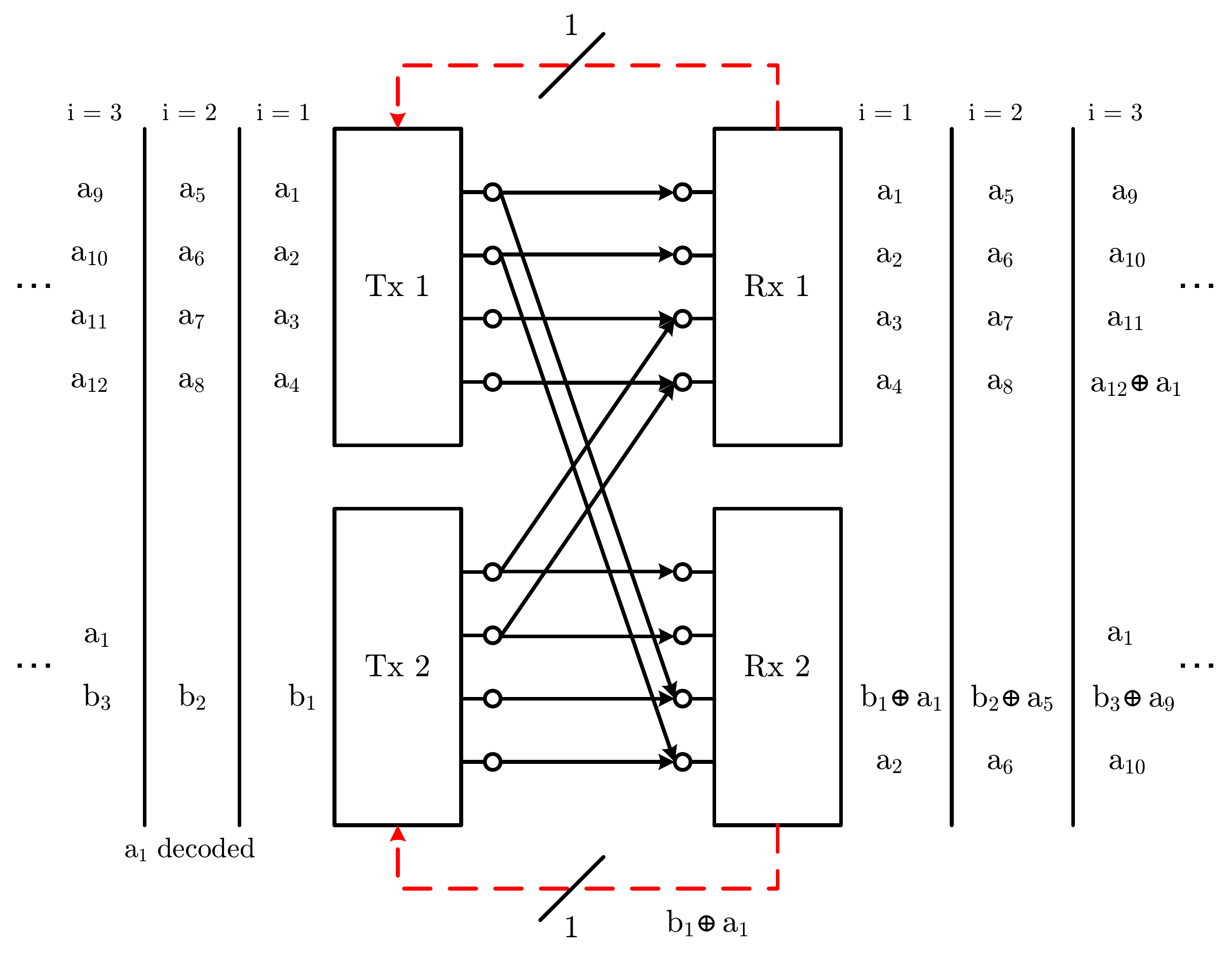}}
\subfigure[]{\includegraphics[height = 5cm]{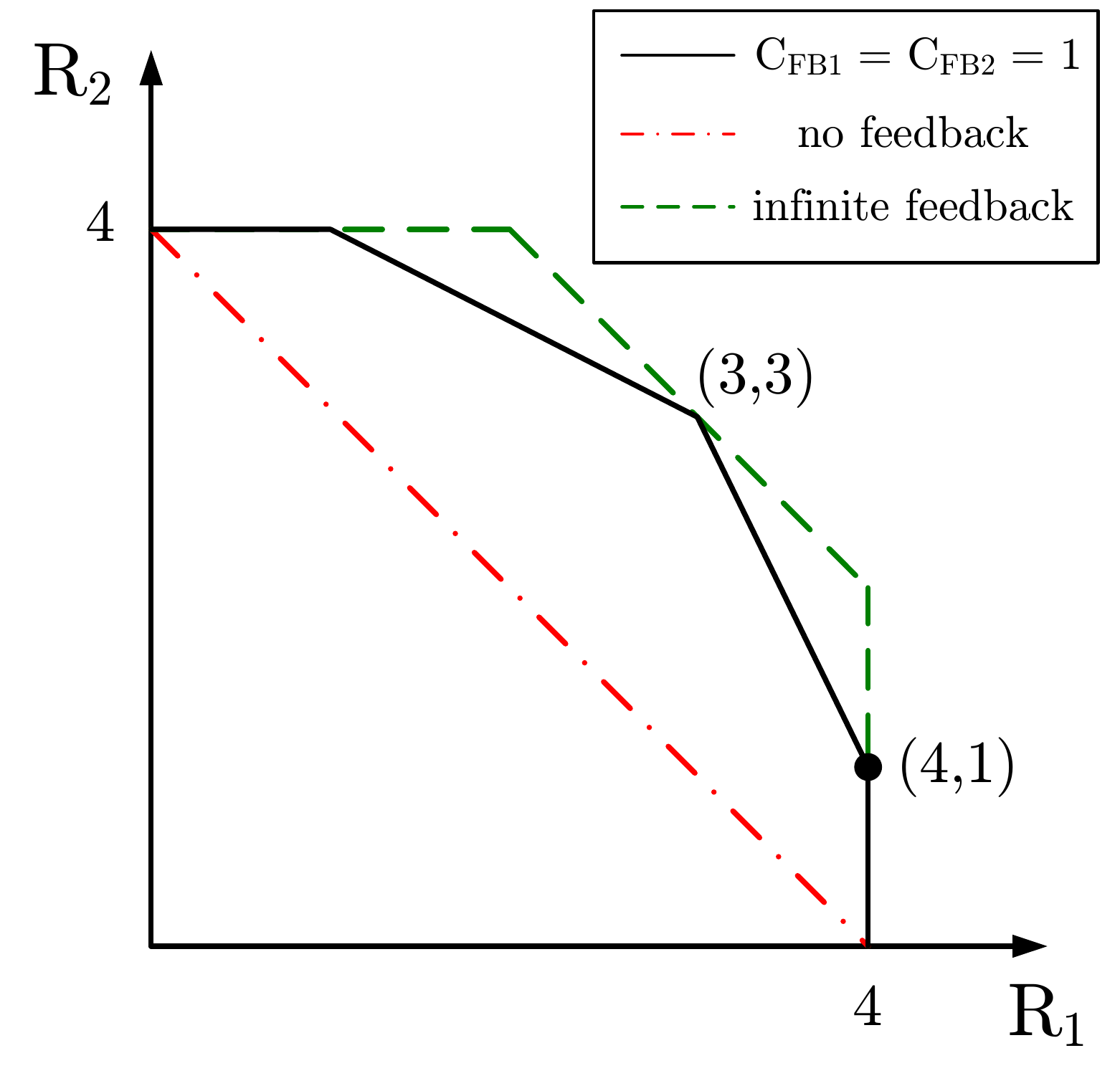}}
\caption{\it (a) A two-user linear deterministic IC with channel gains $n_{11} = n_{22} = 4$, $n_{12} = n_{21} = 2$ and feedback rates $C_{\sf FB1} = C_{\sf FB2} = 1$, and (b) its capacity region.\label{fig:motivation}}
\end{figure}

The achievability strategy works as follows. In the first time slot, transmitter $1$ sends four bits $a_1, \ldots, a_4$ and transmitter $2$ sends only one bit $b_1$ at the third level, see Figure~\ref{fig:motivation}(a). This way receiver $1$ can decode its intended four bits interference free, while receiver $2$ has access to only $a_1 \oplus b_1$ and $a_2$. In the second time slot, through the feedback link, receiver $2$ feeds $a_1 \oplus b_1$ back to transmitter $2$ who can remove $b_1$ from it to decode $a_1$. Also during the second time slot, transmitter $1$ sends four fresh bits $a_5, \ldots, a_8$, whereas transmitter $2$ sends one new bit $b_2$. In the third time slot, through the feedback link, receiver $2$ feeds $b_2 \oplus a_5$ back to transmitter $2$ who can remove $b_2$ from it to decode $a_5$. Moreover, during the third time slot, transmitter $1$ sends four new bits $a_9, \ldots, a_{12}$, while transmitter $2$ sends one new bit $b_3$ and at the level shown in Figure~\ref{fig:motivation}($a$), sends the other user's information bit $a_1$ decoded in the second time slot with the help of feedback. With this strategy receiver $2$ has now access to $a_1$ and can use it to decode $b_1$. Note that receiver $1$ already knows $a_1$ and hence can decode $a_{12}$. This procedure will be repeated over the next time slots. During the last two time slots, only transmitter $2$ sends the other user's information decoded before, while transmitter $1$ sends nothing. Therefore, after $B$ time slots, we achieve a rate of $(R_1,R_2) = \frac{B-2}{B}(4,1)$, which converges to $(4,1)$ as $B$ goes to infinity.

Based on this simple capacity-achieving strategy, we can now make several observations:

\begin{figure*}[h!tb]
\hrule
\begin{subequations}
\begin{eqnarray}
\label{lemmaeq:R1-1}
 R_1  &\leq & I(U,V_2, X_1;Y_1) \\
\label{lemmaeq:R1-2}
 R_1  &\leq &  I(X_1;Y_1|U, U_1, V_2) +  \min(I(U_1;Y_2|U, X_2), C_{\sf FB2}- \delta_2) \\
\label{lemmaeq:R2-1}
 R_2  &\leq &  I(U,V_1, X_2;Y_2) \\
\label{lemmaeq:R2-2}
 R_2 &\leq &  I(X_2;Y_2|U, U_2, V_1) + \min( I(U_2;Y_1|U, X_1), C_{\sf FB1}- \delta_1) \\
\label{lemmaeq:R1R2-1}
 R_1 + R_2 &\leq &   I(X_1;Y_1 |U, V_1, V_2) + I(U,V_1, X_2;Y_2) \\
\label{lemmaeq:R1R2-2}
 R_1 + R_2 &\leq & I(X_2;Y_2 |U, V_1, V_2) + I(U,V_2, X_1;Y_1) \\
\label{lemmaeq:R1R2-dummy1}
 R_1 + R_2  &\leq &   \min( I(U_2;Y_1|U, X_1), C_{\sf FB1}- \delta_1)  + \min(I(U_1;Y_2|U, X_2), C_{\sf FB2}- \delta_2) +I(X_1,V_2;Y_1|U, U_1, U_2) \nonumber \\  && + I(X_2;Y_2|U, V_1, V_2 ) \\
\label{lemmaeq:R1R2-dummy2}
  R_1 + R_2  &\leq &   \min( I(U_2;Y_1|U, X_1), C_{\sf FB1}- \delta_1)   + \min(I(U_1;Y_2|U, X_2), C_{\sf FB2}- \delta_2) + I(X_2,V_1;Y_2|U, U_1, U_2 ) \nonumber \\ && + I(X_1;Y_1 |U, V_1, V_2) \\
\label{lemmaeq:R1R2-3}
   R_1 + R_2  &\leq &   \min( I(U_2;Y_1|U, X_1), C_{\sf FB1}- \delta_1)   + \min(I(U_1;Y_2|U, X_2), C_{\sf FB2}- \delta_2) + I(X_1,V_2;Y_1|U, V_1, U_2 )  \nonumber \\&& + I(X_2, V_1; Y_2|U, V_2, U_1 ) \\
\label{lemmaeq:2R1R2}
 2R_1 + R_2  &\leq &  I(U,V_2, X_1;Y_1) + I(X_1;Y_1 |U, V_1, V_2) + I(X_2,V_1; Y_2|U, U_1, V_2 ) \\
 & + &  \min(I(U_1;Y_2|U, X_2), C_{\sf FB2}- \delta_2) \nonumber \\
\label{lemmaeq:2R1R2-dummy}
  2R_1 + R_2  &\leq &  2\min(I(U_1;Y_2|U, X_2), C_{\sf FB2}- \delta_2) + \min( I(U_2;Y_1|U, X_1), C_{\sf FB1}- \delta_1) + I(X_1,V_2; Y_1|U, U_1, U_2)  \nonumber \\
  && + I(X_1;Y_1 |U, V_1, V_2) + I(X_2,V_1; Y_2|U, U_1, V_2 ) \\
\label{lemmaeq:R1-2R2}
 R_1 + 2R_2  &\leq &  I(U,V_1, X_2;Y_2) + I(X_2;Y_2|U, V_2, V_1 ) + I(X_1,V_2; Y_1|U, U_2, V_1 ) \\
 & + &  \min( I(U_2;Y_1|U, X_1), C_{\sf FB1}- \delta_1)  \nonumber \\
\label{lemmaeq:R1-2R2-dummy}
 R_1 + 2R_2  &\leq &  2\min( I(U_2;Y_1|U, X_1), C_{\sf FB1}- \delta_1)  + \min(I(U_1;Y_2|U, X_2), C_{\sf FB2}- \delta_2)   + I(X_2,V_1; Y_2|U, U_1, U_2) \nonumber \\
 & & + I(X_2;Y_2|U, V_1, V_2 ) + I(X_1,V_2; Y_1|U, U_2, V_1)
\end{eqnarray}
\end{subequations}
over all joint distributions
\begin{align*}
p(u)p(u_1|u) p(u_2|u) p(v_1|u, u_1)p(v_2|u,u_2) p(x_1|u,u_1,v_1) p(x_2|u,u_2,v_2) p(\hat{y}_1|y_1) p(\hat{y}_2|y_2),
\end{align*}
where
\begin{align*}
\delta_1 &= I(\hat{Y}_1; Y_1|U, U_2, X_1), \\
\delta_2 &= I(\hat{Y}_2; Y_2|U, U_1, X_2).
\end{align*}
\hrule
\end{figure*}

$\bullet$ The messages coming from transmitter $1$ can be split into three parts: $(1)$ ``cooperative common'': this message is visible to both receivers, while  interfering with the other user's signals (e.g., $a_1$ at transmitter $1$ in the first time slot). This should be fed back to the transmitter so that it can be used later in refining the desired signals corrupted by the interfering signal; $(2)$ ``non-cooperative common'': this message is visible to both receivers, however it does not cause any interference (e.g., $a_2$ at transmitter $1$ in the first time slot); $(3)$ ``private'': this message is visible only to the intended receiver (e.g., $a_3$ and $a_4$ at transmitter $1$ in the first time sot). Denote these messages by  $w_{kcc}$, $w_{knc}$, and $w_{kp}$, respectively, where $k=1,2$ is the transmitter index.

$\bullet$ To refine the desired signal corrupted by the cooperative common signal of transmitter $1$ (\emph{i.e.}, $a_1$), receiver $2$ utilizes the feedback link to send the interfered signal (\emph{i.e.}, $a_1 \oplus b_1$) back to transmitter $2$. Transmitter $2$ then employs a partial decode-and-forward to help receiver $1$ decode its messages, \emph{i.e.}, the cooperative message of transmitter $1$ is decoded at transmitter $2$ and it will be forwarded to receiver $2$ during another time slot.

$\bullet$ As we can see in this example, encoding operations at each time slot depend on previous operations, thereby motivating us to employ block Markov encoding. As for the decoding, we implement backward decoding at receivers. Each receiver waits until the last time $B$ and we use the last received signal to decode the message received at time $B-2$. We then decode the message received at time $B-3$ and all the way down to the message received at time 1.

These observations will form the basis for our achievable schemes in the following sections.

\section{Deterministic Interference Channel}
\label{elgamal}
In this section,  we consider the El Gamal-Costa deterministic IC with rate-limited feedback, described in Section~\ref{problem}. The motivating example in the previous section leads us to develop a generic achievable scheme based on three ideas: (1) Han-Kobayashi message splitting~\cite{HanKoba:it81}; $(2)$ quantize-and-binning; and $(3)$ decode-and-forward~\cite{Cover:it79}. As mentioned earlier, we split the message into three parts: the cooperative common message; non-cooperative common message; and private message. We employ quantize-and-binning to feed back part of the interfered signals through the rate-limited feedback link. With feedback, each transmitter decodes part of the other user's common information (cooperative common) that interfered with its desired signals. We accomplish this by using the partial decode-and-forward scheme. We also derive a new outer bound based on the genie-aided argument~\cite{ElGamal:it82} and the dependence-balance-bound technique~\cite{Willems:it82, Willems:it89,Tandon:08}.

\subsection{Achievable Rate Region}

\begin{theorem}
\label{theorem:DICachievableregion}
The capacity region of the El Gamal-Costa deterministic IC with rate-limited feedback includes the set $\mathcal{R}$ of $(R_1, R_2)$ satisfying inequalities (\ref{lemmaeq:R1-1})--(\ref{lemmaeq:R1-2R2-dummy}).
\end{theorem}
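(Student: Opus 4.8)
The plan is to prove achievability via a block-Markov scheme over $B$ blocks, combining Han--Kobayashi splitting, quantize-and-binning at the receivers for the feedback, and partial decode-and-forward at the transmitters, and then bound the error probability to obtain the region $\mathcal R$.

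\textbf{Code construction and random coding.} Fix the joint distribution $p(u)p(u_1|u)p(u_2|u)p(v_1|u,u_1)p(v_2|u,u_2)p(x_1|u,u_1,v_1)p(x_2|u,u_2,v_2)p(\hat y_1|y_1)p(\hat y_2|y_2)$. Split $W_k$ into $(w_{kcc},w_{knc},w_{kp})$ with rates $R_{kcc},R_{knc},R_{kp}$, so $R_k=R_{kcc}+R_{knc}+R_{kp}$. In block $b$, the cloud center $U$ carries the pair of cooperative messages $(w_{1cc}^{(b-1)},w_{2cc}^{(b-1)})$ from the previous block (this is the decode-and-forward part: each transmitter learns the other user's cooperative message through feedback and forwards it jointly); superimposed on $U$, codeword $U_k$ carries the current $w_{kcc}^{(b)}$, $V_k$ carries $w_{knc}^{(b)}$ on top of $(U,U_k)$, and $X_k$ carries $w_{kp}^{(b)}$ on top of $(U,U_k,V_k)$. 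Independently, generate quantization codebooks $\hat Y_k$ of rate slightly above $I(\hat Y_k;Y_k)$ and partition them into $2^{\lfloor C_{\sf FBk}\rfloor}$ bins; receiver $k$ quantizes $Y_k^{(b)}$ to some $\hat y_k$ and feeds back its bin index, of rate $C_{\sf FBk}$, on the noiseless feedback link.

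\textbf{Encoding, feedback decoding, and backward decoding.} After block $b$, transmitter $k$ receives the bin index from receiver $k$ and, using its own codewords $(U,U_k,V_k,X_k)$ of block $b$ as side information, finds the unique quantization codeword in that bin jointly typical with its side information; from this it extracts the other user's cooperative common message $w_{jcc}^{(b)}$ provided the covering and binning rate conditions hold --- this is where $\delta_k=I(\hat Y_k;Y_k|U,U_j,X_k)$ and the terms $\min(I(U_j;Y_k|U,X_k),C_{\sf FBk}-\delta_k)$ enter: the effective rate conveyed through feedback is the minimum of what the channel $U_j\to Y_k$ supports (conditioned on the side information) and the net feedback rate after paying $\delta_k$ for quantization. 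Decoding at the receivers is backward: starting from block $B$, receiver $1$ jointly decodes $(w_{1cc}^{(b-1)},w_{2cc}^{(b-1)})$ via $U$, then $w_{1cc}^{(b)},w_{1nc}^{(b)},w_{1p}^{(b)}$ and the interference $V_2$ (which is a deterministic function of $X_2$ and is common by the El Gamal--Costa condition $H(V_2|Y_1,X_1)=0$), using knowledge of $w_{2cc}^{(b-1)}$ already obtained from block $b$; symmetrically for receiver $2$.

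\textbf{Error analysis and Fourier--Motzkin elimination.} Write the standard joint-typicality error events at both the feedback decoders and the backward decoders, and collect the rate constraints: covering constraints for the quantizers, binning/packing constraints for the feedback decoders (yielding the $\min(\cdot,C_{\sf FBk}-\delta_k)$ terms), and packing constraints at receiver $k$ for each subset of $\{U,U_k,U_j,V_k,V_j,X_k\}$ being decoded. Substituting the feedback-decoding bound into the receiver bounds and performing Fourier--Motzkin elimination on $R_{kcc},R_{knc},R_{kp}$ (and the auxiliary feedback rate) should produce exactly the twelve inequalities \eqref{lemmaeq:R1-1}--\eqref{lemmaeq:R1-2R2-dummy}. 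Finally let $B\to\infty$ so the rate loss factor $(B-1)/B$ or similar vanishes. The main obstacle I expect is the correct bookkeeping in the feedback step: showing that the single bin index of rate $C_{\sf FBk}$ simultaneously suffices for the transmitter to recover the quantization codeword (using side information) \emph{and} carries enough residual information to decode $w_{jcc}$, which forces the precise conditioning in $\delta_k$ and in the $I(U_j;Y_k|U,X_k)$ terms; getting these conditionings exactly right is what makes the Fourier--Motzkin output match the stated region rather than a looser one.
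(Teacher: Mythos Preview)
Your proposal is essentially the same scheme as the paper's: block-Markov encoding with three-way splitting $(w_{kcc},w_{knc},w_{kp})$ carried by the superposition chain $U\to U_k\to V_k\to X_k$, quantize-and-binning for the feedback, decode-and-forward of the other user's cooperative common message at each transmitter, backward decoding at the receivers, and Fourier--Motzkin to obtain \eqref{lemmaeq:R1-1}--\eqref{lemmaeq:R1-2R2-dummy}. One small discrepancy worth fixing: the paper uses a \emph{two}-block delay (the bin index for block $b$ is transmitted over the rate-$C_{\sf FBk}$ link during block $b+1$, so the transmitter can only use it when encoding block $b+2$, and thus $U$ in block $b$ carries $(w_{1cc}^{(b-2)},w_{2cc}^{(b-2)})$), whereas you wrote a one-block delay; with a rate-limited feedback link you cannot deliver the entire bin index instantaneously at the block boundary, so the two-block version is the correct bookkeeping. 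This also means that in backward decoding of block $b$ the pair $(w_{1cc}^{(b)},w_{2cc}^{(b)})$ is already known from future blocks and you decode $(w_{1cc}^{(b-2)},w_{2cc}^{(b-2)},w_{1nc}^{(b)},w_{1p}^{(b)})$, not $w_{1cc}^{(b)}$ again. None of this changes the region after letting $B\to\infty$, but it is needed for the scheme to respect the feedback model.
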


\begin{proof}
We first provide an outline of our achievable scheme. We employ block Markov encoding with a total size $B$ of blocks. In block $1$, transmitter 1 splits its own message into cooperative common, non-cooperative common and private parts and then sends a codeword superimposing all of these messages. The cooperative common message is sent via the codeword $u_1^{N,(1)}$. The non-cooperative common message is added to this, being sent via $v_1^{N,(1)}$. The private message is then superimposed on top of the previous messages, being sent via $x_1^{N,(1)}$. Similarly, transmitter 2 sends $x_2^{N,(1)}$. In block 2, receiver 1 quantizes its received signal $y_1^{N,(1)}$ into $\hat{y}_1^{N,(1)}$ with the rate of $\hat{R}_1$. Next it generates a bin index by considering the capacity of its feedback link and then feeds the bin index back to its corresponding transmitter. Similarly, receiver 2 feeds back the corresponding bin index. In block 3, with feedback, each transmitter decodes the other user's cooperative common message (sent in block 1) that interfered with its desired signals. The following messages are then available at the transmitter: (1) its own message; and (2) the other user's cooperative common message decoded with the help of feedback.
Using its own cooperative common message as well as the other user's counterpart, each transmitter generates the codeword $u^{N,(3)}$. This captures the correlation between the two transmitters that might induce the cooperative gain. Conditioned on these previous cooperative common messages, each transmitter generates new cooperative common, non-cooperative common, and private messages. It then sends the corresponding codeword. This procedure is repeated until block $B-2$. In the last two blocks $B-1$ and $B$, to facilitate backward decoding, each transmitter sends the predetermined common messages and a new private message. Each receiver waits until total $B$ blocks have been received and then performs backward decoding.


\noindent {\bf Codebook Generation}: Fix a joint distribution $p(u)p(u_1|u) p(u_2|u) p(x_1|u_1, u)p(x_2|u_2,u) p(\hat{y}_1|y_1) p(\hat{y}_2|y_2)$. We will first show that $p(x_1|u, u_1, v_1)$ and $p(x_2|u, u_2, v_2)$ are functions of the above distributions. To see this, let us write a joint distribution $p(u,u_1,u_2, v_1, v_2, x_1,x_2)$ in two different ways:
\begin{align}
\begin{split}
&p(u,u_1, u_2, v_1,v_2,x_1,x_2) \\
&= p(u)p(u_1|u)p(u_2|u) p(x_1|u, u_1)p(x_2|u, u_2) \delta (v_1 - g_1(x_1))  \\ & \quad \times \delta (v_2-g_2(x_2)) \\
&= p(u)p(u_1|u)p(u_2|u) p(v_1|u, u_1)p(v_2|u,u_2) p(x_1|u,u_1,v_1) \\ & \quad \times p(x_2|u, u_2, v_2),
\end{split}
\end{align}
where $\delta(\cdot)$ indicates the Kronecker delta function. Notice that by the El Gamal-Costa model assumption (\ref{eq-ElGamalCostaCondition0}), $p(v_1|u, u_1, x_1) = \delta (v_1 - g_1 (x_1))$ and $p(v_2|u, u_2, x_2) = \delta (v_2 - g_2 (x_2))$. From this, we can easily see that
\begin{align*}
p(x_1|u, u_1, v_1) = \frac{p(x_1|u, u_1) \delta(v_1 - g_1(x_1))}{p(v_1|u, u_1)}, 
\end{align*}
\begin{align}
p(x_2|u, u_2, v_2) = \frac{p(x_2|u, u_2) \delta(v_2 - g_2(x_1))}{p(v_2|u, u_2)}.
\end{align}

We now generate codewords as follows. Transmitter $1$ generates $2^{N(R_{1cc} +R_{2cc}) }$ independent codewords $u^N(i,j)$ according to $\prod_{i=1}^{N} p(u_i)$\footnote{With a slight abuse of notation, we use the same index $i$ to represent time.}, where $i \in \{1, \cdots, 2^{NR_{1cc}} \}$ and $j\in \{1, \cdots, 2^{NR_{2cc}} \}$. For each codeword $u^N(i,j)$, it generates $2^{NR_{1cc}}$ independent codewords $u_1^{N}((i,j),k)$ according to $\prod_{i=1}^{N} p(u_{1i}|u_i)$ where $k \in \{1, \cdots, 2^{NR_{1cc}} \}$. Subsequently, for each pair of codewords $\left( u^N(i,j),u_1^N((i,j),k) \right) $, generate $2^{NR_{1nc}}$ independent codewords $v_1^N((i,j),k,l)$ according to $\prod_{i=1}^{N} p(v_{1i}|u_i, u_{1i})$ where $l \in \{1, \cdots, 2^{NR_{1nc}} \}$. Lastly, for each triple of codewords  $\left( u^N(i,j),u_1^N((i,j),k),v_1^N((i,j),k,l) \right) $, generate $2^{NR_{1p}}$ independent codewords $x_1^N((i,j),k,l,m)$ according to $\prod_{i=1}^{N} p(x_{1i}|u_i, u_{1i}, v_{1i})$ where $m \in \{1, \cdots, 2^{NR_{1p}} \}$. On the other hand, receiver 1 generates $2^{N \hat{R}_1}$ sequences $\hat{y}_1^N (q)$ according to $\prod_{i=1}^{N} p(\hat{y}_{1i})$ where $q \in \{1, \cdots, 2^{N\hat{R}_{1}} \}$. In the feedback strategy (to be described shortly), we will see how this codebook generation leads to the joint distribution $p(\hat{y}_1|y_1)$. Similarly, receiver 2 generates $\hat{y}_2^N$.

As it will be clarified later, for a given block $b$, indices $i$ and $j$ in $u^N(i,j)$ correspond to the cooperative common message of transmitter $1$ and transmitter $2$ sent during block $(b-2)$ respectively. Then $\prod_{i=1}^{N} p(u_{1i}|u_i)$ is used to create $2^{NR_{1cc}}$ independent codewords corresponding to the cooperative common message of transmitter $1$ in $\{1, \cdots, 2^{NR_{1cc}} \}$. Similarly, $2^{NR_{1nc}}$ independent codewords are created according to $\prod_{i=1}^{N} p(v_{1i}|u_i, u_{1i})$, corresponding to the non-cooperative common message of transmitter $1$ in $\{1, \cdots, 2^{NR_{1nc}} \}$. Finally, $\prod_{i=1}^{N} p(x_{1i}|u_i, u_{1i}, v_{1i})$ is used to create $2^{NR_{1p}}$ independent codewords corresponding to the private message of transmitter $1$ in $\{1, \cdots, 2^{NR_{1p}} \}$.


\textit{Notation:} Notations are independently used only for this section. The index $k$ indicates the cooperative common message of user 1 instead of user index. The index $i$ is used for both purposes: (1) indicating the previous cooperative common message of user 1; (2) indicating time index. It could be easily differentiated from contexts.

\noindent {\bf Feedback Strategy (Quantize-and-Binning)}:
Focus on the $b$-th transmission block. First receiver 1 quantizes its received signal $y_1^{N,(b)}$ into $\hat{y}_1^{N,(b+1)}$ with the rate of $\hat{R}$. Next it finds an index $q$ such that $\left( \hat{y}_1^{N,(b+1)}(q), y_{1}^{N,(b)} \right) \in \mathcal{T}_{\epsilon}^{(N)}$, where $q \in [1:2^{N \hat{R}_1}]$ and $\mathcal{T}_{\epsilon}^{(N)}$ indicates a jointly typical set.
The quantization rate $\hat{R}_1$ is chosen so as to ensure the existence of such an index with probability 1. The covering lemma in~\cite{ElGamalKim:NIT} guides us to choose $\hat{R}_1$ such that
\begin{align}
\label{eq:tildeR1}
\hat{R}_1  \geq I(\hat{Y}_1; Y_1),
\end{align}
since under the above constraint, the probability that there is no such an index becomes arbitrarily small as $N$ goes to infinity. Notice that with this choice of $\hat{R}_1$, the codebook $\hat{y}_1^N (q)$ according to $\prod_{i=1}^N p(\hat{y}_{1i})$ would match the codeword according to $\prod_{i=1}^N p(\hat{y}_{1i}|y_{1i})$.

We then partition the set of indices $q\in [1:2^{N \hat{R}_1}]$ into the number $2^{N C_{\sf FB1}}$ of equal-size subsets (that we call {\em bins}):

\begin{align*}
\begin{split}
 \mathcal{B}(r)=& \left[ (r-1) 2^{N(\hat{R}_{1}- C_{\sf FB1}) }+1: r 2^{N(\hat{R}_{1}- C_{\sf FB1}) } \right], \\
 & \; r \in [1:2^{N C_{\sf FB1}}].
\end{split}
\end{align*}

Now the idea is to feed back the bin index $r$ such that $q \in \mathcal{B}(r)$. This helps transmitter 1 to decode the quantized signal $\hat{y}_1^{N,(b)}$. Specifically, using the bin index $r$, transmitter 1 finds a unique index $q \in \mathcal{B}(r)$ such that
$\left( \hat{y}_1^{N,(b+1)}(q), x_{1}^{N,(b)}, u^{N, (b)} \right) \in \mathcal{T}_{\epsilon}^{(N)}$.
Notice that by the packing lemma in~\cite{ElGamalKim:NIT}, the decoding error probability goes to zero if
\begin{align}
\label{eq:constraintY1tilde}
\hat{R}_1 - C_{\sf FB1} \leq I(\hat{Y}_1; X_1, U).
\end{align}
Using~(\ref{eq:tildeR1}) and~(\ref{eq:constraintY1tilde}), transmitter 1 can now decode the quantized signal as long as
\begin{align}
\label{eq:constraint2Y1tilde}
I(\hat{Y}_1; Y_1 |X_1, U ) \leq C_{\sf FB1},
\end{align}
Similarly, transmitter 2 can decode $\hat{y}_2^{N,(b+1)}(q)$ if
\begin{align}
\label{eq:constraint2Y2tilde}
I(\hat{Y}_2; Y_2 |X_2, U ) \leq C_{\sf FB2}.
\end{align}

\noindent {\bf Encoding}: Given $\hat{y}_1^{N,(b-1)}$ (decoded with the help of feedback), transmitter 1 finds a unique index $\hat{w}_{2cc}^{(b-2)} = \hat{k}$ (sent from transmitter 2 in the $(b-2)$-th block) such that
\begin{align*}
\begin{split}
&\left( u^N \left( \cdot \right), u_1^N \left( \cdot \right), v_1^N (\cdot), x_1^N (\cdot), u_2^N ( \cdot, \hat{k} ) , \hat{y}_1^{N,(b-1)}  \right) \in \mathcal{T}_{\epsilon}^{(N)},
\end{split}
\end{align*}
where $(\cdot)$ indicates the known messages $(w_{1cc}^{(b-4)}, \hat{w}_{2cc}^{(b-4)}, w_{1nc}^{(b-2)}, w_{1p}^{(b-2)})$. Notice that due to the feedback delay, the fed back signal contains information of the $(b-2)$-th block. We assume that $\hat{w}_{2cc}^{(b-2)}$ is correctly decoded from the previous block. By the packing lemma~\cite{ElGamalKim:NIT}, the decoding error probability becomes arbitrarily small (as $N$ goes to infinity) if
\begin{align}
\nonumber R_{2cc} &\leq I(U_2;\hat{Y}_1|X_1,U) \\ \nonumber &= I(\hat{Y}_1; Y_1|U, X_1) - I(\hat{Y}_1;Y_1|U, U_2, X_1)\\
\label{eq-R2c-constraint} &\leq  \min ( C_{\sf FB1} - \delta_1, I(U_2;Y_1|X_1,U)  ),
\end{align}
where the last inequality follows from~(\ref{eq:constraint2Y1tilde}), $\delta_1:= I(\hat{Y}_1; Y_1| U, U_2, X_1)$ and $I(U_2;\hat{Y}_1|X_1,U) \leq I(U_2;Y_1|X_1,U)$.

Based on $(w_{1cc}^{(b-2)},\hat{w}_{2cc}^{(b-2)})$, transmitter 1 generates a new cooperative-common message $w_{1cc}^{(b)}$, a non-cooperative-common message $w_{1nc}^{(b)}$ and a private message $w_{1p}^{(b)}$. It then sends $x_1^N$. Similarly transmitter 2 decodes $\hat{w}_{1cc}^{(b-2)}$, generates $(w_{2cc}^{(b)},w_{2nc}^{(b)}, w_{2p}^{(b)})$ and then sends $x_2^N$.

\noindent {\bf Decoding}: Each receiver waits until total $B$ blocks have been received and then does backward decoding.
Notice that a block index $b$ starts from the last $B$ and ends to $1$. For block $b$, receiver 1 finds the unique indices $(\hat{i},\hat{j},\hat{k}, \hat{l})$ such that for some $m \in [1:2^{N R_{2nc}}]$
\begin{align*}
\begin{split}
& \left( u^N ( \hat{i}, \hat{j} ), u_1^N ( ( \hat{i}, \hat{j}) , \hat{w}_{1cc}^{(b)} ), v_1^N ( ( \hat{i}, \hat{j}) , \hat{w}_{1cc}^{(b)} , \hat{k} ),  x_1^N ( (\hat{i}, \hat{j}), \hat{w}_{1cc}^{(b)} \right. \\ & \left. ,  \hat{k}, \hat{l}),  u_2^N ( (\hat{i}, \hat{j}), \hat{w}_{2cc}^{(b)} ), v_2^N ( (\hat{i}, \hat{j}), \hat{w}_{2cc}^{(b)}, m ) , y_1^{N,(b)}  \right) \in \mathcal{T}_{\epsilon}^{(N)},
\end{split}
\end{align*}
where we assumed that a pair of messages $(\hat{w}_{1cc}^{(b)}, \hat{w}_{2cc}^{(b)})$ was successively decoded from the future blocks. Similarly receiver 2 decodes $(\hat{w}_{1cc}^{(b-2)}, \hat{w}_{2cc}^{(b-2)}, \hat{w}_{2nc}^{(b)}, \hat{w}_{2p}^{(b)})$.

\noindent {\bf Analysis of Probability of Error}: By symmetry, we consider the probability of error only for block $b$ and for a pair of transmitter 1 and receiver 1. We assume that $(w_{1cc}^{(b-2)},w_{2cc}^{(b-2)}, w_{1nc}^{(b)}, w_{1p}^{(b)}) = (1,1,1,1)$ was sent through the blocks; and there was no backward decoding error from the future blocks, \emph{i.e.}, $(\hat{w}_{1cc}^{(b)}, \hat{w}_{2cc}^{(b)})$ are successfully decoded.

Define an event:

\begin{align*}
E_{ijklm} = \left\{ \left( u^N ( i, j ), u_1^N ( ( i, j) , \hat{w}_{1cc}^{(b)} ), v_1^N ( ( i, j) , \hat{w}_{1cc}^{(b)}, k )  \right. \right. \\
 \left. \left. ,  x_1^N ( (i, j), \hat{w}_{1cc}^{(b)},  k, l),  u_2^N ( (i, j), \hat{w}_{2cc}^{(b)} ) , v_2^N ( (i, j), \hat{w}_{2cc}^{(b)}, m ) \right. \right. \\
 \left. \left., y_1^{N,(b)}  \right) \in \mathcal{T}_{\epsilon}^{(N)} \right\}.
\end{align*}
Let $E_{1111m}^c$ be the complement of the set $E_{1111m}$. Then, by AEP, $\textrm{Pr} (E_{1111m}^c) \rightarrow 0$ as $N$ goes to infinity. Hence, we focus only on the following error event.
\begin{align}
\begin{split}
\label{eq-errorprobability}
& \textrm{Pr} \left( \bigcup_{(i,j,k,l)  \neq (1,1,1,1), m } E_{ijklm}   \right) \\
& \quad \leq \underbrace{ \sum_{(i,j) \neq (1,1)} \textrm{Pr} \left( \bigcup_{k,l,m} E_{ijklm} \right)}_{\triangleq \textrm{Pr}(E_1)}  \\& \quad +
\underbrace{ \sum_{k \neq 1} \textrm{Pr} \left( \bigcup_{l} E_{11kl1} \right)}_{\triangleq \textrm{Pr}(E_2)}
+ \underbrace{ \sum_{k \neq 1, m \neq 1} \textrm{Pr} \left( \bigcup_{l} E_{11klm} \right)}_{\triangleq \textrm{Pr}(E_3)} \\ & \quad + \underbrace{ \sum_{l \neq 1} \textrm{Pr} \left( E_{111l1} \right)}_{\triangleq \textrm{Pr}(E_4)}
+ \underbrace{ \sum_{l \neq 1, m \neq 1 } \textrm{Pr} \left( E_{111lm} \right)}_{\triangleq \textrm{Pr}(E_5)}. \end{split} \end{align}
Here we have:
\begin{align*}
& \nonumber \textrm{Pr}(E_1) \leq 24 \\
& \quad \quad \quad \times 2^{N(R_{1cc}+ R_{2cc}+ R_{1nc} + R_{2nc} + R_{1p} - I(U,X_1,V_2;Y_1)+ 5 \epsilon)}\\
& \nonumber \textrm{Pr}(E_2) \leq 2 \times 2^{N(R_{1nc}+R_{1p} - I(X_1;Y_1|U,U_1,V_2)+  2 \epsilon)} \\
& \nonumber \textrm{Pr}(E_3) \leq 2\times 2^{N(R_{1nc}+ R_{2nc} + R_{1p} - I(X_1, V_2 ;Y_1|U,U_1, U_2)+ 3 \epsilon)}\\
& \nonumber \textrm{Pr}(E_4) \leq 2^{N(R_{1p} - I(X_1;Y_1|U,V_1,V_2)+  \epsilon)} \\
& \nonumber \textrm{Pr}(E_5) \leq 2^{N(R_{2nc} + R_{1p} - I(X_1,V_2;Y_1|U,V_1, U_2)+ 2 \epsilon)}.
\end{align*}
Notice in $\textrm{Pr}(E_1)$ that as long as $(i,j) \neq (1,1)$, all of the cases (decided depending on whether or not $k \neq 1$, $l \neq 1$ and $m \neq 1$) are dominated by the worst case bound that occurs when $k \neq 1$, $l \neq 1$ and $m \neq 1$. Since $(i,j) \neq (1,1)$ covers three different cases and we have eight different cases depending on the values of $(k,l,m)$, we have 24 cases in total. This number reflects the constant $24$ in the above first inequality. Similarly, we get the other four inequalities as above.

Hence, the probability of error can be made arbitrarily small if
\begin{align}
&\left\{
  \begin{array}{l}
    R_{2cc} \leq \min(I(U_2;Y_1|U,X_1),C_{\sf FB1} - \delta_1 ) \\
    R_{1p} \leq I (X_1; Y_1| U, V_1, V_2) \\
    R_{1p} + R_{2nc} \leq I(X_1,V_2;Y_1|U,V_1, U_2)  \\
    R_{1p} + R_{1nc} \leq I(X_1;Y_1|U,U_1, V_2)  \\
   R_{1p} + R_{1nc} + R_{2nc} \leq I(X_1,V_2;Y_1|U,U_1, U_2) \\
   R_{1p} + R_{1cc} + R_{2cc} +  R_{1nc} + R_{2nc} \leq I(U, V_2, X_1;Y_1), \\
  \end{array}
\right. \\
&\left\{
   \begin{array}{l}
    R_{1cc} \leq \min ( I(U_1;Y_2|U,X_2), C_{\sf FB2}- \delta_2 ) \\
    R_{2p} \leq I (X_2; Y_2| U, V_1, V_2) \\
    R_{2p} + R_{1nc} \leq I(X_2,V_1;Y_2|U,V_2, U_1) \\
    R_{2p} + R_{2nc} \leq I(X_2;Y_2|U,U_2, V_1) \\
   R_{2p} + R_{2nc} + R_{1nc} \leq I(X_2,V_1;Y_2|U,U_1, U_2) \\
   R_{2p} + R_{2cc} + R_{1cc} +  R_{1nc} + R_{2nc} \leq I(U, V_1, X_2;Y_2). \\
  \end{array}
\right.
\end{align}
Employing Fourier-Motzkin-Elimination, we finally get the bounds of (\ref{lemmaeq:R1-1})--(\ref{lemmaeq:R1-2R2-dummy}).
\end{proof}

\begin{remark}[Connection to Related Work~\cite{Tuninetti:isit07,Vinod:arix09}]
The three-fold message splitting in our achievable scheme is a special case of the more-than-three-fold message splitting introduced in~\cite{Tuninetti:isit07,Vinod:arix09}. Also our scheme has similarity to the schemes in~\cite{Tuninetti:isit07,Vinod:arix09} in a sense that the three techniques (message-splitting, block Markov encoding and backward decoding) are jointly employed. However, due to a fundamental difference between our rate-limited feedback problem and the conferencing encoder problem in~\cite{Tuninetti:isit07,Vinod:arix09}, a new scheme is required for feedback strategy and this is reflected as the quantize-and-binning scheme in the El Gamal-Costa deterministic model. It turns out this distinction leads to a new lattice-code-based scheme in the Gaussian case, as will be explained in Section VI.
\end{remark}

\subsection{Outer-bound}

\begin{theorem}
\label{theorem:outerboundregion}
The capacity region of the two-user El Gamal-Costa deterministic IC with rate limited feedback (as described in Section~\ref{problem}) is included by the set $\mathcal{\bar{C}}$ of $(R_1, R_2)$ such that
\begin{subequations}
\begin{eqnarray}
\label{eq:outerR1_1}
  R_1 & \leq & \min \{ H(Y_1) , H(Y_1|V_1, V_2, U_1) \\
  & + &  H(Y_2|X_2, U_1) \}  \nonumber \\
 \label{eq:outerR1_2}
  R_1 & \leq & H(Y_1|X_2, U_1) +  C_{\sf FB2} \\
  \label{eq:outerR2_1}
    R_2 & \leq & \min \{ H(Y_2) , H(Y_2|V_1, V_2, U_2) \\
    & + &  H(Y_1|X_1, U_2) \} \nonumber \\
  \label{eq:outerR2_2}
  R_2 & \leq & H(Y_2|X_1, U_2) + C_{\sf FB1} \\
\label{eq:outerR12_2}
  R_1 + R_2 & \leq & H(Y_1 |V_1, V_2, U_1) + H(Y_2 ) \\
  \label{eq:outerR12_3}
  R_1 + R_2 & \leq & H(Y_2 |V_1, V_2, U_2) + H(Y_1 ) \\
  \label{eq:outerR12_1}
R_1 + R_2 & \leq & H(Y_1|V_1 ) + H(Y_2|V_2 ) \\
& + &  C_{\sf FB1} + C_{\sf FB2} \nonumber \\
  \label{eq:outer2R1R2_1}
2R_1 + R_2 & \leq & H(Y_1) + H(Y_1|V_1, V_2, U_1 ) \\
& + & H(Y_2|V_2 ) +  C_{\sf FB2} \nonumber \\
\label{eq:outer2R1R2_2}
R_1 + 2R_2  & \leq & H(Y_2) + H(Y_2|V_1, V_2, U_2 ) \\
& + & H(Y_1|V_1 ) +  C_{\sf FB1}, \nonumber
\end{eqnarray}
\end{subequations}
for joint distributions $p(u_1, u_2) p(x_1|u_1, u_2) p(x_2|u_1,u_2)$. As depicted in Figure~\ref{fig_ElGamalCosta}, $C_{\sf FB1}$ and $C_{\sf FB2}$ indicate the capacity of each feedback link.
\end{theorem}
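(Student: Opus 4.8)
The plan is to establish each of the bounds in Theorem~\ref{theorem:outerboundregion} via a Fano-type argument combined with a carefully chosen genie and the dependence-balance technique. The starting point is standard: for any reliable code, $NR_k \le I(W_k; Y_k^N) + N\epsilon_N$, and I would then bound $I(W_k; Y_k^N)$ by chain rule over time, peeling off one channel use at a time. The auxiliary random variables $U_{1,i}$ and $U_{2,i}$ in the outer-bound region should be identified with the past feedback information fed to the transmitters (for instance $U_{1,i} = \tilde{Y}_1^{(i-1)}$, or equivalently a function thereof together with a time-sharing variable), which is exactly the side information each transmitter has when forming $X_{k,i}$; since $\tilde{Y}_k^{(i-1)}$ is a function of $Y_k^{(i-1)}$, the Markov structure $U_{1,i} \to X_{1,i} \to V_{1,i}$ and the El Gamal--Costa condition~(\ref{eq-ElGamalCostaCondition}) can be invoked to collapse conditional entropies. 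The single-letterization at the end uses a standard time-sharing variable absorbed into $U_1, U_2$, giving the claimed region over $p(u_1,u_2)p(x_1|u_1,u_2)p(x_2|u_1,u_2)$.

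\medskip

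For the ``easy'' bounds this is routine. The cut-set-type bounds $R_1 \le H(Y_1)$ and $R_2 \le H(Y_2)$ are immediate. For $R_1 \le H(Y_1|X_2,U_1) + C_{\sf FB2}$ I would give the genie $X_2^N$ (and the relevant $U_1$) to receiver~1, write $NR_1 \le I(W_1;Y_1^N|X_2^N) + N\epsilon_N$, and then account for the fact that without feedback this would simply be $H(Y_1^N|X_2^N)$, while the correlation that feedback induces between $X_1$ and $X_2$ can contribute at most the feedback rate --- here one bounds the ``extra'' mutual information term by $I(\tilde{Y}_2^N; \cdot) \le N C_{\sf FB2}$ using that the feedback signal $\tilde{Y}_2$ has entropy at most $N C_{\sf FB2}$ per block. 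The combined bounds like (\ref{eq:outerR1_1}), i.e. $R_1 \le H(Y_1|V_1,V_2,U_1) + H(Y_2|X_2,U_1)$, follow the genie-aided recipe of~\cite{ElGamal:it82}: give receiver~1 the side information $(V_1^N, V_2^N)$ (and $U_1$), so that $NR_1 \le I(W_1; Y_1^N, V_1^N, V_2^N | U_1) + N\epsilon_N = H(V_1^N,V_2^N|U_1) + H(Y_1^N|V_1^N,V_2^N,U_1) - H(\cdot|W_1,U_1)$, and then identify $H(V_1^N, V_2^N|U_1)$ (minus the residual terms) with an $H(Y_2^N|X_2^N,U_1)$ term by noting $V_1$ is decodable at receiver~2 together with its own decoded message; the feedback delay of one (or two) blocks is where one must be careful that $U_{1,i}$ depends only on the strict past.

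\medskip

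The sum-rate bounds are where the real work is, and I expect (\ref{eq:outerR12_1}), $R_1+R_2 \le H(Y_1|V_1) + H(Y_2|V_2) + C_{\sf FB1} + C_{\sf FB2}$, to be the main obstacle. This is the analogue of the Suh--Tse sum-rate outer bound for infinite feedback, but now the feedback is rate-limited, so one cannot simply hand each transmitter its receiver's entire past output. My plan is to start from $N(R_1+R_2) \le I(W_1;Y_1^N) + I(W_2;Y_2^N) + N\epsilon_N$, give receiver~1 the genie $V_1^N$ and receiver~2 the genie $V_2^N$, and then --- following the dependence-balance-bound approach of~\cite{Willems:it82,Tandon:08} --- add and subtract entropy terms involving the feedback signals $\tilde{Y}_1^N, \tilde{Y}_2^N$ so that the ``cooperation'' contributions telescope into $H(\tilde{Y}_1^N) + H(\tilde{Y}_2^N) \le N(C_{\sf FB1}+C_{\sf FB2})$, while the remaining terms reduce to $H(Y_1^N|V_1^N) + H(Y_2^N|V_2^N)$. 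The dependence-balance inequality is needed precisely to control the cross terms $I(W_1; Y_2^N | \cdots)$ that arise because both transmitters see (rate-limited) functions of both receivers' histories only through their own feedback --- showing this is a valid outer bound requires verifying a dependence-balance constraint on the joint distribution at each time step. The $2R_1+R_2$ and $R_1+2R_2$ bounds (\ref{eq:outer2R1R2_1})--(\ref{eq:outer2R1R2_2}) are obtained by combining a genie-aided bound on one user's rate with this feedback-rate-limited sum-rate argument, so they inherit the same difficulty.
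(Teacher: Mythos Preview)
Your high-level plan is in the right spirit but two concrete pieces are off. First, your identification of the auxiliaries is wrong: the paper sets $U_{1,i}:=(X_2^{i-1},\tilde{Y}_2^{i-1})$ and $U_{2,i}:=(X_1^{i-1},\tilde{Y}_1^{i-1})$, not $U_{1,i}=\tilde{Y}_1^{i-1}$ --- each auxiliary carries the \emph{other} link's past. Dependence-balance is then invoked solely to establish that $X_{1,i}$ and $X_{2,i}$ are conditionally independent given $(U_{1,i},U_{2,i})$, which is what justifies single-letterizing into the product form $p(u_1,u_2)p(x_1|u_1,u_2)p(x_2|u_1,u_2)$; it is \emph{not} used to ``telescope cooperation contributions'' inside the sum-rate bound itself. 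Likewise, for the two parts of (\ref{eq:outerR1_1}) and for (\ref{eq:outerR1_2}) the genie handed to receiver~1 is $(W_2,Y_2^N)$, respectively $(W_2,\tilde{Y}_2^N)$, not $(V_1^N,V_2^N)$; once you condition on $(W_2,Y_2^{i-1})$ in the chain rule, $X_{2,i}$, $V_{1,i}$, $V_{2,i}$ and $U_{1,i}$ drop out for free.

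Second --- and this is the real gap --- your plan for (\ref{eq:outerR12_1}) is missing the identity that makes the argument work. The crucial step is $H(Y_1^N|W_1)=H(V_2^N|W_1)$ (and symmetrically $H(Y_2^N|W_2)=H(V_1^N|W_2)$): in the El~Gamal--Costa model $Y_{1,i}=f_1(X_{1,i},V_{2,i})$ with $V_{2,i}$ recoverable from $(Y_{1,i},X_{1,i})$, and one shows inductively that $X_1^{i}$ is a function of $(W_1,V_2^{i-1})$ even with feedback. With this swap, $I(W_1;Y_1^N)+I(W_2;Y_2^N)$ becomes $H(Y_1^N)+H(Y_2^N)-H(V_1^N|W_2)-H(V_2^N|W_1)$; giving $V_k^N$ as genie to receiver~$k$ and writing $H(V_k^N|W_j)=H(V_k^N)-I(V_k^N;W_j)$ leaves $H(Y_1^N|V_1^N)+H(Y_2^N|V_2^N)$ plus the residuals $I(V_1^N;W_2)+I(V_2^N;W_1)$. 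These residuals are then bounded by $I(W_1,\tilde{Y}_1^N;W_2)+I(W_2,\tilde{Y}_2^N;W_1)=H(\tilde{Y}_1^N|W_1)+H(\tilde{Y}_2^N|W_2)\le N(C_{\sf FB1}+C_{\sf FB2})$, using that $V_k^N$ is a function of $(W_k,\tilde{Y}_k^{N-1})$ and that $W_1,W_2$ are independent. Without the entropy-swap identity, a direct genie argument of the kind you sketch leaves cross-terms that you cannot control by the feedback capacities alone. The bound (\ref{eq:outer2R1R2_1}) relies on the same identity.
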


\begin{remark}
In the non-feedback case, \emph{i.e.}, $C_{\sf FB1} = C_{\sf FB2} = 0$, by setting $U_1 = U_2 = \emptyset$, we recover the outer-bounds of Theorem $1$ in \cite{ElGamal:it82}. Note that in this case, $H(Y_1|X_2) = H(Y_1|V_2)$ and $H(Y_2|X_1) = H(Y_2|V_1)$. In fact, our achievable region of Theorem~\ref{theorem:DICachievableregion} matches the outer-bound under this model, thereby achieving the non-feedback capacity region.
\end{remark}
\begin{remark}[Feedback gain under symmetric feedback cost] Notice from (\ref{eq:outerR12_1}) that the sum-rate capacity can be at most increased by the rate of available feedback, \emph{i.e.}, one bit of feedback provides a capacity increase of at most one bit. Therefore, if the cost of using the feedback link is the same as that of using the forward link, there is no feedback gain under the feedback cost. However, it turns out that there is indeed feedback gain when the costs are \emph{asymmetric}. This will be discussed in more details in Remark~\ref{remark:feedbackgain} of Section~\ref{linear}.
\end{remark}

\begin{proof}
By symmetry, it suffices to prove the bounds of~(\ref{eq:outerR1_1}), (\ref{eq:outerR1_2}), (\ref{eq:outerR12_2}), (\ref{eq:outerR12_1}), and (\ref{eq:outer2R1R2_1}). The bounds of (\ref{eq:outerR1_1}) and (\ref{eq:outerR1_2}) are nothing but the cutset bounds (see Appendix~\ref{deterministic-proofs} for details). Also  (\ref{eq:outerR12_2}) is the bound when the feedback link has infinite capacity~\cite{Suh}. Hence, proving the bounds of (\ref{eq:outerR12_1}) and (\ref{eq:outer2R1R2_1}) is the main focus of the proof. We will present the proof of (\ref{eq:outerR12_1}) here and for completeness, the proof for all other bounds is provided in Appendix~\ref{deterministic-proofs}.

\textbf{Proof of (\ref{eq:outerR12_1}):}
\begin{align*}
\begin{split}
N&(R_1 + R_2- \epsilon_N) \leq I(W_1;Y_1^{N}) + I(W_2;Y_2^{N})  \\
&= H(Y_1^{N}) - H(Y_1^N|W_1)  + H(Y_2^{N}) -H(Y_2^{N}|W_2) \\
&\overset{(a)}{=}  H(Y_1^{N}) - H(V_1^N|W_2)  + H(Y_2^{N}) -H(V_2^{N}|W_1) \\
&= H(Y_1^{N}) - [H(V_1^N) - I(V_1^N;W_2) ] + H(Y_2^{N}) \\& - [ H(V_2^{N}) - I(V_2^N; W_1) ] \\
&\overset{(b)}{\leq} I(V_1^N;W_2) +  I(V_2^N; W_1) + H(Y_1^{N}, V_1^{N}) - H(V_1^N) \\& + H(Y_2^{N}, V_2^N) -  H(V_2^{N}) \\
&= I(V_1^N;W_2) +  I(V_2^N; W_1) + H(Y_1^{N}|V_1^N)  + H(Y_2^{N}|V_2^N)  \\
&\overset{(c)}{\leq} I(V_1^N, W_1, \tilde{Y}_1^{N};W_2) +  I(V_2^N, W_2, \tilde{Y}_2^N; W_1) \\&+ H(Y_1^{N}|V_1^N)  + H(Y_2^{N}|V_2^N)  \\
&\overset{(d)}{=} I(W_1, \tilde{Y}_1^{N};W_2) +  I(W_2, \tilde{Y}_2^N; W_1) + H(Y_1^{N}|V_1^N) \\ & + H(Y_2^{N}|V_2^N)  \\
&=  I(\tilde{Y}_1^{N};W_2|W_1) +  I(\tilde{Y}_2^N; W_1|W_2) + H(Y_1^{N}|V_1^N)  + \\
& H(Y_2^{N}|V_2^N)  \\
&=  H(\tilde{Y}_1^{N}|W_1) +  H(\tilde{Y}_2^N|W_2) + H(Y_1^{N}|V_1^N)  + H(Y_2^{N}|V_2^N)  \\
&\overset{(e)}{\leq} N (C_{\sf FB1} + C_{\sf FB2}) + \sum H(Y_{1i} | V_{1i} ) + \sum H(Y_{2i} | V_{2i} ),
\end{split}
\end{align*}
where ($a$) follows from $H(Y_1^N|W_1)=H(V_2^N|W_1)$ and $H(Y_2^N|W_2)=H(V_1^N|W_2)$ (see Claim~\ref{claim1} below); ($b$) follows from providing $V_1^N$ and $V_2^N$ to receiver 1 and 2, respectively; ($c$) follows from the fact that adding information increases mutual information; ($d$) follows from the fact that $V_k^N$ is a function of $(W_k,\tilde{Y}_k^{N-1})$; ($e$) follows from the fact that $H(\tilde{Y}_k^N|W_k) \leq N C_{\sf FB k}$ and conditioning reduces entropy.

\begin{claim}
\label{claim1}
$H(Y_1^N|W_1)=H(V_2^N|W_1)$ and $H(Y_2^N|W_2)=H(V_1^N|W_2)$.
\end{claim}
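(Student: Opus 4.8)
The plan is to establish, conditioned on $W_1$, a two-way deterministic equivalence between the sequences $Y_1^N$ and $V_2^N$ (each is a deterministic function of $W_1$ together with the other); this immediately gives $H(Y_1^N|W_1)=H(V_2^N|W_1)$, and the companion identity $H(Y_2^N|W_2)=H(V_1^N|W_2)$ follows by swapping the roles of the two users. The equivalence rests on the two defining properties of the El Gamal-Costa model, $Y_{1,i}=f_1(X_{1,i},V_{2,i})$ and $H(V_{2,i}|Y_{1,i},X_{1,i})=0$, together with the causal structure of the feedback-aided encoders.

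\emph{Direction 1: $V_2^N$ is a function of $(W_1,Y_1^N)$.} First I would check, by induction on $i$, that $X_{1,i}$ is a deterministic function of $(W_1,Y_1^{(i-2)})$. Indeed $X_{1,i}=e_{1,i}(W_1,\tilde Y_1^{(i-1)})$, and each feedback symbol $\tilde Y_{1,j}=\tilde e_{1,j}(Y_1^{(j-1)})$, so $\tilde Y_1^{(i-1)}$ depends only on $Y_1^{(i-2)}$. Hence $X_1^N$ is a function of $(W_1,Y_1^N)$. Applying the El Gamal-Costa condition $H(V_{2,i}|Y_{1,i},X_{1,i})=0$ coordinate-wise then shows that $V_2^N$ is a function of $(Y_1^N,X_1^N)$, and therefore of $(W_1,Y_1^N)$; that is, $H(V_2^N\mid W_1,Y_1^N)=0$.

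\emph{Direction 2: $Y_1^N$ is a function of $(W_1,V_2^N)$.} Here I would run a parallel induction showing that, given $W_1$, the vector $X_1^{(i)}$ is a function of $V_2^{(i-1)}$ and $Y_1^{(i)}$ is a function of $V_2^{(i)}$. The base case uses $X_{1,1}=e_{1,1}(W_1)$ and $Y_{1,1}=f_1(X_{1,1},V_{2,1})$; the inductive step combines $X_{1,i+1}=e_{1,i+1}(W_1,\tilde Y_1^{(i)})$ with the fact that $\tilde Y_1^{(i)}$ is a function of $Y_1^{(i-1)}$ (hence of $(W_1,V_2^{(i-1)})$ by the hypothesis), and then $Y_{1,i+1}=f_1(X_{1,i+1},V_{2,i+1})$. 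Thus $H(Y_1^N\mid W_1,V_2^N)=0$. Combining the two directions, $H(Y_1^N|W_1)=H(Y_1^N,V_2^N|W_1)=H(V_2^N|W_1)$; the identity $H(Y_2^N|W_2)=H(V_1^N|W_2)$ is obtained in exactly the same way with the user indices interchanged (using $Y_{2,i}=f_2(X_{2,i},V_{1,i})$ and $H(V_{1,i}|Y_{2,i},X_{2,i})=0$).

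The only point that requires care is the bookkeeping of time indices in the two inductions: because $X_{1,i}$ depends on past feedback, which depends on past outputs, which depend on past interference, a careless argument risks circularity. Tracking that every dependency is on strictly-past symbols — the $(i-2)$ shift coming from the one-step feedback delay composed with the causal encoder — removes this concern, after which both inductions are routine.
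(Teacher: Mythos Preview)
Your proof is correct and follows essentially the same approach as the paper: both arguments establish that, conditioned on $W_1$, the sequences $Y_1^N$ and $V_2^N$ determine each other, using the El Gamal-Costa conditions $Y_{1,i}=f_1(X_{1,i},V_{2,i})$ and $H(V_{2,i}|Y_{1,i},X_{1,i})=0$ together with the causal encoder structure. The paper presents this via a chain-rule expansion $H(Y_1^N|W_1)=\sum_i H(Y_{1,i}|Y_1^{i-1},W_1)$ and manipulates the conditioning term by term (invoking a separate Claim~\ref{claim2} that $X_1^i$ is a function of $(W_1,V_2^{i-1})$), whereas you package the same steps as an explicit two-way determinism and conclude through the joint entropy $H(Y_1^N,V_2^N|W_1)$; your Direction~2 induction is precisely the content of the paper's Claim~\ref{claim2}.
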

\begin{proof}
By symmetry, it suffices to prove the first one.
\begin{align*}
\begin{split}
H&(Y_1^{N}|W_1) = \sum H(Y_{1i}|Y_1^{i-1},W_1) \\
&\overset{(a)}{=} \sum H(V_{2i}|Y_1^{i-1},W_1) \\
&\overset{(b)}{=} \sum H(V_{2i}|Y_1^{i-1},W_1,X_1^{i},V_2^{i-1}) \\
&\overset{(c)}{=}  \sum H(V_{2i}|W_1,V_2^{i-1})=H(V_2^{N}|W_1),
\end{split}
\end{align*}
where ($a$) follows from the fact that $Y_{1i}$ is a function of $(X_{1i},V_{2i})$ and $X_{1i}$ is a function of $(W_1, Y_1^{i-1})$;
($b$) follows from the fact that $X_1^{i}$ is a function of $(W_1,Y_1^{i-1})$ and $V_2^{i-1}$ is a function of $(Y_1^{i-1},X_1^{i-1})$; ($c$) follows from the fact that $Y_{1}^{i-1}$ is a function of $(X_1^{i-1},V_2^{i-1})$ and $X_1^{i}$ is a function of  $(W_1,V_2^{i-1})$ (see~Claim \ref{claim2} below).
\end{proof}

\begin{claim}
\label{claim2}
For $i\geq 1$, $X_1^{i}$ is a function of $(W_1,V_2^{i-1})$. Similarly, $X_2^{i}$ is a function of $(W_2,V_1^{i-1})$.
\end{claim}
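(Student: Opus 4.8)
The plan is a straightforward induction on the time index $i$. The statement to prove — call it $P(i)$ — is that $X_1^{i}=(X_{1,1},\dots,X_{1,i})$ is a deterministic function of $(W_1,V_2^{i-1})$, where $V_2^{0}$ is read as the empty sequence; the companion claim for $X_2^{i}$ then follows verbatim by swapping the indices $1\leftrightarrow 2$ and using $Y_{2,\ell}=f_2(X_{2,\ell},V_{1,\ell})$ and $V_{1,\ell}=g_1(X_{1,\ell})$ in place of their user-$1$ counterparts. For the base case $i=1$, the encoding map gives $X_{1,1}=e_{1,1}(W_1,\tilde{Y}_1^{(0)})$ and $\tilde{Y}_1^{(0)}$ is empty, so $X_1^{1}$ is a function of $W_1$ alone, which is (trivially) a function of $(W_1,V_2^{0})$.

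For the inductive step I assume $P(i-1)$ and write $X_1^{i}=(X_1^{i-1},X_{1,i})$. The first block is handled by $P(i-1)$, since a function of $(W_1,V_2^{i-2})$ is in particular a function of $(W_1,V_2^{i-1})$. For the new symbol $X_{1,i}=e_{1,i}(W_1,\tilde{Y}_1^{(i-1)})$ I unwind the dependency chain induced by feedback: each $\tilde{Y}_{1,j}$ with $j\le i-1$ is a function of $Y_1^{(j-1)}$, hence $\tilde{Y}_1^{(i-1)}$ is a function of $Y_1^{(i-2)}$; each output satisfies $Y_{1,\ell}=f_1(X_{1,\ell},V_{2,\ell})$, hence $Y_1^{(i-2)}$ is a function of $(X_1^{i-2},V_2^{i-2})$; and $X_1^{i-2}$, being a sub-tuple of $X_1^{i-1}$, is by $P(i-1)$ a function of $(W_1,V_2^{i-2})$. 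Composing these maps shows that $\tilde{Y}_1^{(i-1)}$, and therefore $X_{1,i}$, is a function of $(W_1,V_2^{i-2})$, hence of $(W_1,V_2^{i-1})$. Thus $X_1^{i}$ is a function of $(W_1,V_2^{i-1})$, which establishes $P(i)$ and closes the induction.

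I do not expect any real obstacle; the one point requiring care is the index bookkeeping — namely that the feedback available to transmitter $1$ before time $i$ is built only from channel outputs strictly before time $i-1$, so that $X_{1,i}$ never depends on $V_{2,i}$ or on any later $V_2$ symbol. This one-step-lag structure is exactly what makes the induction go through, and it is the same fact invoked in step $(c)$ of the proof of Claim~\ref{claim1}.
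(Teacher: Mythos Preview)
Your induction is correct and complete. The paper's own proof is a terse, essentially figure-based argument: it notes that the channel is deterministic, so $X_1^i$ is a function of $(W_1,W_2)$, and then observes from the block diagram that any influence of $W_2$ on transmitter~1 must pass through the interference signals $V_{2,\ell}$, with the causality of feedback restricting this to $V_2^{i-1}$. Your approach instead unwinds the encoder--feedback--channel composition explicitly and closes it with a clean induction.

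The two arguments are equivalent in spirit but differ in granularity. The paper's proof is shorter and relies on the reader parsing the dependency graph; yours is self-contained and makes the index bookkeeping verifiable line by line. One byproduct of your explicit unwinding is that, because in this model $\tilde{Y}_1^{(i-1)}$ depends only on $Y_1^{(i-2)}$ (the feedback encoder introduces an extra unit of delay on top of the encoder's delay), you in fact establish the slightly stronger statement that $X_1^i$ is a function of $(W_1,V_2^{i-2})$. The paper's version, phrased for the weaker one-step-delay conclusion $X_1^i = \varphi(W_1,V_2^{i-1})$, is all that is needed downstream, so either suffices.
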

\begin{proof}
By symmetry, it is enough to prove the first one.
Since the channel is deterministic, $X_1^{i}$ is a function of $(W_1, W_2)$. In Figure~\ref{fig_ElGamalCosta}, we see that information of $W_2$ to the first link pair must pass through $V_{2i}$. Also note that $X_{1i}$ depends on the past output sequences until $i-1$. Therefore, $X_1^{i}$ is a function of $(W_1,V_2^{i-1})$.
\end{proof}

\end{proof}

\section{Linear Deterministic Interference Channel}
\label{linear}
In this section, we consider the linear deterministic IC with rate-limited feedback described in Section~\ref{problem}. Since this model is a special case of the El Gamal-Costa model, our inner and outer bounds derived in the previous section also apply to this model. We show that the inner-bound and the outer bound derived in Theorem~\ref{theorem:DICachievableregion} and~\ref{theorem:outerboundregion} respectively, coincide under this linear deterministic model, thus establishing the capacity region.

\begin{theorem}
\label{theorem:DICcapacity}
The capacity region of the linear deterministic IC with rate-limited feedback is the set of non-negative $(R_1,R_2)$ satisfying
\begin{subequations}
\begin{eqnarray}
\label{eq:R1bound-1}
  R_1 & \leq & \min\left\{ \max( n_{11}, n_{21}), \max( n_{11}, n_{12}) \right\} \\
\label{eq:R1bound-2}
  R_1 & \leq & n_{11} +   C_{\sf FB 2}  \\
\label{eq:R2bound-1}
    R_2 & \leq & \min\left\{ \max( n_{22}, n_{12}), \max( n_{22}, n_{21}) \right \} \\
\label{eq:R2bound-2}
  R_2 & \leq & n_{22} +   C_{\sf FB 1} \\
\label{eq:R1R2bound-1}
  R_1 + R_2 & \leq & (n_{11} - n_{12})^+  +  \max( n_{22}, n_{12})  \\
\label{eq:R1R2bound-2}
  R_1 + R_2 & \leq & (n_{22} - n_{21})^+  +  \max( n_{11}, n_{21}) \\
\label{eq:R1R2bound-3}
R_1 + R_2 & \leq & \max \left\{ n_{21}, (n_{11} - n_{12})^+ \right\} \\  & + & \max \left\{ n_{12}, (n_{22} - n_{21})^+ \right\}  + C_{\sf FB1} +  C_{\sf FB2} \nonumber  \\
\label{eq:2R1R2bound}
2R_1 + R_2 & \leq & (n_{11} - n_{12})^+  +  \max( n_{11}, n_{21})  \\  & + & \max \left\{ n_{12},(n_{22}-n_{21})^+ \right\} +  C_{\sf FB2} \nonumber \\
\label{eq:R12R2bound}
R_1 + 2R_2  & \leq & (n_{22} - n_{21})^+ +  \max( n_{22}, n_{12}) \\  & + & \max \left\{n_{21},(n_{11}-n_{12})^+ \right\}   +  C_{\sf FB1}. \nonumber
\end{eqnarray}
\end{subequations}
\end{theorem}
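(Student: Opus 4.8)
The plan is to establish the two inclusions separately by specializing the general results already available for the El Gamal--Costa model. The converse (capacity $\subseteq$ the stated region) will follow by evaluating the outer bound of Theorem~\ref{theorem:outerboundregion} for the linear deterministic channel, and the direct part (the stated region $\subseteq$ capacity) will follow by evaluating the inner bound of Theorem~\ref{theorem:DICachievableregion} with auxiliary random variables tailored to the layered structure, equivalently by writing out the explicit scheme foreshadowed by the motivating example of Section~\ref{motivation}. Since the region in the statement is a polytope, hence convex, for the direct part it suffices to reach each of its corner points, with the common auxiliary $U$ of Theorem~\ref{theorem:DICachievableregion} serving as a time-sharing variable.

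For the converse I would match each of (\ref{eq:R1bound-1})--(\ref{eq:R12R2bound}) with the corresponding entropy inequality in Theorem~\ref{theorem:outerboundregion} and bound every entropy term by a count of ``active'' signal levels, using that all signals live in $\mathbb{F}_2^q$ and that conditioning does not increase entropy (so the auxiliaries $U_1,U_2$ can simply be dropped in this direction). The facts I expect to need, together with their $1\leftrightarrow 2$ symmetric versions, are $H(Y_1)\le\max(n_{11},n_{21})$; $H(Y_1|X_2)\le n_{11}$ and $H(Y_1|X_1)\le n_{21}$; $H(Y_1|V_1,V_2)\le (n_{11}-n_{12})^+$, since given $V_2=g_2(X_2)$ the interfering contribution to $Y_1$ is fixed while $V_1$ already determines the top $n_{12}$ bits of $X_1$; and, by a short level-counting argument, $H(Y_1|V_1)\le\max\{n_{21},(n_{11}-n_{12})^+\}$, obtained by tracking which of the $\max(n_{11},n_{21})$ active levels of $Y_1$ are fully pinned down by the top $n_{12}$ bits of $X_1$. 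Substituting these into (\ref{eq:outerR1_1}), (\ref{eq:outerR1_2}), (\ref{eq:outerR2_1}), (\ref{eq:outerR12_2}), (\ref{eq:outerR12_1}), (\ref{eq:outer2R1R2_1}) (and their symmetric counterparts), and using $(n_{11}-n_{12})^+ + n_{12} = \max(n_{11},n_{12})$ and $(n_{22}-n_{21})^+ + n_{21} = \max(n_{22},n_{21})$, turns them exactly into (\ref{eq:R1bound-1})--(\ref{eq:R12R2bound}).

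For the direct part I would pick $X_1,X_2$ uniform over appropriately nested level-subspaces of $\mathbb{F}_2^q$, take $U_1$ to be the sub-levels of the common part of $X_1$ that collide with transmitter $2$'s signal at receiver $1$ (the ``cooperative common'' levels), let $V_1=g_1(X_1)$ carry the remaining common (``non-cooperative common'') levels on top of $U_1$, and choose the quantizers $\hat Y_1,\hat Y_2$ to be the deterministic maps retaining exactly the levels of $Y_1,Y_2$ on which the other user's cooperative-common bits land. With these deterministic choices one gets $\delta_1=\delta_2=0$, the feedback constraints (\ref{eq:constraint2Y1tilde})--(\ref{eq:constraint2Y2tilde}) reduce to $R_{2cc}\le C_{\sf FB1}$, $R_{1cc}\le C_{\sf FB2}$, and every mutual-information term in the pre-Fourier--Motzkin system of Theorem~\ref{theorem:DICachievableregion} (e.g.\ $R_{1p}+R_{1nc}\le I(X_1;Y_1|U,U_1,V_2)$, and so on) collapses to a difference of level counts; it then remains to verify that the resulting constraints on $(R_{1cc},R_{1nc},R_{1p},R_{2cc},R_{2nc},R_{2p})$ admit a split with $R_k=R_{kcc}+R_{knc}+R_{kp}$ equal to the chosen corner point. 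This is the generalization, to arbitrary $(n_{11},n_{12},n_{21},n_{22},C_{\sf FB1},C_{\sf FB2})$, of the accounting carried out for the example in Section~\ref{motivation}.

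I expect this last step of the direct part to be the main obstacle. The parameters $(n_{11},n_{12},n_{21},n_{22})$ split into several regimes (weak vs.\ strong interference, $n_{11}\gtrless n_{12}$, $n_{22}\gtrless n_{21}$, and the interplay with the feedback caps), and in each regime one must exhibit the correct allocation of levels to the cooperative-common, non-cooperative-common, and private sub-messages, and the correct choice of which levels are fed back, so that all the constraints of Theorem~\ref{theorem:DICachievableregion} are simultaneously met at the relevant corner point. The converse, by contrast, is essentially mechanical once the handful of entropy-counting bounds above is in place, the only slightly delicate one being $H(Y_1|V_1)\le\max\{n_{21},(n_{11}-n_{12})^+\}$.
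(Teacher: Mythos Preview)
Your converse plan is exactly what the paper does (it calls this step ``trivial due to Theorem~\ref{theorem:outerboundregion}''): drop the auxiliaries $U_1,U_2$ by conditioning, then bound each entropy term by the appropriate level count. The identities you list are the right ones.

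Your achievability plan is correct but takes a harder road than the paper. You propose to enumerate corner points of the nine-facet polytope, and for each corner exhibit a rate split $(R_{kcc},R_{knc},R_{kp})$ meeting the pre-Fourier--Motzkin constraints of Theorem~\ref{theorem:DICachievableregion}, using $U$ for time-sharing between corners; you rightly anticipate that this will splinter into many regimes depending on the relative sizes of $n_{11},n_{12},n_{21},n_{22},C_{\sf FB1},C_{\sf FB2}$. The paper sidesteps this entirely: it fixes a \emph{single} input distribution with $U=\varnothing$, $X_{kp},X_{kcc},X_{knc}$ uniform on explicitly specified level blocks, and $\hat Y_k$ the bottom $\min(n_{kj},C_{\sf FBj})$ bits of $Y_k$ (so $\delta_1=\delta_2=0$, as you also noted). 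It then evaluates every mutual-information term in the \emph{post}-FME inequalities (\ref{lemmaeq:R1-1})--(\ref{lemmaeq:R1-2R2-dummy}) with this one distribution, obtaining a list of inequalities (\ref{eq2:R1bound-1})--(\ref{eq2:R12R2bound}), and finally checks algebraically that each of these is implied by some combination of the outer-bound inequalities (\ref{eq:R1bound-1})--(\ref{eq:R12R2bound}). What this buys is that the regime analysis collapses into a handful of elementary implications among affine inequalities (e.g.\ ``if $C_{\sf FB2}\le\min\{(n_{22}-n_{21})^+,n_{12}\}$ then (\ref{eq2:2R1R2bound}) is (\ref{eq:2R1R2bound}), otherwise it follows from (\ref{eq:R1bound-1}) plus (\ref{eq:R1R2bound-2})''), rather than a corner-by-corner construction. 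Your approach would also succeed, but the paper's is shorter and avoids the obstacle you flagged.
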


\begin{remark}
\label{remark:ISIT}
In the non-feedback case, \emph{i.e.}, $C_{\sf FB1} = C_{\sf FB2}=0$, this theorem recovers the result of~\cite{ElGamal:it82,Guy}. In the infinite feedback case, \emph{i.e.}, $C_{\sf FB1} = C_{\sf FB2}=\infty$, this recovers the result of~\cite{Suh:allerton09,Suh}.
Considering the sum-rate capacity under symmetric setting, \emph{i.e.}, $n_{11} = n_{22} = n$, $n_{12} = n_{21} = m$, $C_{\sf FB1} = C_{\sf FB2}$, this recovers the result of~\cite{AlirezaISIT}.
\end{remark}

\begin{proof}
The converse proof is trivial due to Theorem~\ref{theorem:outerboundregion}. For achievability, we will use the result in Theorem~\ref{theorem:DICachievableregion}. By choosing the following input distribution, we will show the tightness of the outer bound. $\forall k \in \{ 1, 2 \}$ and $j \neq k$, we choose
\begin{eqnarray}
\label{choice}
\begin{split}
&U = \varnothing,   \\
&U_k = U \oplus X_{kcc},   \\
&V_k = U_k \oplus X_{knc}, \\
&X_k = V_k \oplus X_{kp},  \\
&\hat{Y}_k = {\sf LSB}_{\min(n_{kj},C_{\sf FBj})}(Y_k),
\end{split}
\end{eqnarray}
where for any column vector $A$, ${\sf LSB}_{n}(A)$ takes the bottom $n$ ($n \leq |A|$) entries of $A$ while returning zeros for the remaining part; and $X_{kcc}, X_{knc},$  and $X_{kp}$ are independent random vectors of size $\max \{ n_{kk}, n_{kj} \}$, such that
\begin{itemize}
\item The random vector $X_{kp}$ consists of $(n_{kk} - n_{kj})^+$ i.i.d. ${\sf Ber} \left(\frac{1}{2} \right)$ random variables at the bottom, denoted by $*$ in (\ref{eq:VariablesDet}), corresponding to the number of private signal levels of transmitter $k$.
\item The random vector $X_{kcc}$ consists of $(n_{kk} - n_{kj})^+$  i.i.d. ${\sf Ber} \left(\frac{1}{2} \right)$ random variables in the middle (above the private signal levels), denoted by $*$ in (\ref{eq:VariablesDet}),  corresponding to the number of common signal levels that will be re-sent cooperatively through the other commuication link with the help of feedback.
\item The random vector $X_{knc}$ consists of $(n_{kj} - C_{\sf FBj } )^+$  i.i.d. ${\sf Ber} \left(\frac{1}{2} \right)$ random variables at the top, denoted by $*$ in (\ref{eq:VariablesDet}),  corresponding to the number of non-cooperative common signal levels.
\end{itemize}
As we show in Appendix~\ref{Appendix:DICcapacity}, with this choice of random variables, the achievable region of Theorem~\ref{theorem:DICachievableregion} matches the outer-bounds in Theorem~\ref{theorem:outerboundregion}.


\begin{align}
\label{eq:VariablesDet}
\begin{split}
X_{k} =
\underbrace{\left[
          \begin{array}{c}
            0 \\
            \vdots \\
            0 \\
\hline
            0 \\
            \vdots \\
            0 \\
\hline
            * \\
            \vdots \\
            * \\
          \end{array}
\right]}_{X_{kp}}
\oplus
\underbrace{\left[
          \begin{array}{c}
            0 \\
            \vdots \\
            0 \\
\hline
            * \\
            \vdots \\
            * \\
\hline
            0 \\
            \vdots \\
            0 \\
          \end{array}
\right]}_{X_{kcc}}
\oplus
\underbrace{\left[
          \begin{array}{c}
            * \\
            \vdots \\
            * \\
\hline
            0 \\
            \vdots \\
            0 \\
\hline
            0 \\
            \vdots \\
            0 \\
          \end{array}
\right]}_{X_{knc}}, \hspace{4mm} k=1,2.
\end{split}
\end{align}
\end{proof}

It is worth utilizing Theorem~\ref{theorem:DICcapacity} to illustrate the impact of feedback on the sum-rate capacity of the linear deterministic IC. Consider a symmetric case where $n_{11}=n_{22}=n$, $n_{12} = n_{21} = \alpha n$, and $C_{\sf FB1}=C_{\sf FB2}=\beta n$. Using Theorem~\ref{theorem:DICcapacity}, we can derive the sum-rate capacity of this network (normalized by $n$)
\begin{equation}
\label{ISITresult}
\frac{C_{\sf sum}}{n}=
\left\{ \begin{array}{ll}
\min(2-2\alpha+2\beta,2-\alpha) & \textrm{for $\alpha \in [0,0.5]$}\\
\min(2\alpha+2\beta,2-\alpha) & \textrm{for $\alpha \in [0.5,\frac{2}{3}]$}\\
2-\alpha & \textrm{for $\alpha \in [\frac{2}{3},1]$}\\
\alpha & \textrm{for $\alpha \in [1,2+2\beta]$}\\
2+2\beta & \textrm{for $\alpha \in [2+2\beta,\infty)$}
\end{array} \right.
\end{equation}

Figure \ref{fig:ISITfig} illustrates the (normalized) sum-rate capacity as a function of $\alpha$,  for different values of $\beta=0$ (\emph{i.e.}, no feedback), $\beta=\infty$ (\emph{i.e.}, infinite feedback), and $\beta=0.125$. We note the following cases:
\begin{itemize}
\item Case $1$ ($\alpha \in \left[ 0,\frac{1}{2} \right]$): In this regime the sum-rate capacity is increased by the total amount of feedback rates and saturates at $2-\alpha$ once the rate of each feedback link is larger than $\frac{\alpha n}{2}$. \vspace{.2mm}
\item Case $2$ ($\alpha \in \left[ \frac{1}{2},\frac{2}{3} \right]$): In this regime the sum-rate capacity is increased by the total amount of feedback rates and saturates at $2\alpha$ once the rate of each feedback link is larger than $\frac{(2 n - 3 \alpha n)}{2}$. \vspace{.2mm}
\item Case $3$ ($\alpha \in \left[ \frac{2}{3},2 \right]$): In this regime feedback does not increase the capacity. \vspace{.2mm}
\item Case $4$ ($\alpha \in [2+2\beta,\infty)$): In this regime the sum-rate capacity is increased by at most the total amount of feedback rates.
\end{itemize}

\begin{remark}[Feedback gain under asymmetric feedback cost]
\label{remark:feedbackgain}
As it can be seen in~(\ref{ISITresult}) the sum-rate capacity is increased by at most the total amount of feedback rates. Let the {\em cost} be the amount of resources (e.g., time, frequency) paid for sending one bit. With this cost in mind, let us consider the effective gain of using feedback which counts the cost. Notice that by Case 1,2, and 4, there are many channel parameter scenarios where one bit of feedback can provide a capacity increase of exactly one bit. This implies that the effective feedback gain depends on the cost difference between feedback and forward links. So if the feedback cost is cheaper than that of using forward link, then there is indeed feedback gain. The cellular network may be this case. Suppose that downlink is used for feedback purpose, while uplink is used as a forward link. Then, this is the scenario where the feedback cost is cheaper than the cost of using the forward link, as downlink power is typically larger than uplink power, thus inducing cheaper feedback cost.
\end{remark}

\begin{figure}[t]
\centering
\includegraphics[width=7.9cm,height=4.9cm]{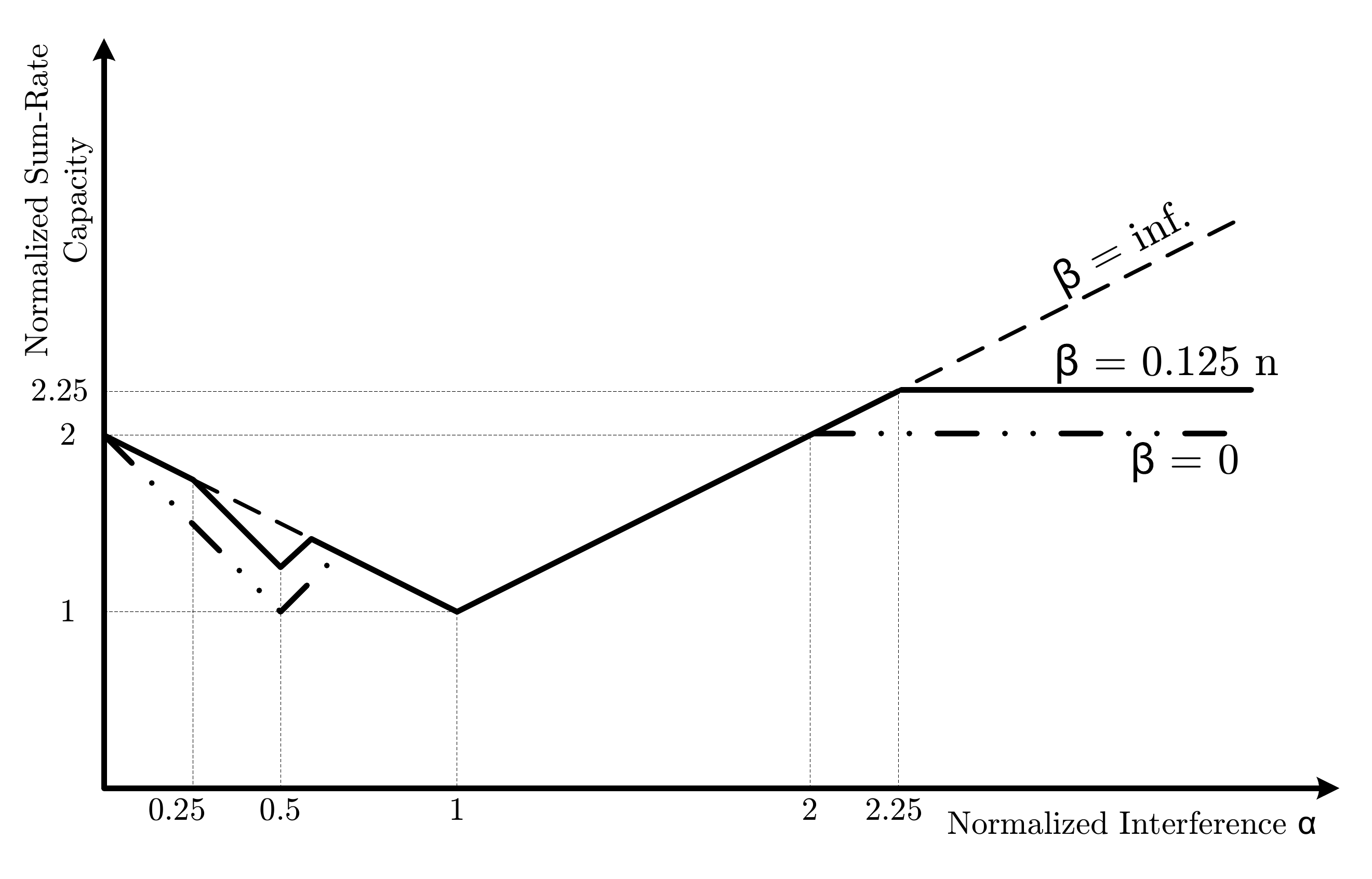}
\caption{Normalized sum-rate capacity for $\beta=0$, $\beta=0.125$ and $\beta=\infty$. \label{fig:ISITfig}}
\end{figure}


\begin{figure*}[h!tb]
\hrule
\begin{subequations}
\begin{eqnarray}
\label{eq:outR1_1}
  R_1 & \leq & \log \left( 1+ \mathsf{SNR}_1  + \mathsf{INR}_{21} + 2 \rho \sqrt{ \mathsf{SNR}_1  \cdot \mathsf{INR}_{21}}
 \right) \\
\label{eq:outR1_2}
 R_1 & \leq & \log \left( 1 + \frac{ (1-\rho^2){\sf SNR}_1 }{ 1+ (1-\rho^2){\sf INR}_{12} } \right) +\log \left( 1+ (1 - \rho^2 ) {\sf INR}_{12} \right)   \\
\label{eq:outR1_3}
 R_1 & \leq & \log \left( 1 +  (1- \rho^2) \mathsf{SNR}_1 \right)
 + C_{\sf FB2 }\\
 \label{eq:outR2_1}
  R_2 & \leq & \log \left( 1+ \mathsf{SNR}_2  + \mathsf{INR}_{12} + 2 \rho \sqrt{ \mathsf{SNR}_2  \cdot \mathsf{INR}_{12}}
 \right) \\
\label{eq:outR2_2}
 R_2 & \leq & \log \left( 1 + \frac{ (1-\rho^2){\sf SNR}_2 }{ 1+ (1-\rho^2){\sf INR}_{21} } \right) +\log \left( 1+ (1 - \rho^2 ) {\sf INR}_{21} \right)   \\
\label{eq:outR2_3}
 R_2 & \leq & \log \left( 1 +  (1- \rho^2) \mathsf{SNR}_2 \right)
 + C_{\sf FB1 }\\
\label{eq:outR12_1}
R_1 + R_2 & \leq & \log \left( 1 + (1- \rho^2)  {\sf INR}_{21} + \frac{ (1- \rho^2){\sf SNR}_1 }{1+ (1- \rho^2) {\sf INR}_{12}} \right) + \log \left( 1 + (1- \rho^2)  {\sf INR}_{12} + \frac{ (1- \rho^2){\sf SNR}_2 }{1+ (1- \rho^2) {\sf INR}_{21}} \right) \nonumber \\
& &\; +~C_{\sf FB1} + C_{\sf FB2} \\
\label{eq:outR12_2}
  R_1 + R_2 & \leq &\log \left( 1 +  \frac{ (1- \rho^2) \mathsf{SNR}_1}{ 1 + (1- \rho^2) \mathsf{INR}_{12}} \right) +  \log \left( 1+ \mathsf{SNR}_2  + \mathsf{INR}_{12} + 2 \rho \sqrt{ \mathsf{SNR}_2  \cdot \mathsf{INR}_{12}}
 \right) \\
R_ 1 + R_2 & \leq & \log \left( 1 +  \frac{ (1- \rho^2) \mathsf{SNR}_2}{ 1 + (1- \rho^2) \mathsf{INR}_{21}} \right) +  \log \left( 1+ \mathsf{SNR}_1  + \mathsf{INR}_{21} + 2 \rho \sqrt{ \mathsf{SNR}_1  \cdot \mathsf{INR}_{21}}
 \right) \\
\label{eq:out2R1R2}
2R_1 + R_2 & \leq & \log \left( 1+ \mathsf{SNR}_1  + \mathsf{INR}_{21} + 2 \rho \sqrt{ \mathsf{SNR}_1  \cdot \mathsf{INR}_{21}}
 \right)  +\log \left( 1 +  \frac{ (1- \rho^2) \mathsf{SNR}_1}{ 1 + (1- \rho^2) \mathsf{INR}_{12}} \right) \\
& & \; + \log \left( 1 + (1- \rho^2)  {\sf INR}_{12} + \frac{ (1- \rho^2){\sf SNR}_2 }{1+ (1- \rho^2) {\sf INR}_{21}} \right) + C_{\sf FB1} + C_{\sf FB2} \\
\label{eq:outR12R2}
R_1 + 2R_2  & \leq  & \log \left( 1+ \mathsf{SNR}_2  + \mathsf{INR}_{12} + 2 \rho \sqrt{ \mathsf{SNR}_2  \cdot \mathsf{INR}_{12}}
 \right)  +  \log \left( 1 +  \frac{ (1- \rho^2) \mathsf{SNR}_2}{ 1 + (1- \rho^2) \mathsf{INR}_{21}} \right) \\
& & \; +  \log \left( 1 + (1- \rho^2)  {\sf INR}_{21} + \frac{ (1- \rho^2){\sf SNR}_1 }{1+ (1- \rho^2) {\sf INR}_{12}} \right) + C_{\sf FB1}+ C_{\sf FB2}
\end{eqnarray}
\end{subequations}
\hrule
\end{figure*} 

\section{Gaussian Interference Channel}
\label{gaussian}
In this section, we consider the Gaussian IC with rate-limited feedback, described in Section~\ref{problem}. We first derive an outer-bound on the capacity region of this network. We then develop an achievability strategy based on the techniques discussed in the previous sections and then show that for symmetric channel gains it achieves a sum-rate within a constant gap to the optimality.

\subsection{Outer-bound}

\begin{theorem}
\label{theorem:Gaussianouterbound}
The capacity region of the Gaussian IC with rate-limited feedback is included in the closure of the set $\mathcal{\bar{C}}$ of $(R_1, R_2)$ satisfying inequalities (\ref{eq:outR1_1})--(\ref{eq:outR12R2}) over $0 \leq \rho \leq 1$.
\end{theorem}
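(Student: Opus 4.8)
The plan is to follow the standard recipe for converting the deterministic outer-bounds of Theorem~\ref{theorem:outerboundregion} into Gaussian outer-bounds, adapting the genie-aided and dependence-balance arguments to the additive-noise setting. First I would set up the notation for the auxiliary genie signals: for the Gaussian channel the natural analogue of $V_k$ (the interference that transmitter~$k$ produces at the unintended receiver) is $S_k := h_{k j} X_{k} + Z_{j}$ (with $j\neq k$), i.e.\ a noisy version of transmitter~$k$'s contribution as seen by the other receiver. I would also introduce the correlation coefficient $\rho$ as $\rho := \bigl| \frac{1}{N}\sum_i \mathbb{E}[X_{1,i} X_{2,i}^{*}] \bigr|$ (suitably normalized by the power constraints), which captures the amount of statistical dependence between the two transmitted signals that feedback can induce; by Cauchy--Schwarz and the power constraints, $0 \le \rho \le 1$. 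The bounds (\ref{eq:outR1_1})--(\ref{eq:outR12R2}) are then obtained by taking each step $(a)$--$(e)$ in the deterministic proofs, replacing entropies of deterministic quantities with differential entropies, and using that a Gaussian input maximizes the relevant conditional differential entropies for a fixed covariance, the covariance being parametrized by $\rho$.

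The key steps, in order, would be: (1) Establish the Gaussian analogue of Claim~\ref{claim1}--Claim~\ref{claim2}, namely that conditioned on $W_k$ the received signal $Y_j^N$ is essentially determined by the interference stream $S_k^N$ up to the private noise, so that $h(Y_j^N|W_k)$ can be replaced by $h(S_k^N|W_k)$ plus a correction term. This uses that $X_k^i$ is a function of $(W_k, \tilde{Y}_k^{i-1})$ and that the feedback signal is itself a function of past outputs, exactly as in the deterministic case, with the additional care that in the Gaussian case one never gets exact determinism and must carry the noise entropies $h(Z_j^N) = \frac{N}{2}\log(2\pi e)$ (or $N\log(\pi e)$ in the complex case) along. (2) Bound the feedback contribution: $h(\tilde{Y}_k^N|W_k) - h(\tilde{Y}_k^N|W_1,W_2) \le I(\tilde{Y}_k^N ; W_j | W_k) \le N C_{\sf FBk}$, since the feedback link has capacity $C_{\sf FBk}$; this is where the $C_{\sf FB1}+C_{\sf FB2}$ terms in (\ref{eq:outR12_1}), (\ref{eq:out2R1R2}) etc.\ enter, mirroring step~$(e)$. (3) Single-letterize: introduce a time-sharing/uniform index and use concavity of $\log\det$ (entropy maximization under a covariance constraint) to pass from $\frac{1}{N}\sum_i$ of per-symbol differential entropies to a single-letter expression in terms of the average covariance matrix, whose off-diagonal entry is controlled by $\rho$. (4) Evaluate the single-letter expressions: plug the jointly Gaussian input with correlation $\rho$ into the standard MAC/interference bounds and simplify, using ${\sf SNR}_k = |h_{kk}|^2$, ${\sf INR}_{kj} = |h_{kj}|^2$ and the power constraint of~$1$, to match each right-hand side; the factor $(1-\rho^2)$ appears precisely because conditioning one transmitted signal on the other (which is what happens inside the genie terms like $h(Y_1|V_1,V_2,U_1)$) reduces the effective signal power by $1-\rho^2$.

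The main obstacle I expect is step~(1) together with step~(3): unlike the linear deterministic model, the Gaussian interference signal $S_k = h_{kj}X_k + Z_j$ is not a deterministic function of $X_k$, so the clean identities $H(Y_1^N|W_1) = H(V_2^N|W_1)$ of Claim~\ref{claim1} only hold up to noise-entropy terms, and one must be careful that these corrections telescope correctly rather than accumulating — this is the usual subtlety in Gaussian genie-aided converses (e.g.\ in the Etkin--Tse--Wang style arguments and in~\cite{Suh}). The second delicate point is justifying that the optimizing input is jointly Gaussian with a single parameter $\rho$: one must argue, via the maximum-entropy principle applied term-by-term and convexity of the bounding region in the covariance, that it suffices to optimize over $0 \le \rho \le 1$ rather than over all input distributions and all feedback strategies. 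Once these two points are handled, steps~(2) and~(4) are essentially the routine computations already carried out in the deterministic proof of Theorem~\ref{theorem:outerboundregion} and in the Gaussian converse of~\cite{Suh}, transcribed with ${\sf SNR}$/${\sf INR}$ in place of the bit-pipe capacities. I would present (\ref{eq:outR12_1}) (the $C_{\sf FB1}+C_{\sf FB2}$ sum-rate bound) in full detail as the representative case, since it exercises all the genie, feedback, and single-letterization machinery, and relegate the cutset-type bounds (\ref{eq:outR1_1}), (\ref{eq:outR1_3}) and the remaining symmetric bounds to an appendix.
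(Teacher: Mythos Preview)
Your proposal is essentially the same approach as the paper's: define the noisy genie signals $S_k^N := h_{kj}X_k^N + Z_j^N$, port Claims~\ref{claim1}--\ref{claim2} to the Gaussian setting, bound the feedback contribution by $C_{\sf FB k}$, single-letterize, and maximize over jointly Gaussian inputs parametrized by $\rho$, with (\ref{eq:outR12_1}) worked out in detail and the rest deferred.

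Two small places where the paper differs from your expectations are worth flagging. First, the Gaussian analogue of Claim~\ref{claim1} is in fact an \emph{exact} identity, $h(Y_2^N|W_2) = h(S_1^N|W_2)$, not merely equality up to a correction term: because $Y_{2i}$ and $S_{1i}$ differ only by $h_{22}X_{2i}$, which is measurable with respect to $(W_2,Y_2^{i-1})$, the chain-rule argument goes through verbatim. So your worry about ``corrections accumulating'' in step~(1) does not materialize there. Second, the genuinely delicate noise-handling occurs elsewhere in the proof of (\ref{eq:outR12_1}): after introducing the genie and the feedback terms, one is left with residual mutual-information terms of the form $I(\tilde{Y}_1^N; Z_2^N \,|\, W_1,W_2,Z_1^N,\tilde{Y}_2^N)$, and the paper shows these vanish via a per-symbol chain-rule argument (adding $X_1^i,X_2^N,Y_1^i$ into the conditioning to expose that $\tilde{Y}_{1i}$ becomes a function of the conditioning set). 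This is the step your phrase ``telescope correctly'' is pointing at, and it is the one place where a naive transcription of the deterministic proof would stall; you should plan to write it out explicitly.
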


\begin{proof}
By symmetry, it suffices to prove the bounds of~(\ref{eq:outR1_1}), (\ref{eq:outR1_2}), (\ref{eq:outR1_3}), (\ref{eq:outR12_1}), (\ref{eq:outR12_2}) and (\ref{eq:out2R1R2}). The bounds of (\ref{eq:outR1_1}), (\ref{eq:outR1_2}) and (\ref{eq:outR1_3}) are nothing but cutset bounds. The bound of (\ref{eq:outR12_2}) corresponds to the case of infinite feedback rate and was derived in \cite{Suh}. Hence, proving the bounds of (\ref{eq:outR12_1}) and (\ref{eq:out2R1R2}) is the main focus of this proof. We will present the proof of (\ref{eq:outR12_1}) here, and defer the proof for remaining bounds to Appendix~\ref{Appendix:Gaussian}.

\textbf{Proof of~(\ref{eq:outR12_1}):}
The proof idea mostly follows the deterministic case proof of~\ref{eq:outerR12_1}. The only difference in the Gaussian case is that we define a noisy version of $h_{12}X_1^N$ corresponding to $V_1^N$ in the deterministic case: $S_1^N:= h_{12} X_1^N + Z_2^N$. Similarly we define $S_2^N:= h_{21} X_2^N + Z_1^N$ to mimic $V_2^N$. With this, we can now get:
\begin{align}
& N\left( R_1 + R_2 - \epsilon_N \right) \overset{(a)}\leq I\left( W_1; Y_1^N \right) + I\left( W_2; Y_2^N \right) \nonumber \\
& \overset{(b)}= h\left( Y_1^N \right) + h\left( Y_2^N \right) - h\left( S_1^N| W_2\right) - h\left( S_2^N| W_1\right) \nonumber \\
& = I\left( S_1^N; W_2 \right) + I\left( S_2^N; W_1 \right) - h\left( S_1^N| Y_1^N \right) - h\left( S_2^N| Y_2^N \right) \nonumber \\
& + \underbrace{h\left( Y_1^N| S_1^N \right) + h\left( Y_2^N| S_2^N \right)}_{T} \nonumber \\
& \overset{(c)}\leq T + I\left( S_1^N, \tilde{Y}_1^N,W_1 ; W_2 \right) + I\left( S_2^N, \tilde{Y}_2^N,W_2; W_1 \right) \nonumber\\
& - h\left( S_1^N| Y_1^N, W_1, \tilde{Y}_1^N \right) - h\left( S_2^N| Y_2^N, W_2, \tilde{Y}_2^N \right) \nonumber \\
& \overset{(d)}= T + I\left( \tilde{Y}_1^N ; W_2|W_1 \right) + I\left( S_1^N ; W_2|W_1, \tilde{Y}_1^N \right) \nonumber \\
& + I\left( \tilde{Y}_2^N ; W_1|W_2 \right) + I\left( S_2^N ; W_1|W_2, \tilde{Y}_2^N \right) \nonumber \\
& - h\left( Z_1^N| S_1^N, W_2, \tilde{Y}_2^N \right) - h\left( Z_2^N| S_2^N, W_1, \tilde{Y}_1^N \right) \nonumber \\
& \overset{(e)}= \underbrace{T - h\left( Z_1^N \right) - h\left( Z_2^N \right)}_{T^\prime} + I\left( \tilde{Y}_1^N ; W_2|W_1 \right) \nonumber \\
& + I\left( \tilde{Y}_2^N ; W_1|W_2 \right) + I\left( Z_2^N;S_2^N|W_1,\tilde{Y}_1^N \right) \nonumber \\
& + I\left( Z_1^N;S_1^N|W_2,\tilde{Y}_2^N \right) - h\left( Z_1^N| W_1, W_2, \tilde{Y}_2^N \right) \nonumber \\
& + h\left( Z_1^N \right) - h\left( Z_2^N| W_1, W_2, \tilde{Y}_1^N \right) + h\left( Z_2^N \right)  \nonumber \\
& \overset{(f)}= T^\prime + I\left( \tilde{Y}_1^N ; W_2|W_1 \right) + I\left( \tilde{Y}_2^N ; W_1|W_2 \right) \nonumber 
\end{align}
\begin{align}
& + I\left( Z_2^N;S_2^N|W_1,\tilde{Y}_1^N \right) + I\left( Z_1^N;S_1^N|W_2,\tilde{Y}_2^N \right) \nonumber \\
& + I\left( Z_1^N; \tilde{Y}_2^N|W_1, W_2 \right) + I\left( Z_2^N;\tilde{Y}_1^N | W_1, W_2\right) \nonumber \\
& = T^\prime + I\left( \tilde{Y}_1^N ; W_2,Z_2^N|W_1 \right) + I\left( \tilde{Y}_2^N ; W_1,Z_1^N|W_2 \right) \nonumber \\
& + I\left( Z_2^N;S_2^N|W_1,\tilde{Y}_1^N \right) + I\left( Z_1^N;S_1^N|W_2,\tilde{Y}_2^N \right) \nonumber \\
& = T^\prime + I\left( \tilde{Y}_1^N ; W_2|W_1, Z_2^N \right) + I\left( \tilde{Y}_2^N ; W_1|W_2, Z_1^N \right) \nonumber\\
& + I\left( \tilde{Y}_1^N,S_2^N ; Z_2^N|W_1 \right)  + I\left( \tilde{Y}_2^N,S_1^N ; Z_1^N|W_2 \right) \nonumber \\
& \overset{(g)}\leq T^\prime + I\left( \tilde{Y}_1^N ; W_2|W_1, Z_2^N \right) + I\left( \tilde{Y}_2^N ; W_1|W_2, Z_1^N \right) \nonumber \\
& + I\left( Z_2^N ; \tilde{Y}_1^N,W_2,\tilde{Y}_2^N,Z_1^N  |W_1 \right)  \nonumber\\
& + I\left( Z_1^N ; \tilde{Y}_2^N,W_1,\tilde{Y}_1^N,Z_2^N  |W_2 \right) \nonumber \\
& \overset{(h)}= T^\prime + I\left( \tilde{Y}_1^N ; W_2|W_1, Z_2^N \right) + I\left( \tilde{Y}_2^N ; W_1|W_2, Z_1^N \right) \nonumber \\
& + I\left( \tilde{Y}_2^N ; Z_2^N |W_1,W_2,Z_1^N \right)  + I\left( \tilde{Y}_1^N ; Z_1^N |W_1,W_2,Z_2^N \right) \nonumber \\
& + I\left( \tilde{Y}_1^N ; Z_2^N |W_1,W_2,Z_1^N,\tilde{Y}_2^N \right) \nonumber \\
& + I\left( \tilde{Y}_2^N ; Z_1^N |W_1,W_2,Z_2^N,\tilde{Y}_1^N \right) \nonumber \\
& \overset{(i)}= T^\prime + I\left( \tilde{Y}_1^N ; W_2|W_1, Z_2^N \right) + I\left( \tilde{Y}_2^N ; W_1|W_2, Z_1^N \right) \nonumber \\
& + I\left( \tilde{Y}_2^N ; Z_2^N |W_1,W_2,Z_1^N \right)  + I\left( \tilde{Y}_1^N ; Z_1^N |W_1,W_2,Z_2^N \right)  \nonumber \\
& = h\left( Y_1^N| S_1^N \right) - h\left( Z_1^N \right) + h\left( Y_2^N| S_2^N \right) - h\left( Z_2^N \right) \nonumber \\
& + I\left( \tilde{Y}_1^N ; W_2,Z_1^N |W_1, Z_2^N \right) + I\left( \tilde{Y}_2^N ; W_1,Z_2^N|W_2, Z_1^N \right) \nonumber \\
& \leq \sum_{i=1}^N{\left[ h\left( Y_{1i}| S_{1i} \right) - h\left( Z_{1i} \right) \right]} + \sum_{i=1}^N{\left[ h\left( Y_{2i}| S_{2i} \right) - h\left( Z_{2i} \right) \right]} \nonumber\\
& + H\left( \tilde{Y}_1^N|W_1, Z_2^N \right) + H\left( \tilde{Y}_2^N|W_2, Z_1^N \right) \nonumber \\
& \leq \sum_{i=1}^N{\left[ h\left( Y_{1i}| S_{1i} \right) - h\left( Z_{1i} \right) \right]} + \sum_{i=1}^N{\left[ h\left( Y_{2i}| S_{2i} \right) - h\left( Z_{2i} \right) \right]} \nonumber \\
& + \sum_{i=1}^N{H\left( \tilde{Y}_{1i}|X_{1i} \right)} + \sum_{i=1}^N{H\left( \tilde{Y}_{2i}|X_{2i} \right)},
\end{align}
where ($a$) follows from Fano's inequality; ($b$) follows from the fact that $h(Y_1^N|W_1)=h(S_2^N|W_1)$ and $h(Y_2^N|W_2)=h(S_1^N|W_2)$ (see Claim~\ref{claim-4} below); ($c$) follows from the non-negativity of mutual information and the fact that conditioning reduces entropy; ($d$) follows from the fact that $X_k^N$ is a function of $(W_k,\tilde{Y}_k^{N-1})$, and the fact that $W_1$ and $W_2$ are independent; ($e$) follows from the fact that $X_k^N$ is a function of $(W_k,\tilde{Y}_k^{N-1})$; ($f$) follows from the fact that $Z_k^N$ is independent of $W_1$ and $W_2$; ($g$) holds since $S_k^N$ is a function of $(W_k,\tilde{Y}_k^{N-1},Z_{3-k}^N)$; ($h$) holds since $W_1$, $W_2$, $Z_1^N$, and $Z_2^N$ are mutually independent; ($i$) holds since
\begin{align}
& I\left( \tilde{Y}_1^N ; Z_2^N |W_1,W_2,Z_1^N,\tilde{Y}_2^N \right) \nonumber \\
& \quad = \sum_{i=1}^N{I\left( \tilde{Y}_{1i} ; Z_2^N |W_1,W_2,Z_1^N,\tilde{Y}_2^N,\tilde{Y}_1^{i-1} \right)} \nonumber 
\end{align}
\begin{align}
\label{eq:zeromutual}
& \quad = \sum_{i=1}^N{I\left( \tilde{Y}_{1i} ; Z_2^N |W_1,W_2,Z_1^N,\tilde{Y}_2^N,\tilde{Y}_1^{i-1},X_2^N,X_1^i \right)} \nonumber \\
& \quad = \sum_{i=1}^N{I\left( \tilde{Y}_{1i} ; Z_2^N |W_1,W_2,Z_1^N,\tilde{Y}_2^N,\tilde{Y}_1^{i-1},X_2^N,X_1^i,Y_1^i \right)} \nonumber \\
& \quad = \sum_{i=1}^N{I\left( \tilde{Y}_{1i} ; Z_2^N |W_1,W_2,Z_1^N,\tilde{Y}_2^N,\tilde{Y}_1^{i},X_2^N,X_1^i,Y_1^i \right)} \nonumber \\
& \quad = 0.
\end{align}

\begin{claim}
\label{claim-4}
$h(S_1^{N}|W_2) = h(Y_2^{N}|W_2).$
\end{claim}
\begin{proof}
\begin{align*}
\begin{split}
h&(Y_2^{N}|W_2) = \sum h(Y_{2i}|Y_2^{i-1},W_2) \\
&\overset{(a)}{=} \sum h(S_{1i}|Y_2^{i-1},W_2) \\
&\overset{(b)}{=}  \sum h(S_{1i}|Y_2^{i-1},W_2,X_2^{i},S_1^{i-1}) \\
&\overset{(c)}{=}  \sum h(S_{1i}|W_2,S_1^{i-1}) =h(S_1^{N}|W_2),
\end{split}
\end{align*}
where ($a$) follows from the fact that $Y_{2i}$ is a function of $(X_{2i},S_{1i})$ and $X_{2i}$ is a function of $(W_2, Y_2^{i-1})$;
($b$) follows from the fact that $X_2^{i}$ is a function of $(W_2,Y_2^{i-1})$ and $S_1^{i-1}$ is a function of $(Y_2^{i-1},X_2^{i-1})$; ($c$) follows from Claim \ref{claim-3_Gaussian} (see below).
\end{proof}

\begin{claim}
\label{claim-3_Gaussian}
For all $i\geq 1$, $X_1^{i}$ is a function of $(W_1,S_2^{i-1})$ and $X_2^{i}$ is a function of $(W_2,S_1^{i-1})$.
\end{claim}
\begin{proof}
By symmetry, it is enough to prove only one. Notice that $X_2^{i}$ is a function of $(W_2, Y_2^{i-1})$ and $Y_{2}^{i-1}$ is a function of $(X_2^{i-1},S_1^{i-1}$). Hence, $X_2^{i}$ is a function of $(W_2, X_2^{i-1}, S_1^{i-1})$. Iterating the same argument, we conclude that $X_{2}^{i}$ is a function of $(W_2, X_{21}, S_1^{i-1})$. Since $X_{21}$ depends only on $W_2$, we complete the proof.
\end{proof}

From the above, we get
\begin{align}
\begin{split}
R_1 + R_2 & \leq  h(Y_1|S_1 ) - h(Z_1) + h(Y_2|S_2 ) - h(Z_2) \\
& + C_{\sf FB1} + C_{\sf FB2}.
\end{split}
\end{align}
where we have used the fact that $H(\tilde{Y}_{ki}|X_{ki}) \leq C_{\sf FB k}$ and conditioning reduces entropy.


Finally note that for $\rho = \mathbb{E}[X_1X_2^*]$, we have\footnote{$\rho$ captures the power gain that can be achieved by making the transmit signals correlated.}
{\small \begin{align}
\label{eq:akhar1}
h(Y_1|S_1 ) \leq \log 2 \pi e \left( 1 + (1- \rho^2)  {\sf INR}_{21} + \frac{ (1- \rho^2){\sf SNR}_1 }{1+ (1- \rho^2) {\sf INR}_{12}} \right).
\end{align}}
Using (\ref{eq:akhar1}), we get the desired upper bound in~(\ref{eq:outR12_1}).
\end{proof}

If we consider the symmetric channel gains, \emph{i.e.},
\begin{equation}
\begin{split}
& |h_{11}| = |h_{22}| = |h_d|, \\
& |h_{12}| = |h_{21}| = |h_c|,
\end{split}
\end{equation}
and
\begin{equation}
\begin{split}
& {\sf SNR}_1 = {\sf SNR}_2 = {\sf SNR} = |h_d|^2, \\
& {\sf INR}_{12} = {\sf INR}_{21} = {\sf INR} = |h_c|^2,
\end{split}
\end{equation}
we get the following outer-bound result.

\begin{corollary} \label{COL:Gaussian-sumrate}
The sum-rate capacity of the symmetric Gaussian IC with rate-limited feedback is included by the set $\mathcal{\bar{C}_{\sf sym}}$ of $R_1 + R_2$ satisfying
\begin{subequations}
\begin{eqnarray}
\label{eq:sum1}
R_1 + R_2 & \leq & 2 \log \left( 1 + \mathsf{SNR} \right) + C_{\sf FB1 } + C_{\sf FB2 } \\
\label{eq:sum2}
R_1 + R_2 & \leq & \log \left( 1 +  \frac{  \mathsf{SNR}}{ 1 +  \mathsf{INR}} \right) \\
& + &  \log \left( 1+ \mathsf{SNR} + \mathsf{INR} + 2 \sqrt{ \mathsf{SNR}  \cdot \mathsf{INR}} \right) \nonumber \\
\label{eq:sum4}
R_1 + R_2 & \leq & 2 \log \left( 1 + {\sf INR} + \frac{ {\sf SNR} }{1+ {\sf INR}} \right) \\
& + &C_{\sf FB1} + C_{\sf FB2}. \nonumber
\end{eqnarray}
\end{subequations}
\end{corollary}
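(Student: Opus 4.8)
The plan is to obtain each of the three symmetric sum-rate bounds directly from the general Gaussian outer-bound of Theorem~\ref{theorem:Gaussianouterbound} by specializing to symmetric channel gains and then eliminating the free parameter $\rho\in[0,1]$ through elementary monotonicity estimates. Recall that any achievable pair $(R_1,R_2)$ belongs to $\mathcal{\bar{C}}$ for \emph{some} $\rho\in[0,1]$, hence it satisfies all of (\ref{eq:outR1_1})--(\ref{eq:outR12R2}) simultaneously for that particular $\rho$; it therefore suffices to upper bound the relevant right-hand sides by quantities that do not depend on $\rho$.

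For (\ref{eq:sum1}) I would add the cutset-type bounds (\ref{eq:outR1_3}) and (\ref{eq:outR2_3}), which in the symmetric case read $R_1 \le \log(1+(1-\rho^2)\mathsf{SNR}) + C_{\sf FB2}$ and $R_2 \le \log(1+(1-\rho^2)\mathsf{SNR}) + C_{\sf FB1}$, and then use $1-\rho^2\le 1$. For (\ref{eq:sum4}) I would specialize (\ref{eq:outR12_1}) to symmetric gains, which gives $R_1+R_2 \le 2\log(1+(1-\rho^2)\mathsf{INR}+\frac{(1-\rho^2)\mathsf{SNR}}{1+(1-\rho^2)\mathsf{INR}}) + C_{\sf FB1}+C_{\sf FB2}$, and then observe that, writing $u=1-\rho^2$, the function $\phi(u)=u\,\mathsf{INR}+\frac{u\,\mathsf{SNR}}{1+u\,\mathsf{INR}}$ satisfies $\phi'(u)=\mathsf{INR}+\mathsf{SNR}/(1+u\,\mathsf{INR})^2>0$, so $\phi$ is increasing on $[0,1]$ and is maximized at $u=1$ (i.e. $\rho=0$), yielding exactly the right-hand side of (\ref{eq:sum4}).

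For (\ref{eq:sum2}) I would start from (\ref{eq:outR12_2}), which in the symmetric case becomes $R_1+R_2 \le \log(1+\frac{(1-\rho^2)\mathsf{SNR}}{1+(1-\rho^2)\mathsf{INR}}) + \log(1+\mathsf{SNR}+\mathsf{INR}+2\rho\sqrt{\mathsf{SNR}\cdot\mathsf{INR}})$, and bound the two logarithmic terms by \emph{different} estimates: the first term is handled by noting that $u\mapsto u\,\mathsf{SNR}/(1+u\,\mathsf{INR})$ is increasing, hence at most its value at $u=1$, namely $\mathsf{SNR}/(1+\mathsf{INR})$ (equivalently, $1+(1-\rho^2)\mathsf{INR}\ge(1-\rho^2)(1+\mathsf{INR})$); the second term is bounded using $\rho\le 1$, giving $1+\mathsf{SNR}+\mathsf{INR}+2\rho\sqrt{\mathsf{SNR}\cdot\mathsf{INR}}\le 1+\mathsf{SNR}+\mathsf{INR}+2\sqrt{\mathsf{SNR}\cdot\mathsf{INR}}$. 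Adding the two estimates gives (\ref{eq:sum2}).

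\textbf{Main obstacle.} There is no substantial difficulty here; the only step that requires a little care is (\ref{eq:sum2}), where one cannot simply plug a single value of $\rho$ into (\ref{eq:outR12_2}), because its right-hand side is not monotone in $\rho$ (the first logarithm decreases in $\rho$ while the second increases). The resolution is to bound the two terms separately, each by its own supremum over $\rho\in[0,1]$; this is legitimate in a converse argument since the bound only has to hold for the particular $\rho^{*}$ realized by the input distribution, and each term is then over-estimated by a $\rho$-free quantity. All three derivations ultimately reduce to checking the sign of a single-variable derivative, so the argument is complete modulo Theorem~\ref{theorem:Gaussianouterbound}.
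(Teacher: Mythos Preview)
Your proposal is correct and follows essentially the same route as the paper: the paper derives (\ref{eq:sum1}) from (\ref{eq:outR1_3})$+$(\ref{eq:outR2_3}), (\ref{eq:sum2}) from (\ref{eq:outR12_2}), and (\ref{eq:sum4}) from (\ref{eq:outR12_1}), in each case taking the supremum over $\rho\in[0,1]$. Your treatment is in fact more explicit than the paper's, which simply remarks that $\rho=0$ maximizes (\ref{eq:outR1_3}) and (\ref{eq:outR2_3}) and leaves the other two cases unstated; your observation that the two logarithms in (\ref{eq:outR12_2}) must be bounded at different endpoints of $[0,1]$ is exactly the point the paper glosses over.
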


\begin{proof}
The proof is straight forward and is a direct consequence of the bounds in (\ref{eq:outR1_3}), (\ref{eq:outR2_3}), (\ref{eq:outR12_1}), and (\ref{eq:outR12_2}). For instance, (\ref{eq:sum1}) is derived by combining (\ref{eq:outR1_3}) and (\ref{eq:outR2_3}) for $\rho=0$. Note that $\rho=0$ maximizes (\ref{eq:outR1_3}) and (\ref{eq:outR2_3}).
\end{proof}

\subsection{Acievability Strategy}

We first provide a brief outline of the achievability. Our achievable scheme is  based on block Markov encoding with backward decoding where the scheme is implemented over $B$ blocks. In each block (with the exception of the last two), new messages are transmitted. At the end of a block, each receiver creates a feedback signal and sends it back to its corresponding transmitter. This will provide each transmitter with part of the other user's information that caused interference. Each transmitter encodes this interfering message and transmit it to its receiver during a different block. Through this part of the transmitted signal, receivers will be able to complete the decoding of the previously received messages. During the last two blocks, no new messages will be transmitted and each transmitter provides its receiver with the interfering message coming from the other transmitter. Later, we let $B$ go to infinity to get our desired result.

As we have seen in Section~\ref{motivation}, each receiver may need to decode the superposition of the two codewords (corresponding to the other user's cooperative common message and part of its own private message). In order to accomplish this in the Gaussian case, we employ lattice codes.

\subsubsection{Lattice Coding Preliminaries}

We briefly go over some preliminaries on lattice coding and summerize the results that will be used later. A lattice is a discrete additive subgroup of $\mathbb{R}^n$. The fundamental volume $V_f(\Lambda)$ of a lattice $\Lambda$ is the reciprocal of the number of lattice points per unit volume.

Given integer $p$, denote the set of integers modulo $p$ by $\mathbb{Z}_p$. Let $\mathbb{Z}^n \rightarrow \mathbb{Z}_p^n: v \mapsto \bar{v}$ be the componentwise modulo $p$ operation over integer vectors. Also, let $C$ be a linear $(n,k)$ code over $\mathbb{Z}_p$. The lattice $\Lambda_C$ defined as
\begin{equation}
\Lambda_C = \{ v \in \mathbb{Z}^n : \bar{v} \in C \},
\end{equation}
is generated with respect to the linear code $C$ (see \cite{lattice} for details). In \cite{lattice}, it has been shown that there exists good lattice codes for point-to-point communication channels, \emph{i.e.}, codes that achieve a rate close to the capacity of the channel with arbitrary small decoding error probability. We summarize the result here.

Consider a point-to-point communication scenario over an additive noise channel
\begin{equation}
\label{eq:p2p}
Y = X + Z,
\end{equation}
where $X$ is the transmitted signal with power constraint $P$, $Y$ is the received signal and $Z$ is the additive noise process with zero mean and variance $\sigma^2$.

A set $\mathcal{B}$ of linear codes over $\mathbb{Z}_p$ is called balanced if every nonzero element of $\mathbb{Z}_p^n$ is contained in the same number of codes in $\mathcal{B}$. Define $\mathcal{L}_\mathcal{B}$ as
\begin{equation}
\label{eq:L}
\mathcal{L}_\mathcal{B} = \{ \Lambda_C : C \in \mathcal{B} \}.
\end{equation}

\begin{lemma}[\cite{lattice}]
Consider a point-to-point additive noise channel described in (\ref{eq:p2p}). Let $\mathcal{B}$ be a balanced set of linear $(n, k)$ codes over $\mathbb{Z}_p$. Averaged over all lattices from the set $\mathcal{L}_\mathcal{B}$ defined in (\ref{eq:L}), each scaled by $\gamma > 0$ and with a fundamental volume $V$, we have that for any $\delta > 0$, the
average probability of decoding error is bounded by
\begin{equation}
\label{}
\bar{P}_e < (1+\delta) \frac{n\frac{1}{2}\log \left( 2 \pi e \sigma^2 \right)}{V},
\end{equation}
for sufficiently large $p$ and small $\gamma$ such that $\gamma^n p^{n-k} = V$.
\end{lemma}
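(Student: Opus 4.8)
The plan is to prove the lemma by a Minkowski--Hlawka-type averaging argument over the Construction-A ensemble $\mathcal{L}_\mathcal{B}$, together with a Euclidean nearest-lattice-point decoder. Since the bound does not involve the power $P$, this is really the \emph{packing} statement for the unrestricted Gaussian channel; the power constraint is handled separately by intersecting the lattice with a shaping region and plays no role here. First I would fix the decoder: on receiving $Y=X+Z$ it outputs the point of the $\gamma$-scaled lattice $\Lambda$ closest to $Y$ in Euclidean norm, with ties counted as errors. By the translation symmetry of a lattice it suffices to assume the origin was transmitted, so an error occurs only if there is some $\lambda\in\Lambda\setminus\{0\}$ with $\|Z-\lambda\|\le\|Z\|$. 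Fix a radius $r$ with $r^2=n\sigma^2(1+\eta)$ for a small $\eta>0$; by the weak law of large numbers $\Pr(\|Z\|^2>r^2)\to 0$, and conditioned on $\|Z\|\le r$ every offending $\lambda$ lies in the ball $B(Z,\|Z\|)$, whose volume is at most that of $B(0,r)$. A union (Markov) bound therefore gives
\[
\bar P_e \;\le\; \Pr(\|Z\|>r) \;+\; \mathbb{E}_{Z}\!\Big[\mathbf{1}\{\|Z\|\le r\}\ \mathbb{E}_{\Lambda}\,\#\big((\Lambda\setminus\{0\})\cap B(Z,\|Z\|)\big)\Big].
\]

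The core step is the Minkowski--Hlawka-type estimate: for every bounded measurable $S\subset\mathbb{R}^n$, averaging over the lattices of $\mathcal{L}_\mathcal{B}$ scaled by $\gamma$ and having fundamental volume $V=\gamma^n p^{n-k}$,
\[
\frac{1}{|\mathcal{B}|}\sum_{C\in\mathcal{B}}\#\big((\gamma\Lambda_C\setminus\{0\})\cap S\big)\;\le\;(1+\delta)\,\frac{\operatorname{vol}(S)}{V},
\]
for all $p$ large enough, with $\gamma$ determined by $\gamma^n p^{n-k}=V$. To see this I would write a nonzero point of $\gamma\Lambda_C$ as $\gamma w$ with $w\in\mathbb{Z}^n\setminus\{0\}$ and $\bar w\in C$, and split the count according to whether $\bar w=0$, i.e.\ $w\in p\mathbb{Z}^n$, or $\bar w\neq 0$. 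For $\bar w\neq 0$ the \emph{balanced} property of $\mathcal{B}$ says that exactly a fraction $(p^k-1)/(p^n-1)\le p^{k-n}$ of the codes contain $\bar w$, so this part of the average is at most $p^{k-n}\cdot\#\{w\in\mathbb{Z}^n:\gamma w\in S\}$; since $\gamma^n\#\{w:\gamma w\in S\}\to\operatorname{vol}(S)$ as $\gamma\to 0$ (a Riemann-sum estimate, valid uniformly over the family of balls $B(Z,\|Z\|)$ with $\|Z\|\le r$), this is $\le(1+\delta)p^{k-n}\operatorname{vol}(S)/\gamma^n=(1+\delta)\operatorname{vol}(S)/V$. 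For $w\in p\mathbb{Z}^n\setminus\{0\}$ the points $\gamma w$ form the coarse lattice $\gamma p\,\mathbb{Z}^n$; the constraint forces $\gamma p=V^{1/n}p^{k/n}\to\infty$, so for any fixed bounded $S$ this coarse lattice eventually has no nonzero point in $S$, and that contribution is $0$ for $p$ large. Plugging $S=B(Z,\|Z\|)$ back in gives $\mathbb{E}_{\Lambda}\,\#\big((\Lambda\setminus\{0\})\cap B(Z,\|Z\|)\big)\le(1+\delta)\operatorname{vol}(B(0,r))/V$ whenever $\|Z\|\le r$.

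This averaging step is where I expect the real work to be: the two limits ($p\to\infty$ and $\gamma\to 0$) must be taken simultaneously while $V$ stays fixed, and one must check that both the ``multiples of $p$'' terms and the Riemann-sum error are negligible uniformly over the balls that actually arise in the previous display. Granting it, the proof concludes by combining the two bounds,
\[
\bar P_e \;\le\; \Pr(\|Z\|>r) + (1+\delta)\,\frac{\operatorname{vol}(B(0,r))}{V},
\]
where the first term vanishes as $n\to\infty$. Finally, using $\operatorname{vol}(B(0,r))=\pi^{n/2}r^n/\Gamma(n/2+1)$ with $r^2=n\sigma^2(1+\eta)$ and Stirling's formula, $\tfrac1n\log\operatorname{vol}(B(0,r))\to\tfrac12\log\!\big(2\pi e\sigma^2(1+\eta)\big)$; letting $\eta\downarrow 0$ and absorbing the residual factor into $(1+\delta)$ yields exactly the bound in the statement, namely the noise-ball volume $2^{\,n\frac12\log(2\pi e\sigma^2)}$ divided by the fundamental volume $V$.
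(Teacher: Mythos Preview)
The paper does not actually prove this lemma: immediately after stating it, the authors simply write ``See~\cite{lattice} for the proof.'' So there is no in-paper argument to compare against; the result is imported wholesale from the cited reference (Loeliger's averaging bounds for Construction-A lattices). Your proposal is essentially a reconstruction of that standard proof: Euclidean nearest-lattice-point decoding, reduction by translation invariance to the event that a nonzero lattice point falls in the noise ball, and the Minkowski--Hlawka-type average over the balanced Construction-A ensemble, splitting into the coarse sublattice $\gamma p\mathbb{Z}^n$ (which eventually misses any bounded set) and the remaining cosets (where the balanced property gives the $p^{k-n}$ factor). That is exactly the route taken in the cited source, so in that sense you are aligned with what the paper relies on.

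Two small remarks. First, the numerator as printed, $n\frac{1}{2}\log(2\pi e\sigma^2)$, is almost certainly a typographical slip for $2^{\,n\frac{1}{2}\log(2\pi e\sigma^2)}$, i.e.\ the asymptotic volume of the noise-typical ball; you have (correctly) proved the latter, not the literal printed expression. Second, be careful about what is asymptotic in what: the Minkowski--Hlawka averaging step is a statement for \emph{fixed} $n$ with $p\to\infty$ and $\gamma\to 0$ under $\gamma^n p^{n-k}=V$, and yields $\bar P_e\le \Pr(\|Z\|>r)+(1+\delta)\operatorname{vol}(B(0,r))/V$ for any $r$. Your subsequent use of the weak law of large numbers and Stirling to replace $\operatorname{vol}(B(0,r))$ by $2^{n\frac{1}{2}\log(2\pi e\sigma^2)}$ and to kill $\Pr(\|Z\|>r)$ is an $n\to\infty$ step layered on top. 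The cited reference keeps these two limits conceptually separate, which is worth doing in a clean write-up.
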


See~\cite{lattice} for the proof. The next lemma describes the existence of a good lattice code for a point-to-point AWGN channel.

\begin{lemma}[\cite{lattice}]
\label{lemma:lattice}
Consider a point-to-point additive noise channel described in (\ref{eq:p2p}) such that the transmitter satisfies a power constraint of $P$. Then, we can choose a lattice $\Lambda$ generated using construction A, a shift $s$\footnotemark \footnotetext{Shift $s$ is a vector in $\mathbb{R}^n$ and it is required in order to prove of existence of good lattice codes, see \cite{lattice} for more details.} and a shaping region $S$\footnotemark \footnotetext{We need to consider the intersection of a lattice with some shaping region $S \subset \mathbb{R}^n$ to satisfy the power constraint.} such that the codebook $(\Lambda + s) \cap S$ achieves a rate $R$ with arbitrarily small probability of error if
\begin{equation}
\label{}
R \leq \frac{1}{2} \log \left( \frac{P}{\sigma^2} \right).
\end{equation}
\end{lemma}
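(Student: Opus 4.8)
The plan is to establish the lemma by combining the Construction-A lattice ensemble with the averaged decoding-error bound of the preceding lemma, and then attaching a spherical shaping region both to meet the power constraint and to count codewords by a volume argument.

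First I would fix the ensemble and the decoder. Pick a large prime $p$ and a small scaling factor $\gamma$, and consider the ensemble $\mathcal{L}_{\mathcal{B}}$ of Construction-A lattices obtained from a balanced family of linear $(n,k)$ codes over $\mathbb{Z}_{p}$, each scaled by $\gamma$ so that the common fundamental volume is $V=\gamma^{n}p^{\,n-k}$. For a transmitted lattice point $x$ and received vector $Y=x+Z$ with $Z\sim\mathcal{N}(0,\sigma^{2}\mathbf{I})$, I would use a bounded-distance (``ambiguity'') decoder that outputs the unique lattice point within Euclidean distance $r$ of $Y$, with $r=\sqrt{n}\,\sigma(1+\epsilon_{n})$ and $\epsilon_{n}\downarrow 0$ slowly, and declares an error otherwise. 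A decoding error then forces either $\|Z\|>r$, whose probability vanishes by the law of large numbers, or the presence of another lattice point within distance $r$ of $Y$; the preceding lemma (a Minkowski--Hlawka-type averaging over $\mathcal{L}_{\mathcal{B}}$ together with a uniformly random shift $s$) bounds the ensemble-average probability of the second event by $(1+\delta)$ times the ratio $\omega_{n}r^{n}/V$, where $\omega_{n}=\pi^{n/2}/\Gamma(n/2+1)$ is the volume of the unit ball in $\mathbb{R}^{n}$. Hence $\bar{P}_{e}\to 0$ whenever $V\ge(1+\delta)\,\omega_{n}r^{n}$, and by Stirling's formula $\omega_{n}r^{n}=(2\pi e\sigma^{2})^{n/2}\,2^{n\,o(1)}$.

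Next I would attach the shaping region and read off the rate. To get a finite codebook meeting the power constraint, intersect the shifted lattice with the ball $S=\{u\in\mathbb{R}^{n}:\|u\|\le\sqrt{nP}\}$, whose volume is $\omega_{n}(nP)^{n/2}=(2\pi eP)^{n/2}\,2^{n\,o(1)}$. A standard point-counting estimate (again averaged over $s$, then fixing a good $s$) shows that the number of codewords concentrates around $\mathrm{Vol}(S)/V$, so the achievable rate obeys
\[
R=\frac{1}{n}\log\frac{\mathrm{Vol}(S)}{V}\ \ge\ \frac{1}{2}\log\frac{2\pi eP}{(1+\delta)\,2\pi e\sigma^{2}}-o(1)=\frac{1}{2}\log\frac{P}{\sigma^{2}}-o(1).
\]
Letting $\delta,\epsilon_{n}\to 0$ and $n\to\infty$, and invoking the existence of at least one lattice--shift pair achieving error probability no larger than the ensemble average, yields a code achieving every $R<\frac{1}{2}\log(P/\sigma^{2})$ with vanishing error probability. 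Two routine clean-ups finish the argument: codewords near the boundary of $S$ can momentarily exceed power $P$, but since $\mathrm{Vol}\big(S\setminus(1-\eta)S\big)/\mathrm{Vol}(S)\to 0$ one may shrink $S$ by a vanishing factor (or simply invoke the block-average form of the power constraint); and an expurgation step discarding the worst half of the codewords upgrades the average error bound to a maximal one at negligible rate cost.

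I expect the main obstacle to be making the Minkowski--Hlawka averaging over the Construction-A ensemble fully rigorous while reconciling it with (i) a continuous Gaussian noise rather than a bounded one, which I would handle by peeling off the low-probability event $\{\|Z\|>r\}$ and applying the ball-counting bound only on its complement, and (ii) the finite shaping region $S$, which is harmless because restricting the infinite lattice to $S$ can only delete candidate confusing points and therefore cannot increase any per-codeword error probability. Everything else reduces to volume estimates via Stirling together with the standard expurgation trick.
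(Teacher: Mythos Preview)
The paper does not actually prove this lemma; it simply states ``See~\cite{lattice} for the proof.'' Your sketch is the standard argument from that reference (Loeliger/de~Buda style): Minkowski--Hlawka averaging over the Construction-A ensemble with a random shift, a typicality radius $r\approx\sqrt{n}\,\sigma$, and a volume count against a spherical shaping region of radius $\sqrt{nP}$. So there is nothing to compare---your proposal is essentially the proof the paper is pointing to, and the steps you outline (peel off $\{\|Z\|>r\}$, bound the lattice-point-in-ball event by $(1+\delta)\omega_n r^n/V$, count codewords as $\mathrm{Vol}(S)/V$, expurgate) are the right ones.

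One small remark: the preceding lemma as written in the paper has a garbled right-hand side (it reads $(1+\delta)\,n\cdot\tfrac{1}{2}\log(2\pi e\sigma^2)/V$, which is dimensionally wrong). You have correctly read it as the intended bound $(1+\delta)\,\omega_n r^n/V\approx (1+\delta)(2\pi e\sigma^2)^{n/2}/V$, which is what is needed for the rate computation to go through.
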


In other words, Lemma~\ref{lemma:lattice} describes the existence of a lattice code with sufficient codewords. See~\cite{lattice} for the proof. For a more comprehensive review of lattice codes see~\cite{lattice,poltyrev1994coding,erez2005lattices}.

\begin{remark}
In this paper, we consider complex AWGN channels. Similar to Lemma~\ref{lemma:lattice}, one can show that using lattice codes, a rate of $\log \left( \frac{P}{\sigma^2} \right)$ is achievable in the complex channel setting.
\end{remark}

\subsubsection{Acievability Strategy for $C_{\sf FB1} = C_{\sf FB}$ and $C_{\sf FB2} = 0$}
\label{achievability}

We describe our strategy for the extreme case where $C_{\sf FB1} = C_{\sf FB}$ and $C_{\sf FB2} = 0$ (interchanging user IDs, one can get similar results for $C_{\sf FB2} = C_{\sf FB}$ and $C_{\sf FB1} = 0$). Our strategy for any other feedback configuration will be based on a combination of the strategies for these extreme cases.

{\bf Codebook Generation and Encoding:} The communication strategy consists of $B$ blocks, each of length $N$ channel uses. In block $b$, $b = 1,2,\ldots,B-2$, transmitter $1$ has four messages $W_1^{(1,b)}, W_1^{(2,b)}, W_1^{(3,b)}$ and $W_1^{(4,b)}$, where $W_1^{(i,b)} \in \{ 1, 2, \ldots, 2^{N R_1^{(i)}} \}$. Out of these four messages, $W_1^{(1,b)}, W_1^{(2,b)}$ and $W_1^{(4,b)}$ are new messages and in particular $W_1^{(1,b)}$ and $W_1^{(2,b)}$ form the private message of transmitter $1$ while $W_1^{(4,b)}$ is the non-cooperative message (as it will be clarified shortly for the feedback strategy, the reason for splitting the private message of transmitter $1$ into two parts is that  in order to be able to use lattice codes, we would like the codeword corresponding to the cooperative common message of transmitter $2$ to be received at the same power level as part of the codeword corresponding to the parivate message of transmitter $1$). We will describe $W_1^{(3,b)}$ when we explain the feedback strategy. On the other hand, transmitter $2$ has three new independent messages $W_2^{(1,b)}, W_2^{(2,b)}$ and $W_2^{(3,b)}$, the private, the cooperative common, and the non-cooperative common message of transmitter $2$ respectively.

At transmitter $k$, message $W_k^{(i,b)}$ is mapped to a Gaussian codeword $X_k^{(i,b)}$ picked from a codebook of size $2^{N R_k^{(i)}}$ and any element of this codebook is drawn i.i.d. from $\mathcal{CN}(0,P_k^{(i)})$, $(k,i) \in \{ (1,1), (1,3), (1,4), (2,1) , (2,3) \}$. For notational simplicity, we have removed the superscript $N$.

Message $W_k^{(2,b)}$ is mapped to $X_k^{(2,b)}$ encoded by lattice $\Lambda_k^{(2,b)}$ with shift $s_k^{(2,b)}$ and spherical shaping region $S_k^{(2,b)}$. This gives a codebook of size $2^{N R_k^{(2)}}$ with power constraint of $P_k^{(2)}$, $k = 1,2$. Denote this codebook by  $(\Lambda_k^{(2,b)} + s_k^{(2,b)}) \cap S_k^{(2,b)}$.

Transmitter $k$ will superimpose all of its transmitted signals to create $X_k^{(b)}$, its transmitted signal during block $b$, \emph{i.e.}, $X_1^{(b)} = X_1^{(1,b)} + X_1^{(2,b)} + X_1^{(3,b)} + X_1^{(4,b)}$ and $X_2^{(b)} = X_2^{(1,b)} + X_2^{(2,b)} + X_2^{(3,b)}$.

The power assignments should be such that they are non-negative and satisfy the power constraint at each transmitter:
\begin{equation}
\label{}
\begin{split}
&P_1 = P_1^{(1)} + P_1^{(2)} + P_1^{(3)} + P_1^{(4)} \leq 1, \\
&P_2 = P_2^{(1)} + P_2^{(2)} + P_2^{(3)} \leq 1.
\end{split}
\end{equation}

{\bf Feedback Strategy:} Our feedback strategy is inspired by the motivating example in Section~\ref{motivation}. Remember that in this example, receiver $2$ had to feed back the superposition of the two codewords (corresponding to transmitter 1's cooperative common message and part of its private message). To realize this in the Gaussian case, we incorporate lattice coding with appropriate power assignment as part of our strategy.

We set ${\sf SNR} P_1^{(2)} = {\sf INR} P_2^{(2)}$, so that $X_1^{(2,b)}$ and $X_2^{(2,b)}$ arrive at the same power level at receiver $1$ and therefore $h_{d} X_1^{(2,b)} + h_{c} X_2^{(2,b)}$ is a lattice point. We refer to this lattice index as $I_{\Lambda_{1,2}}^{(b)}$. Receiver $1$ then feeds $\left( I_{\Lambda_{1,2}}^{(b)} \hspace{1mm} {\mathbf{mod}} \hspace{1mm} 2^{N C_{\sf FB}} \right)$ back to transmitter $1$.

Given $\left( I_{\Lambda_{1,2}}^{(b)} \hspace{1mm} {\mathbf{mod}} \hspace{1mm} 2^{N C_{\sf FB}} \right)$, transmitter $1$ removes $h_{d} X_1^{(2,b)}$ and decodes the message index of $W_2^{(2,b)}$. This can be done as long as the total number of lattice points for either of the two aligned messages is less than $2^{ N C_{\sf FB}}$, \emph{i.e.}, $R_1^{(2,b)}, R_2^{(2,b)} \leq C_{\sf FB}$. Since the feedback transmission itself lasts a block, we set $W_1^{(3,b+2)} = W_2^{(2,b)}$.

{\bf Decoding:} For notational simplicity, we ignore the block index and from our description it is clear whether the two signals belong to the same block or different ones. We also use the following shorthand notation:
\begin{equation}
P_k^{(1:j)} = P_k^{(1)} + P_k^{(2)} + \ldots + P_k^{(j)} \hspace{4mm} k = 1, 2.
\end{equation}

Our achievable scheme employs different decoding orders depending on the channel gains. In other words, based on the channel gains the number of required messages to achieve the desired sum-rate might vary. In fact based on the channel gains, it might be sufficient to consider fewer messages than suggested above. In such cases, we assume the unnecessary messages to be deterministic (\emph{i.e.}, the corresponding rate to be zero). In particular, we have three different cases.

\noindent  {\bf Case (a) $\log \left( {\sf INR} \right) \leq \frac{1}{2} \log \left( {\sf SNR} \right)$:}

In this case, we set $R_1^{(4)} = R_2^{(3)} = 0$. In other words, $W_1^{(4)}$ and $W_2^{(3)}$ are deterministic messages. We then get
\begin{equation}
\label{}
Y_1 = h_d \left( X_1^{(1)} + X_1^{(2)} + X_1^{(3)} \right) + h_c \left( X_2^{(1)} + X_2^{(2)} \right) + Z_1.
\end{equation}

At the end of each block, receiver $1$ first decodes $X_1^{(3)}$ by treating all other codewords as noise. $X_1^{(3)}$ can be decoded with small error probability if
\begin{equation}
R_1^{(3)} \leq \log \left( 1 + \frac{ {\sf SNR} P_1^{(3)}}{1 + {\sf INR} P_2 + {\sf SNR} P_1^{(1:2)}} \right).
\end{equation}

It then removes $h_d X_1^{(3)}$ from the received signal and decodes $X_1^{(1)}$ by treating other codewords as noise. $X_1^{(1)}$ is decodable at receiver $1$ with arbitrary small error probability if
\begin{equation}
R_1^{(1)} \leq \log \left( 1 + \frac{ {\sf SNR} P_1^{(1)}}{1 + {\sf INR} P_2 + {\sf SNR} P_1^{(2)} } \right).
\end{equation}

After removing $h_d X_1^{(1)}$, receiver $1$ has access to $h_{d} X_1^{(2)} + h_{c} X_2^{(2)} + h_{c} X_2^{(1)} + Z_1$. Since we have set ${\sf SNR} P_1^{(2)} = {\sf INR} P_2^{(2)}$, $h_{d} X_1^{(2)} + h_{c} X_2^{(2)}$ is a lattice point with some index $I_{\Lambda_{1,2}}^{(b)}$. Receiver $1$ decodes $I_{\Lambda_{1,2}}^{(b)}$ by treating other codewords as noise, and sends back $\left( I_{\Lambda_{1,2}}^{(b)} \hspace{1mm} {\mathbf{mod}} \hspace{1mm} 2^{N C_{\sf FB}} \right)$ to transmitter $1$. From Lemma~\ref{lemma:lattice}, decoding with arbitrary small error probability is feasible if
\begin{equation}
\begin{split}
R_1^{(2)} &\leq \left[ \log \left( \frac{ {\sf SNR} P_1^{(2)}}{1 + {\sf INR} P_2^{(1)} } \right) \right]^+,\\
R_2^{(2)} &\leq \left[ \log \left( \frac{ {\sf INR} P_2^{(2)}}{1 + {\sf INR} P_2^{(1)} } \right) \right]^+,
\end{split}
\end{equation}
Here $[\cdot]^+ = \max \{ \cdot, 0 \}$.

The decoding at receiver $2$ proceeds as follows. At the end of each block, receiver $2$ removes $h_c X_1^{(3)}$ from its received signal. Note that $X_1^{(3)}$ is in fact a function of $W_2^{(2,b-2)}$ and thus it is known to receiver $2$ (assuming successful decoding in the previous blocks). Therefore, after removing $h_c X_1^{(3)}$, we get
\begin{equation}
\label{}
Y_2 = h_d \left( X_2^{(1)} + X_2^{(2)} \right) + h_c \left( X_1^{(1)} + X_1^{(2)} \right) + Z_2.
\end{equation}

Receiver $2$ now decodes $X_2^{(2)}$ and $X_2^{(1)}$ by treating other codewords as noise. This can be done with arbitrary small error probability if
\begin{equation}
\begin{split}
R_2^{(2)} &\leq \left[ \log \left( \frac{ {\sf SNR} P_2^{(2)}}{1 + {\sf SNR} P_2^{(1)} + {\sf INR} P_1^{(1:2)} } \right) \right]^+, \\
R_2^{(1)} &\leq \log \left( 1 + \frac{ {\sf SNR} P_2^{(1)}}{1 + {\sf INR} P_1^{(1:2)} } \right).
\end{split}
\end{equation}

The decoding strategy presented above describes a set of constraints on the rates, which is summarized as follows:
\begin{equation}
\label{}
\left\{ \begin{array}{ll}
R_1^{(1)} &\leq \log \left( 1 + \frac{ {\sf SNR} P_1^{(1)}}{1 + {\sf INR} P_2 + {\sf SNR} P_1^{(2)} } \right) \\
R_1^{(2)} &\leq \min \left\{ \log \left( \frac{ {\sf SNR} P_1^{(2)}}{1 + {\sf INR} P_2^{(1)} } \right)^+, C_{\sf FB} \right\} \\
R_1^{(3)} &\leq \log \left( 1 + \frac{ {\sf SNR} P_1^{(3)}}{1 + {\sf INR} P_2 + {\sf SNR} P_1^{(1:2)}} \right) \\
R_2^{(1)} &\leq \log \left( 1 + \frac{ {\sf SNR} P_2^{(1)}}{1 + {\sf INR} P_1^{(1:2)} } \right) \\
R_2^{(2)} &\leq \min \left\{ \left[\log \left( \frac{ {\sf INR} P_2^{(2)}}{1 + {\sf INR} P_2^{(1)} } \right) \right]^+, C_{\sf FB} \right\}
\end{array} \right.
\end{equation}

Therefore, we can achieve a sum-rate $R_{\sf SUM}^{(a)} = R_1^{(1)} + R_1^{(2)} + R_2^{(1)} + R_2^{(2)}$, arbitrary close to\footnotemark \footnotetext{Note that $X_1^{(3,b)}$ is a function of the cooperative common message of transmitter $2$, \emph{i.e.}, $W_2^{(2,b-2)}$, hence, it does not contain any new information and it is not considered in the sum-rate.}
\begin{align}
\label{eq:rsuma}
R_{\sf SUM}^{(a)} &=  \log \left( 1 + \frac{ {\sf SNR} P_1^{(1)}}{1 + {\sf INR} P_2 + {\sf SNR} P_1^{(2)} } \right) \nonumber \\
&\; + \min \left\{ \left[ \log \left( \frac{ {\sf SNR} P_1^{(2)}}{1 + {\sf INR} P_2^{(1)} } \right)\right]^+, C_{\sf FB} \right\} \nonumber \\
&\; + \log \left( 1 + \frac{ {\sf SNR} P_2^{(1)}}{1 + {\sf INR} P_1^{(1:2)} } \right) \nonumber \\
&\; + \min \left\{ \left[\log \left( \frac{ {\sf INR} P_2^{(2)}}{1 + {\sf INR} P_2^{(1)} } \right)\right]^+, C_{\sf FB} \right\}.
\end{align}

\noindent {\bf Case (b) $\frac{1}{2} \log \left( {\sf SNR} \right) \leq \log \left( {\sf INR} \right) \leq \frac{2}{3} \log \left( {\sf SNR} \right)$:}
In this case, we have
\begin{align}
\label{}
Y_1 &= h_d \left( X_1^{(1)} + X_1^{(2)} + X_1^{(3)} + X_1^{(4)} \right) \nonumber \\
&+ h_c \left( X_2^{(1)} + X_2^{(2)} + X_2^{(3)} \right) + Z_1.
\end{align}
At the end of each block, receiver $1$ first decodes $X_1^{(4)}$ by treating all other codewords as noise, and removes $h_d X_1^{(4)}$ from the received signal. This can be decoded with small error probability if
\begin{equation}
R_1^{(4)} \leq \log \left( 1 + \frac{ {\sf SNR} P_1^{(4)}}{1 + {\sf INR} P_2 + {\sf SNR} P_1^{(1:3)}} \right).
\end{equation}

Next, it decodes $X_1^{(3)}$ by treating other codewords as noise and removes $h_d X_1^{(3)}$ from the received signal. This can be decoded with arbitrary small error probability if
\begin{equation}
R_1^{(3)} \leq \log \left( 1 + \frac{ {\sf SNR} P_1^{(3)}}{1 + {\sf INR} P_2 + {\sf SNR} P_1^{(1:2)} } \right).
\end{equation}

We proceed by decoding the non-cooperative common message of transmitter $2$, \emph{i.e.}, $X_2^{(3)}$ by treating other codewords as noise. This can be decoded with arbitrary small error probability if
\begin{equation}
R_2^{(3)} \leq \log \left( 1 + \frac{ {\sf INR} P_2^{(3)}}{1 + {\sf INR} P_2^{(1:2)} + {\sf SNR} P_1^{(1:2)}} \right).
\end{equation}

It then removes $h_c X_2^{(3)}$ from the received signal, having now access to $h_d X_1^{(2)} + h_c X_2^{(2)} + h_d X_1^{(1)} + h_c X_2^{(1)} + Z_1$. We decode the lattice index of $h_d X_1^{(2)} + h_c X_2^{(2)}$, \emph{i.e.}, $I_{\Lambda_{1,2}}^{(b)}$, by treating other codewords as noise. It then sends back $\left( I_{\Lambda_{1,2}}^{(b)} \hspace{1mm} {\mathbf{mod}} \hspace{1mm} 2^{N C_{\sf FB}} \right)$ to transmitter $1$. From Lemma~\ref{lemma:lattice}, decoding with arbitrary small error probability is feasible if
\begin{equation}
\begin{split}
R_1^{(2)} &\leq \left[ \log \left( \frac{ {\sf SNR} P_1^{(2)}}{1 + {\sf INR} P_2^{(1)} + {\sf SNR} P_1^{(1)} } \right)\right]^+, \\
R_2^{(2)} &\leq \left[ \log \left( \frac{ {\sf INR} P_2^{(2)}}{1 + {\sf INR} P_2^{(1)} + {\sf SNR} P_1^{(1)} } \right)\right]^+.
\end{split}
\end{equation}

After decoding and removing $h_{d} X_1^{(2)} + h_{c} X_2^{(2)}$, receiver $1$ decodes $X_1^{(1)}$. This can be done with arbitrary small error probability if
\begin{equation}
R_1^{(1)} \leq \log \left( 1 + \frac{ {\sf SNR} P_1^{(1)}}{1 + {\sf INR} P_2^{(1)} } \right).
\end{equation}

Similar to the previous case, receiver $2$ removes $X_1^{(3)}$ from its received signal. The decoding at receiver $2$ proceeds as follows. Receiver $2$ decodes $X_2^{(3)}$ by treating other codewords as noise and removes $h_d X_2^{(3)}$ from the received signal. Next, $X_2^{(2)}$, the non-cooperative common message of transmitter $1$, will be decoded while treating other codewords as noise. Receiver $2$ removes $h_d X_2^{(2)}$ from the received signal and then, decodes $X_1^{(4)}$ by treating other codewords as noise. After removing $h_c X_1^{(4)}$, we now decode the private message of transmitter $2$, \emph{i.e.}, $X_2^{(1)}$. This can be done with arbitrary small error probability if
\begin{equation}
\begin{split}
R_2^{(3)} &\leq \log \left( 1 + \frac{ {\sf SNR} P_2^{(3)}}{1 + {\sf SNR} P_2^{(1:2)} + {\sf INR} ( P_1 - P_1^{(3)} ) } \right), \\
R_2^{(2)} &\leq \left[ \log \left( \frac{ {\sf SNR} P_2^{(2)}}{1 + {\sf SNR} P_2^{(1)} + {\sf INR} ( P_1 - P_1^{(3)} ) } \right) \right]^+,\\
R_1^{(4)} &\leq \log \left( 1 + \frac{ {\sf INR} P_1^{(4)}}{1 + {\sf INR} P_1^{(1:2)} + {\sf SNR} P_2^{(1)}} \right), \\
R_2^{(1)} &\leq \log \left( 1 + \frac{ {\sf SNR} P_2^{(1)}}{1 + {\sf INR} P_1^{(1:2)} } \right).
\end{split}
\end{equation}

The decoding strategy presented above describes a set of constraints on the rates, which is summarized as follows:
\begin{equation}
\label{}
\left\{ \begin{array}{ll}
R_1^{(1)} &\leq \log \left( 1 + \frac{ {\sf SNR} P_1^{(1)}}{1 + {\sf INR} P_2^{(1)} } \right) \\
R_1^{(2)} &\leq \min \left\{ \left[\log \left( \frac{ {\sf SNR} P_1^{(2)}}{1 + {\sf INR} P_2^{(1)} + {\sf SNR} P_1^{(1)} } \right)\right]^+, C_{\sf FB} \right\} \\
R_1^{(3)} &\leq \log \left( 1 + \frac{ {\sf SNR} P_1^{(3)}}{1 + {\sf INR} P_2 + {\sf SNR} P_1^{(1:2)} } \right) \\
R_1^{(4)} &\leq \log \left( 1 + \frac{ {\sf INR} P_1^{(4)}}{1 + {\sf INR} P_1^{(1:2)} + {\sf SNR} P_2^{(1)}} \right) \\
R_2^{(1)} &\leq \log \left( 1 + \frac{ {\sf SNR} P_2^{(1)}}{1 + {\sf INR} P_1^{(1:2)} } \right) \\
R_2^{(2)} &\leq \min \left\{ \left[\log \left( \frac{ {\sf INR} P_2^{(2)}}{1 + {\sf INR} P_2^{(1)} + {\sf SNR} P_1^{(1)} } \right)\right]^+, C_{\sf FB} \right\} \\
R_2^{(3)} &\leq \log \left( 1 + \frac{ {\sf INR} P_2^{(3)}}{1 + {\sf INR} P_2^{(1:2)} + {\sf SNR} P_1^{(1:2)} } \right).
\end{array} \right.
\end{equation}

Therefore, we can achieve a sum-rate $R_{\sf SUM}^{(b)} = R_1^{(1)} + R_1^{(2)} + R_1^{(4)} + R_2^{(1)} + R_2^{(2)} + R_2^{(3)}$, arbitrary close to\footnotemark \footnotetext{Note that $X_1^{(3,b)}$ is a function of the cooperative common message of transmitter $2$, \emph{i.e.}, $W_2^{(2,b-2)}$, hence it is not considered in the sum-rate.}
\begin{align}
\label{eq:rsumb}
R_{\sf SUM}^{(b)} &=  \log \left( 1 + \frac{ {\sf SNR} P_1^{(1)}}{1 + {\sf INR} P_2^{(1)} } \right)
\end{align}
\begin{align}
&+ \min \left\{ \left[ \log \left( \frac{ {\sf SNR} P_1^{(2)}}{1 + {\sf INR} P_2^{(1)} + {\sf SNR} P_1^{(1)} } \right) \right]^+, C_{\sf FB} \right\} \nonumber \\
&+ \log \left( 1 + \frac{ {\sf INR} P_1^{(4)}}{1 + {\sf INR} P_1^{(1:2)} + {\sf SNR} P_2^{(1)}} \right) \nonumber \\
&+ \log \left( 1 + \frac{ {\sf SNR} P_2^{(1)}}{1 + {\sf INR} P_1^{(1:2)} } \right) \nonumber \\
&+ \min \left\{ \left[ \log \left( \frac{ {\sf INR} P_2^{(2)}}{1 + {\sf INR} P_2^{(1)} + {\sf SNR} P_1^{(1)} } \right) \right]^+, C_{\sf FB} \right\} \nonumber \\
&+ \log \left( 1 + \frac{ {\sf INR} P_2^{(3)}}{1 + {\sf INR} P_2^{(1:2)} + {\sf SNR} P_1^{(1:2)} } \right). \nonumber
\end{align}


\noindent {\bf Case (c) $2 \log \left( {\sf SNR} \right) \leq \log \left( {\sf INR} \right)$: }

In this case, there is no need to decode the superposition of the two messages. So set $R_1^{(1)}, R_1^{(2)}$ and $R_2^{(1)}$ equal to zero. We then get
\begin{align}
Y_1 &= h_d \left( X_1^{(3)} + X_1^{(4)} \right) + h_c \left( X_2^{(2)} + X_2^{(3)} \right) + Z_1,\\
\label{}
Y_2 &= h_d \left( X_2^{(2)} + X_2^{(3)} \right) + h_c \left( X_1^{(3)} + X_1^{(4)} \right) + Z_2.
\end{align}

As for the feedback strategy, receiver $1$ decodes $X_2^{(2)}$ by treating other codewords as noise, and sends the lattice index of $W_2^{(2)}$ back to transmitter $1$ during the following block. Transmitter $1$ later encodes this message as $X_1^{(3)}$ and transmits it. It is worth mentioning that in this case, it is in fact receiver $2$ who wants to exploit the feedback link of user $1$ to get part of its message. In other words, we have two paths for information flow from transmitter $2$ to receiver $2$; one through the direct link between them and the other one through receiver $1$, feedback link and transmitter $1$. The decoding works very similar to what we described above and we get the following set of constraints to guarantee small error probability at the decoders.
\begin{equation}
\label{ach1}
\left\{ \begin{array}{ll}
R_1^{(3)} &\leq \log \left( 1 + \frac{{\sf INR} P_1^{(3)}}{1+{\sf SNR} P_2} \right) \\
R_1^{(4)} &\leq \log \left( 1 + \frac{{\sf SNR} P_1^{(4)}}{1+{\sf SNR} P_1^{(3)}} \right) \\
R_2^{(2)} &\leq \min \left\{ \left[ \log \left( \frac{{\sf INR} P_2^{(2)}}{1+{\sf SNR} P_1} \right) \right]^+, C_{\sf FB} \right\} \\
R_2^{(3)} &\leq \log \left( 1 + \frac{{\sf SNR} P_2^{(3)}}{1+{\sf SNR} P_2^{(2)}} \right)
\end{array} \right.
\end{equation}

As before $X_1^{(3,b)}$ is a function of $W_2^{(2,b-2)}$. Therefore, we can achieve a sum-rate $R_{\sf SUM}^{(c)} = R_1^{(4)} + R_2^{(2)} + R_2^{(3)}$, arbitrary close to
\begin{align}
\label{eq:rsumc}
R_{\sf SUM}^{(c)} &= \log \left( 1 + \frac{{\sf SNR} P_1^{(4)}}{1+{\sf SNR} P_1^{(3)}} \right) \nonumber \\
&+ \min \left\{ \left[ \log \left( \frac{{\sf INR} P_2^{(2)}}{1+{\sf SNR} P_1} \right) \right]^+, C_{\sf FB} \right\} \nonumber \\
& + \log \left( 1 + \frac{{\sf SNR} P_2^{(3)}}{1+{\sf SNR} P_2^{(2)}} \right).
\end{align}

\noindent {\bf Case (d) $ \frac{2}{3} \log \left( {\sf SNR} \right) \leq \log \left( {\sf INR} \right) \leq 2 \log \left( {\sf SNR} \right)$:}

As we will show in Appendix~\ref{Appendix:Gap}, in this regime feedback can at most increase the sum-rate capacity by 4 bits/sec/Hz. Hence, we ignore the feedback and use the non-feedback transmission strategy in~\cite{Etkin} (\emph{i.e.}, having only one private and one common message at each transmitter and jointly decoding at receivers).

\subsubsection{General Feedback Assignment}

We now describe our achievable scheme for general feedback capacity assignment based on a combination of the achievability schemes for the extreme cases. Let $C_{\sf FB1} = \lambda C_{\sf FB}$ and $C_{\sf FB2} = ( 1 - \lambda) C_{\sf FB}$, such that $0 \leq \lambda \leq 1$. We call the achievable sum-rate of the extreme case $C_{\sf FB1} = C_{\sf FB}$ and $C_{\sf FB2} = 0$ by $R_{\sf SUM}^{C_{\sf FB2} = 0}$, and similarly, we refer to the achievable sum-rate of the other extreme case by $R_{\sf SUM}^{C_{\sf FB1} = 0}$.
\begin{figure}[ht]
\centering
\includegraphics[width=8cm]{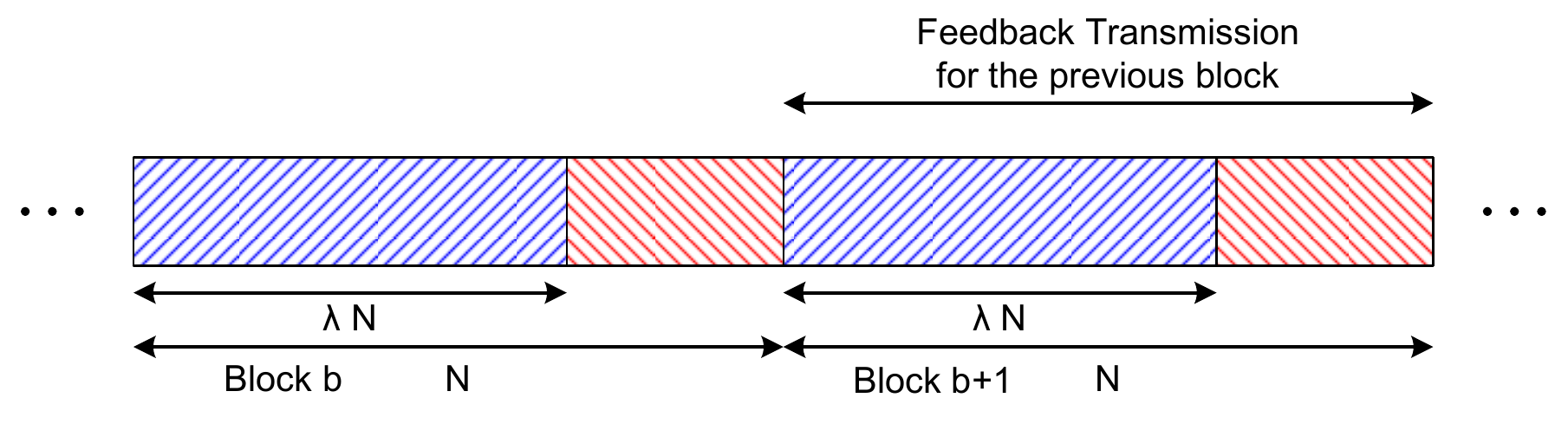}
\caption{Achievability strategy for $C_{\sf FB1} = \lambda C_{\sf FB}$ and $C_{\sf FB2} = ( 1 - \lambda) C_{\sf FB}$.}
\label{feedback12}
\end{figure}
We split any block $b$, $b = 1,2,\ldots,B-2$, of length $N$ into two sub-blocks: $b_1$ of length $\lambda N$ and $b_2$ of length $( 1 - \lambda ) N$. See Figure~\ref{feedback12} for a depiction. During block $b_1$, we implement the transmission strategy of the extreme case $C_{\sf FB1} = C_{\sf FB}$ and $C_{\sf FB2} = 0$, with a block length of $\lambda N$; and during block $b_2$, the achievability scheme of the extreme case $C_{\sf FB1} = 0$ and $C_{\sf FB2} = C_{\sf FB}$, with a block length of $\left( 1 - \lambda \right) N$.

At the end of each sub-block, receivers decode the messages as described before and create the feedback messages. During block $b+1$ the feedback messages of sub-blocks $b_1$ and $b_2$ will be sent back to corresponding transmitters, as shown in Figure~\ref{feedback12}. Note that we use $C_{\sf FB1}$ during the entire length of block $b+1$, hence the effective feedback rate of user $1$ (total feedback use divided by number of transmission time slots), would be
\begin{equation}
C_{\sf FB1}^{\sf eff} = \frac{N C_{\sf FB1}}{\lambda N} = \frac{\lambda N C_{\sf FB}}{\lambda N} = C_{\sf FB}.
\end{equation}

Hence, we can implement the achievability strategy corresponding to the extreme case $C_{\sf FB1} = C_{\sf FB}$ and $C_{\sf FB2} = 0$. Similar argument is valid for the other extreme case. With this achievability scheme, as $N$ goes to infinity, we achieve a sum-rate of $\lambda R_{\sf SUM}^{C_{\sf FB2} = 0} + \left( 1 - \lambda \right) R_{\sf SUM}^{C_{\sf FB1} = 0}$.

\subsubsection{Power Splitting}

We have yet to specify the values of the powers associated with the codewords at the transmitters (\emph{i.e.}, $P_k^{(i)}$: $k\in \{1,2\}$, i $\in \{1,2,3,4\}$). In general, one can solve an optimization problem to find the optimal choice of power level assignments that maximizes the achievable sum-rate. We have performed numerical analysis for this optimization problem. Figure~\ref{Fig:gapnumerical} shows the gap between our proposed achievable scheme and the outer-bounds in Corollary~\ref{COL:Gaussian-sumrate} at (a) ${\sf SNR} = 20 {\sf dB}$, (b) ${\sf SNR} = 40 {\sf dB}$, and (c) ${\sf SNR} = 60 {\sf dB}$, for $C_{FB} = 10$ bits. In fact through our numerical analysis, we can see that the gap is at most $4$, $5$, and $5.5$ bits/sec/Hz for the given values of ${\sf SNR}$, respectively. Note that sharp points in Figure~\ref{Fig:gapnumerical} are due to the change of achievability scheme for different values of ${\sf INR}$ as described before.
\begin{figure}[ht]
\centering
\subfigure[]{\includegraphics[width=7cm]{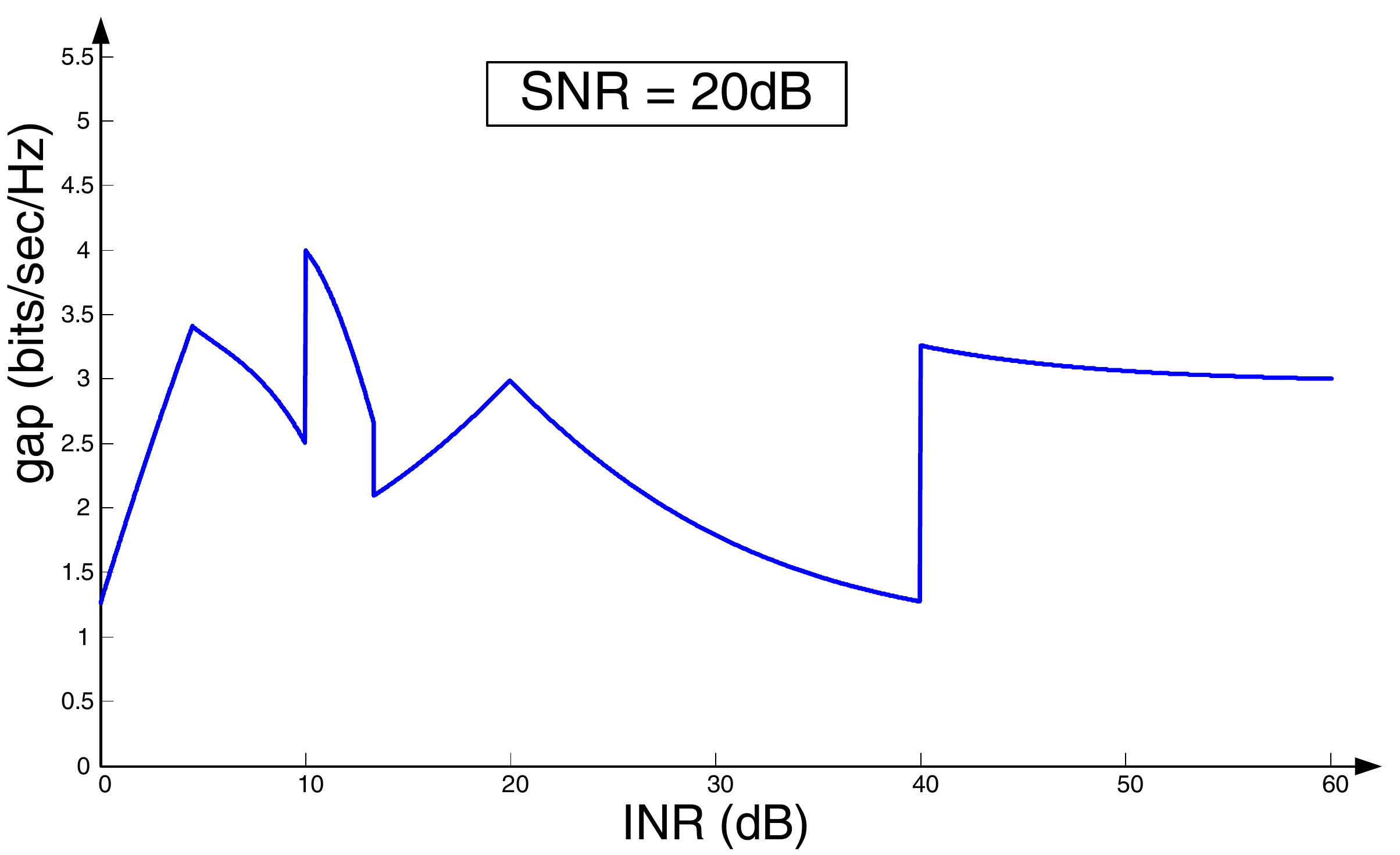}}
\subfigure[]{\includegraphics[width=7cm]{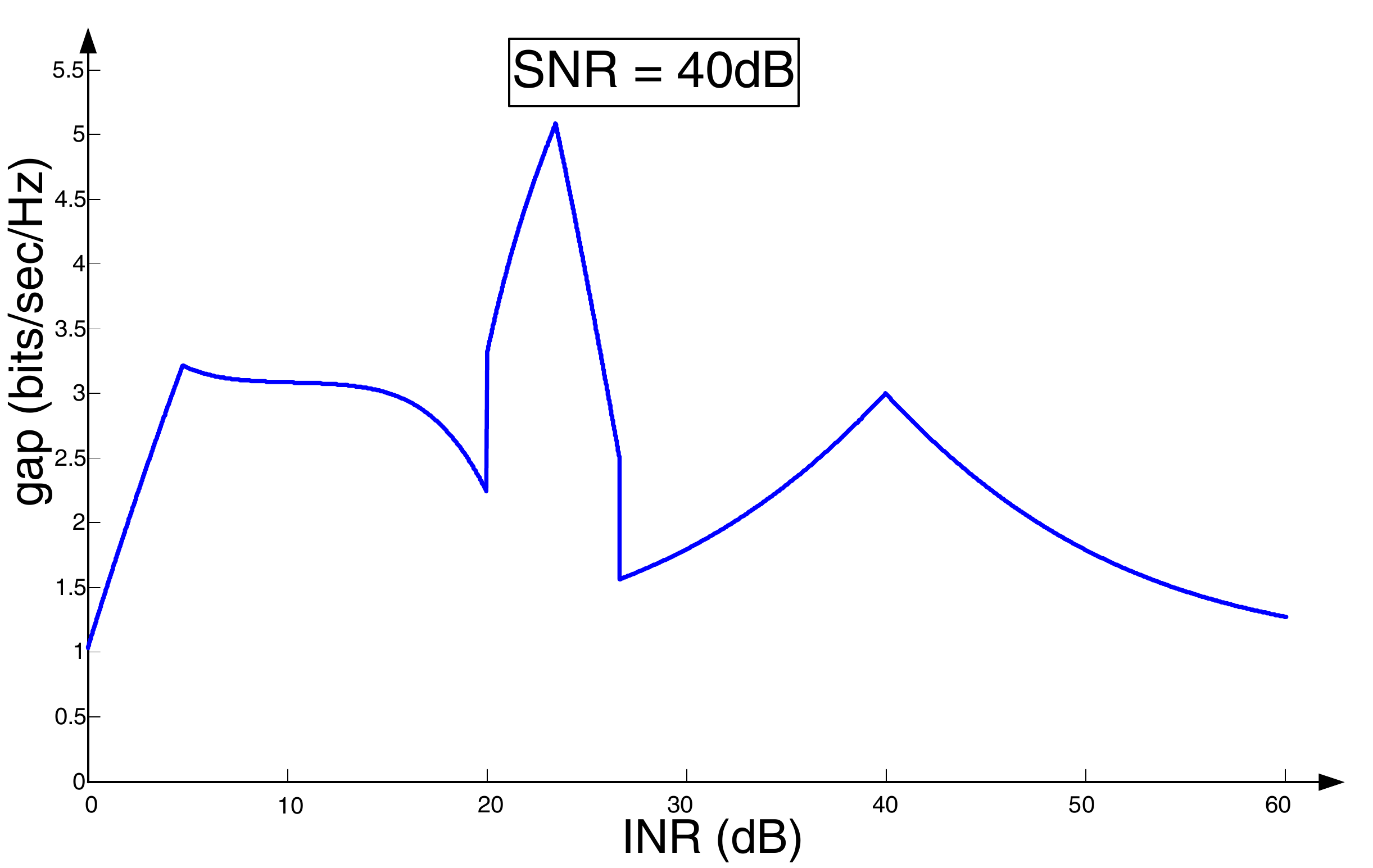}}
\subfigure[]{\includegraphics[width=7cm]{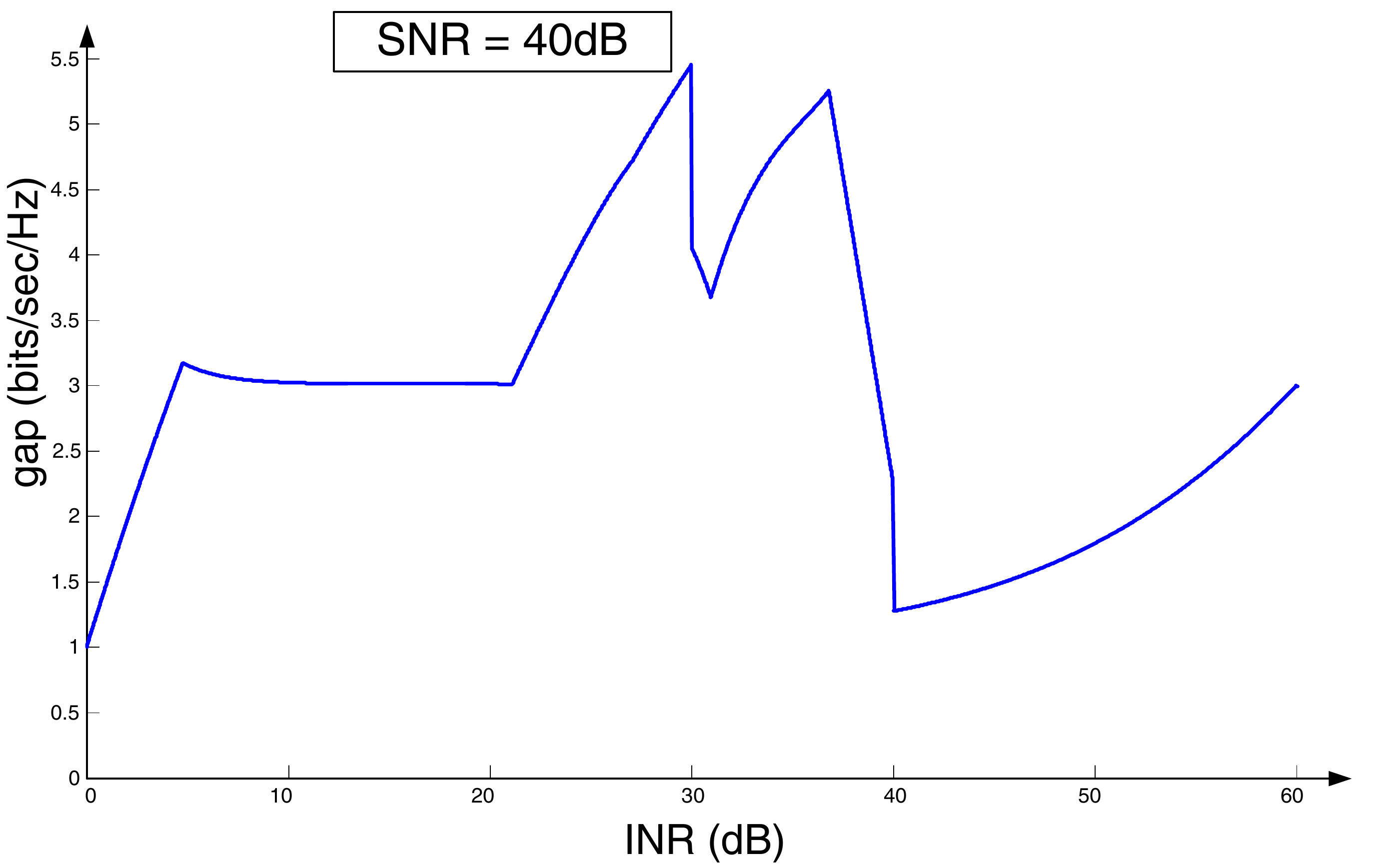}}
\caption{Numerical analysis: gap between achievable scheme and the outer-bounds in Corollary~\ref{COL:Gaussian-sumrate} at (a) ${\sf SNR} = 20 {\sf dB}$, (b) ${\sf SNR} = 40 {\sf dB}$, and (c) ${\sf SNR} = 60 {\sf dB}$ for $C_{FB} = 10$ bits.\label{Fig:gapnumerical}}
\end{figure}


In Appendix~\ref{Appendix:Gap}, we present an explicit choice of power assignments such that the gap between the achievability scheme and the outer-bounds does not scale with ${\sf SNR}$.
As a result, we get the following Theorem.

\begin{theorem} \label{THM:Gaussian-sumrate}
The sum-rate capacity of the Gaussian IC with rate-limited feedback is within at most $14.8$ bits/sec/Hz of the maximum $R_1 + R_2$ satisfying
\begin{subequations}
\begin{eqnarray}
0 \leq R_1 + R_2 & \leq & 2 \log \left( 1 + \mathsf{SNR} \right) + C_{\sf FB1 } + C_{\sf FB2 } \\
0 \leq R_1 + R_2 & \leq & \log \left( 1 +  \frac{  \mathsf{SNR}}{ 1 +  \mathsf{INR}} \right) \\
& + &  \log \left( 1+ \mathsf{SNR} + \mathsf{INR} + 2 \sqrt{ \mathsf{SNR}  \cdot \mathsf{INR}} \right) \nonumber \\
0 \leq R_1 + R_2 & \leq & 2 \log \left( 1 + {\sf INR} + \frac{ {\sf SNR} }{1+ {\sf INR}} \right) \\
& + & C_{\sf FB1} + C_{\sf FB2}. \nonumber
\end{eqnarray}
\end{subequations}
\end{theorem}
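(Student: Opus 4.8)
The plan is to compare, regime by regime in the interference strength, the achievable sum-rate constructed in the preceding subsection against the outer bound of Corollary~\ref{COL:Gaussian-sumrate} (which is exactly the region whose maximum appears in the statement), and to show the difference is uniformly bounded by $14.8$ bits.

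First I would observe that it suffices to treat the single-feedback-link case. Indeed, by the general-feedback-assignment construction, for $C_{\sf FB1}=\lambda C_{\sf FB}$ and $C_{\sf FB2}=(1-\lambda)C_{\sf FB}$ the time-sharing rate $\lambda R_{\sf SUM}^{C_{\sf FB2}=0}+(1-\lambda)R_{\sf SUM}^{C_{\sf FB1}=0}$ is achievable, and by symmetry both $R_{\sf SUM}^{C_{\sf FB2}=0}$ and $R_{\sf SUM}^{C_{\sf FB1}=0}$ equal the same quantity $R_{\sf SUM}(C_{\sf FB})$, so the achievable sum-rate does not depend on $\lambda$; meanwhile every outer-bound inequality in Corollary~\ref{COL:Gaussian-sumrate} depends on $(C_{\sf FB1},C_{\sf FB2})$ only through the sum $C_{\sf FB1}+C_{\sf FB2}=C_{\sf FB}$. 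Hence the gap is $\bar R_{\sf sum}(C_{\sf FB})-R_{\sf SUM}(C_{\sf FB})$ and it is enough to bound this for the extreme assignment $C_{\sf FB1}=C_{\sf FB}$, $C_{\sf FB2}=0$, i.e. for the scheme explicitly analyzed in Cases~(a)--(d).

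Next I would fix an explicit power split in each regime and substitute it into the achievable sum-rates $R_{\sf SUM}^{(a)}$, $R_{\sf SUM}^{(b)}$, $R_{\sf SUM}^{(c)}$ of \eqref{eq:rsuma}, \eqref{eq:rsumb}, \eqref{eq:rsumc} (and, for Case~(d), into the Etkin--Tse--Wang non-feedback rate of \cite{Etkin}). In every regime the private powers are chosen in the Etkin--Tse--Wang manner so that each private stream is received at roughly the noise floor at the unintended receiver (${\sf INR}\,P_k^{(1)}$ a constant), the remaining direct-link power is loaded onto $X_1^{(3)}$ (and onto $X_1^{(4)},X_2^{(3)}$ in Case~(b)), and the aligned ``cooperative'' powers $P_1^{(2)},P_2^{(2)}$ are tied by ${\sf SNR}\,P_1^{(2)}={\sf INR}\,P_2^{(2)}$ with the common value picked so that $\log({\sf SNR}\,P_1^{(2)})$ matches $\min\{C_{\sf FB},\,\text{(slack left in the direct link)}\}$; this forces the $\min\{\cdot,C_{\sf FB}\}$ and $[\cdot]^{+}$ terms in \eqref{eq:rsuma}--\eqref{eq:rsumc} onto the branch of the outer bound that is actually active (the $C_{\sf FB}$-limited bound \eqref{eq:sum1}/\eqref{eq:sum4} when feedback is the bottleneck, the interference-limited bound \eqref{eq:sum2} otherwise). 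Using only $\log(1+x)\ge\log x$ and, for $x\ge 1$, $\log(1+x)\le \log x+1$, each term of the achievable rate then coincides with a term of the chosen outer-bound inequality up to a bounded additive slack coming from (i) the $\log(1+x)$-versus-$\log x$ bookkeeping, (ii) the ``$1+{\sf INR}\,P_k^{(1)}\le 2$'' roundings, and (iii) the $\tfrac12\log(2\pi e)$-type loss in the lattice decoding of Lemma~\ref{lemma:lattice} together with the real-to-complex factor noted in the Remark after it. Case~(d) is treated separately: I would first show (the estimate deferred to Appendix~\ref{Appendix:Gap}) that in the window $\tfrac23\log{\sf SNR}\le\log{\sf INR}\le 2\log{\sf SNR}$ the feedback term $C_{\sf FB1}+C_{\sf FB2}$ raises the sum-rate outer bound by at most $4$ bits over its zero-feedback value, and then invoke the per-user $1$-bit gap of \cite{Etkin}, so that the non-feedback scheme lies within $2+4=6$ bits of the feedback outer bound there.

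Finally I would collect the constants: in each of the four regimes the gap is a fixed small integer plus a fixed number of $\log$-of-a-constant contributions, and the maximum over the regimes (and over $\lambda$, which by the reduction above is vacuous) yields the stated $14.8$ bits/sec/Hz. The step I expect to be the real obstacle is the power-allocation bookkeeping in Cases~(a) and~(b): one must verify that the chosen $\{P_k^{(i)}\}$ are nonnegative and sum to at most $1$ for every triple $({\sf SNR},{\sf INR},C_{\sf FB})$ in the regime --- in particular that the aligned powers, which we want as large as $C_{\sf FB}$ permits, still fit under the budget once the private and $P^{(3)},P^{(4)}$ parts are subtracted --- and that after substitution the various $\min\{\cdot,C_{\sf FB}\}$ and $[\cdot]^{+}$ in \eqref{eq:rsuma}--\eqref{eq:rsumc} select the same branch as the active outer-bound inequality; the $4$-bit estimate used in Case~(d) is self-contained and comparatively routine.
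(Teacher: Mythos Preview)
Your proposal is correct and follows essentially the same route as the paper: reduce to the single-feedback-link extreme via the time-sharing construction, then in each interference regime plug an explicit power split into \eqref{eq:rsuma}--\eqref{eq:rsumc} (or the Etkin--Tse--Wang rate in the middle regime) and compare term-by-term against the active inequality of Corollary~\ref{COL:Gaussian-sumrate}, with the worst constant, $14.8$, arising in Case~(b). Two small points of alignment with the paper's execution: it treats ${\sf INR}<1$ as a separate preliminary case (interference-as-noise, gap $\le 2.6$), and in the $\tfrac{2}{3}\log{\sf SNR}\le\log{\sf INR}\le 2\log{\sf SNR}$ window it does not bound ``how much feedback raises the outer bound'' but simply compares the non-feedback achievable rate directly against the feedback-independent bound \eqref{eq:sum2}, which is cleaner than your $2{+}4$ decomposition.
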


\begin{remark}
Note that the given choice of power assignment in Appendix~\ref{Appendix:Gap} is not necessarily optimal, and our analysis is pessimistic in the sense that we consider the worst case scenario, and we calculate the gap for the worst case.
\end{remark}

As a corollary, we characterize the symmetric capacity of the two-user Gaussian IC with rate-limited feedback, as defined below, to within a constant number of bits.

\begin{definition}
The symmetric capacity is defined by
\begin{equation}\
\label{csym}
\mathcal{C}_{\sf sym} = \sup \{ R: (R,R) \in \mathcal{C} \},
\end{equation}
where $\mathcal{C}$ is the capacity region.
\end{definition}

\begin{corollary}
For the symmetric Gaussian IC with equal feedback link capacities, \emph{i.e.}, $C_{\sf FB1} = C_{\sf FB2}$, the  presented achievability strategy achieves to within at most $7.4$ bits/sec/Hz/user to the symmetric capacity $\mathcal{C}_{\sf sym}$ defined in (\ref{csym}), for all channel gains.
\end{corollary}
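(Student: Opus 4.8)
The plan is to sandwich $\mathcal{C}_{\sf sym}$ between half of the sum-rate outer bound of Corollary~\ref{COL:Gaussian-sumrate} and half of the achievable sum-rate of the proposed scheme, and then read off the per-user gap directly from the $14.8$-bit sum-rate gap of Theorem~\ref{THM:Gaussian-sumrate}.

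For the converse direction, I would observe that any symmetric achievable pair $(R,R)\in\mathcal{C}$ satisfies $R_1+R_2=2R$, so it must obey every inequality in Corollary~\ref{COL:Gaussian-sumrate} specialized to $C_{\sf FB1}=C_{\sf FB2}$. Writing $\bar{C}_{\sf sum}$ for the minimum of the right-hand sides of those three bounds, this yields $\mathcal{C}_{\sf sym}\le \tfrac12\,\bar{C}_{\sf sum}$.

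For the achievability direction, the key observation is that the general feedback-assignment construction of Section~\ref{gaussian} with $\lambda=\tfrac12$ already produces a symmetric operating point. Sub-block $b_1$ runs the extreme-case scheme with $C_{\sf FB2}=0$, achieving some pair $(R_1^\star,R_2^\star)$ with $R_1^\star+R_2^\star=R_{\sf SUM}$; sub-block $b_2$ runs its mirror image, obtained by interchanging the two user indices, achieving $(R_2^\star,R_1^\star)$; averaging over the full block, each user gets rate $\tfrac12 R_{\sf SUM}$. Hence $(\tfrac12 R_{\sf SUM},\tfrac12 R_{\sf SUM})\in\mathcal{C}$, so $\mathcal{C}_{\sf sym}\ge\tfrac12 R_{\sf SUM}$. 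In the regime $\tfrac23\log{\sf SNR}\le\log{\sf INR}\le 2\log{\sf SNR}$ (Case (d)), where feedback is discarded, the underlying non-feedback scheme is already symmetric, so the same conclusion holds there too.

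Combining the two bounds,
\[
0 \;\le\; \mathcal{C}_{\sf sym}-\tfrac12 R_{\sf SUM}\;\le\;\tfrac12\bigl(\bar{C}_{\sf sum}-R_{\sf SUM}\bigr)\;\le\;\tfrac12\cdot 14.8 \;=\; 7.4,
\]
where the last step is exactly the worst-case sum-rate gap established in Theorem~\ref{THM:Gaussian-sumrate} via the explicit power split of Appendix~\ref{Appendix:Gap}. Together with $\mathcal{C}_{\sf sym}\ge\tfrac12 R_{\sf SUM}$, this shows the presented strategy is within $7.4$ bits/s/Hz/user of $\mathcal{C}_{\sf sym}$ for all channel gains. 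The only non-routine points are verifying that the $\lambda=\tfrac12$ symmetrization costs no sum-rate relative to the extreme-case analysis, and that the $14.8$-bit bound of Theorem~\ref{THM:Gaussian-sumrate} is uniform over all channel gains and feedback budgets, including the boundaries between Cases (a)--(d); both are inherited directly from the gap computation in Appendix~\ref{Appendix:Gap}, so no new estimates are needed here.
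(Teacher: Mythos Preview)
Your proposal is correct and follows the same essential approach as the paper: halve the $14.8$-bit sum-rate gap of Theorem~\ref{THM:Gaussian-sumrate} to obtain the $7.4$-bit per-user gap. Your argument is in fact more explicit than the paper's two-line proof, which simply asserts the conclusion without spelling out the $\lambda=\tfrac12$ symmetrization that guarantees the achievable sum-rate can be realized at a symmetric operating point.
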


\begin{proof}
Theorem \ref{THM:Gaussian-sumrate} says that we can achieve to within at most $14.8$ bits/sec/Hz of the outerbounds in Corollary~\ref{COL:Gaussian-sumrate} for any feedback assignment. Therefore, in symmetric IC with equal feedback link capacities $C_{\sf FB1} = C_{\sf FB2} = \frac{1}{2} C_{\sf FB1}$, the gap between the achievability and the symmetric capacity is at most $7.4$ bits/sec/Hz/user.
\end{proof} 

\section{Concluding Remarks}
\label{conclusion}
We have addressed the two-user interference channel with rate-limited feedback under three different models: the El Gamal-Costa deterministic model~\cite{ElGamal:it82}, the linear deterministic model~\cite{ADT10}, and the Gaussian model. We developed new achievable schemes and new outer-bounds for all of the three models. We showed the optimality of our scheme under the linear deterministic model. Under the Gaussian model, we established new outer-bounds on the capacity region with rate-limited feedback, and we proposed a transmission strategy employing lattice codes and the ideas developed in the first two models. Furthermore, we proved that the gap between the achievable sum-rate of the proposed scheme and the outer-bound is bounded by a constant number of bits, independent of the channel gains.

One of the future directions would be to extend this result to the capacity region of the asymmetric two-user Gaussian interference channel with rate-limited feedback. The same achievability scheme can be applied there, however, the gap analysis will be cumbersome. Therefore, one interesting direction is to find out new techniques to bound the gap between the achievable region and the outer-bounds on the capacity region of the asymmetric two-user Gaussian interference channel with rate-limited feedback.

\appendices

\section{Proof of Theorem~\ref{theorem:outerboundregion}}
\label{deterministic-proofs}
\textbf{Proof of (\ref{eq:outerR1_1}) (cutset bound):} Starting with Fano's inequality, we get
\begin{align*}
\begin{split}
N( R_1 - \epsilon_N) \leq I(W_1;Y_1^{N}) \leq \sum  H(Y_{1i}),\\
\end{split}
\end{align*}
where the second inequality follows from the fact that conditioning reduces entropy.
If $(R_1,R_2)$ is achievable, then $\epsilon_N \rightarrow 0$ as $N \rightarrow \infty$. Thus we obtain the left term of the bound. Notice that this is a cutset bound, as the bound is obtained assuming that the two transmitters fully collaborate.

To obtain the right term, we consider
\begin{align*}
\begin{split}
N&(R_1 - \epsilon_N) \leq I(W_1;Y_1^N, Y_2^N, W_2) \\
&\overset{(a)}= \sum H(Y_{1i},Y_{2i}|W_2,Y_1^{i-1},Y_2^{i-1},X_2^{i}) \\
&= \sum H(Y_{1i}|W_2,Y_1^{i-1},Y_2^{i}, X_{2}^{i}) \\
&+ \sum H(Y_{2i}|W_2,Y_1^{i-1},Y_2^{i-1},X_{2}^{i}) \\
&\overset{(b)}{=}  \sum H(Y_{1i}|W_2,Y_{1}^{i-1},Y_{2i},X_{2i},U_{1i}) \\
&+ \sum H(Y_{2i}|W_2, Y_{1}^{i-1}, X_{2i}, U_{1i})  \\
&\overset{(c)}{\leq} \sum H(Y_{1i}|X_{2i}, Y_{2i}, U_{1i}) + \sum H(Y_{2i}|X_{2i},U_{1i}) \\
&\overset{(d)}{\leq} \sum H(Y_{1i}|V_{1i}, V_{2i}, U_{1i}) + \sum H(Y_{2i}|X_{2i},U_{1i}),
\end{split}
\end{align*}
where ($a$) follows from the fact that $W_1$ is independent from $W_2$, and $X_{2}^{i}$ is a function of $(W_2,Y_2^{i-1})$; ($b$) follows from the fact that $U_{1i}:=(X_{2}^{i-1},\tilde{Y}_2^{i-1})$ and $\tilde{Y}_2^{i-1}$ is a function of $Y_2^{i-1}$; ($c$) follows from the fact that conditioning reduces entropy; ($d$) follows from the fact that $(V_{1i}$ is a function of $(X_{2i}, Y_{2i})$, $Y_{2i}$ is a function of $(X_{2i}, V_{1i})$ and $V_{2i}$ is a function of $X_{2i}$. Thus we get the right term of the bound. Notice that this is a cutset bound, as the bound is obtained assuming that the two receivers fully collaborate.

\textbf{Proof of (\ref{eq:outerR1_2}) (cutset bound):}
Starting with Fano's inequality, we get:
\begin{align*}
\begin{split}
N&(R_1 - \epsilon_N) \leq I(W_1;Y_1^N, \tilde{Y}_2^N, W_2) \\
&\overset{(a)}= \sum H(Y_{1i},\tilde{Y}_{2i}|W_2,Y_1^{i-1},\tilde{Y}_2^{i-1},X_2^{i}) \\
&= \sum H(Y_{1i}|W_2,Y_1^{i-1},\tilde{Y}_2^{i-1}, X_{2}^{i}) \\
&+ \sum H(\tilde{Y}_{2i}|W_2,Y_1^{i},\tilde{Y}_2^{i-1},X_{2}^{i}) \\
&\overset{(b)}{\leq} \sum H(Y_{1i}|X_{2i}, U_{1i}) + \sum H(\tilde{Y}_{2i}|Y_{1i},X_{2i}) \\
&\overset{(c)}{\leq} \sum H(Y_{1i}|X_{2i}, U_{1i}) + N C_{\sf FB2},
\end{split}
\end{align*}
where ($a$) follows from the fact that $W_1$ is independent from $W_2$, and $X_{2}^{i}$ is a function of $(W_2,\tilde{Y}_2^{i-1})$; ($b$) follows from the fact that conditioning reduces entropy; ($d$) follows from $H(\tilde{Y}_{2i}| Y_{1i}, X_{2i}) \leq C_{\sf FB 2}$. Therefore, we get the desired bound.

\textbf{Proof of (\ref{eq:outerR12_2}):}
Starting with Fano's inequality, we get
\begin{align*}
\begin{split}
N&(R_1 + R_2- \epsilon_N) \leq I(W_1;Y_1^{N}|W_2) + I(W_2;Y_2^{N})  \\
&= H(Y_1^{N}|W_2) + I(W_2;Y_2^{N})  \\
&= H(Y_1^{N}|W_2) + H(Y_2^{N}) \\
&- \left\{ H(Y_1^{N}, Y_2^{N}|W_2) - H(Y_1^{N}|Y_2^{N},W_2) \right\} \\
& = H(Y_1^{N}|Y_2^{N},W_2) - H(Y_2^{N}|Y_1^{N},W_2) + H(Y_2^N)  \\
&\overset{(a)}{=} \sum H(Y_{1i} | W_2, Y_1^{i-1}, Y_2^{N}, X_2^{i},V_{1i}) + H(Y_2^N)  \\
&\overset{(b)}{\leq} \sum \left[ H(Y_{1i}|V_{2i}, V_{1i}, U_{1i} ) + H(Y_{2i}) \right],
\end{split}
\end{align*}
where ($a$) follows from the fact that $X_{2}^{i}$ is a function of $(W_2, Y_2^{i-1})$ and $V_{1i}$ is a function of $(X_{2i},Y_{2i})$; ($b$) follows from the fact that $U_{1i}$ is a function of $(X_2^{i-1}, Y_2^{i-1})$ and conditioning reduces entropy.

\textbf{Proof of (\ref{eq:outer2R1R2_1}):}
\begin{align*}
\begin{split}
N&(2R_1 + R_2- \epsilon_N) \\
&\leq I(W_1;Y_1^{N}) + I(W_1;Y_1^{N}|W_2)  + I(W_2;Y_2^{N})  \\
&\overset{(a)}{\leq} [H(Y_1^{N}) - H(Y_1^N|W_1)]  \\
&+ I(W_1;Y_1^{N}, V_1^{N}|W_2) + [H(Y_2^{N}) -H(Y_2^{N}|W_2)] \\
&\overset{(b)}{=}   [H(Y_1^{N}) - H(V_2^N|W_1)]  + H(V_1^{N}|W_2) \\
&+ H(Y_1^N|W_2, V_1^N) + [H(Y_2^{N}) -H(V_1^{N}|W_2)] \\
&=   H(Y_1^{N}) +  H(Y_1^N|W_2, V_1^N) \\
&+ H(Y_2^{N}) - [H(V_2^N) - I(W_1; V_2^N) ]\\
& \overset{(c)}{\leq}  I(W_1; V_2^N) +  H(Y_1^{N}) +  H(Y_1^N|W_2, V_1^N) \\
&+ H(Y_2^{N}, V_2^N) - H(V_2^N) \\
&  \overset{(d)}{\leq}   I(W_1; V_2^N, W_2, \tilde{Y}_2^N) +  H(Y_1^{N}) \\
&+  H(Y_1^N|W_2, V_1^N) + H(Y_2^{N}|V_2^N) \\
& \overset{(e)}{=}    I(W_1; \tilde{Y}_2^N |W_2) +  H(Y_1^{N}) \\
&+  H(Y_1^N|W_2, V_1^N,X_2^N,\tilde{Y}_2^N) + H(Y_2^{N}|V_2^N) \\
&\overset{(f)}{\leq} N C_{\sf FB2} + \sum H(Y_{1i}) + \sum H(Y_{2i} | V_{2i} ) \\
&+ \sum H(Y_{1i} |V_{1i}, V_{2i}, U_{1i} ),
\end{split}
\end{align*}
where ($a$) follows from the fact that adding information increases mutual information; ($b$) follows from Claim~\ref{claim1}; ($c$) follows from providing $V_2^N$ to receiver 2; ($d$) follows from the fact that adding information increases mutual information;  follows from the fact that $V_k^N$ is a function of $(W_k,\tilde{Y}_k^{N-1})$; ($e$) follows from the fact that $X_2^N$ is a function of $(W_2,V_1^{N-1})$ (by Claim~\ref{claim2}) and $\tilde{Y}_2^N$ is a function of $(X_2^N,V_1^N)$; ($f$) follows from the fact that $U_{1i}:=(X_{2}^{i-1},\tilde{Y}_2^{i-1})$, $U_{2i}:=(X_{1}^{i-1},\tilde{Y}_1^{i-1})$, $H(\tilde{Y}_2^N| W_2) \leq NC_{\sf FB 2}$ and conditioning reduces entropy.

To complete the proof, we will show that given $U_i:=(U_{1i},U_{2i})$, $X_{1i}$ and $X_{2i}$ are conditionally independent. Remember that our input distribution is of the form of $p(u_1,u_2) p(x_1|u_1, u_2) p(x_2|u_1, u_2)$.
\begin{claim}
\label{claim3}
Given $U_i:=(U_{1i},U_{2i})=(X_{2}^{i-1},\tilde{Y}_{2}^{i-1},X_{1}^{i-1},\tilde{Y}_{1}^{i-1} )$, $X_{1i}$ and $X_{2i}$ are conditionally independent.
\end{claim}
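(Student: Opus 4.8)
The plan is to reduce the claim to a statement purely about the two messages and then establish that statement by a short induction on the block index $i$.

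\textbf{Reduction step.} First I would record that, by definition of the encoders, $X_{1,i}=e_{1,i}(W_1,\tilde Y_1^{i-1})$ and $X_{2,i}=e_{2,i}(W_2,\tilde Y_2^{i-1})$. Since $U_i=(U_{1i},U_{2i})$ contains both $\tilde Y_1^{i-1}$ and $\tilde Y_2^{i-1}$, conditioned on $U_i$ the input $X_{1,i}$ is a deterministic function of $W_1$ alone while $X_{2,i}$ is a deterministic function of $W_2$ alone; hence $I(X_{1i};X_{2i}\mid U_i)\le I(W_1;W_2\mid U_i)$, and it is enough to show $W_1\perp W_2\mid U_i$. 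Next I would observe that, because the channel is deterministic, $Y_1^{i-1}$ is a function of $(X_1^{i-1},X_2^{i-1})$ through $Y_{1,k}=f_1(X_{1,k},g_2(X_{2,k}))$, so $\tilde Y_1^{i-1}=\tilde e_1(Y_1^{i-1})$ --- and symmetrically $\tilde Y_2^{i-1}$ --- are themselves functions of $(X_1^{i-1},X_2^{i-1})$ (this is in the spirit of Claims~\ref{claim1}--\ref{claim2}). As $U_i$ already contains $X_1^{i-1}$ and $X_2^{i-1}$ explicitly, conditioning on $U_i$ is equivalent to conditioning on $(X_1^{i-1},X_2^{i-1})$, and the claim reduces to showing $W_1\perp W_2\mid(X_1^{i-1},X_2^{i-1})$.

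\textbf{Induction.} For $i=1$ the conditioning is empty and the statement is the assumed independence $W_1\perp W_2$. For the inductive step I would use the same deterministic-channel observation to write $X_{1,j+1}$ as a function $\alpha_j(W_1,X_1^{j},X_2^{j})$ and $X_{2,j+1}$ as a function $\beta_j(W_2,X_1^{j},X_2^{j})$, and then expand
\begin{align*}
p(w_1,w_2,x_1^{j+1},x_2^{j+1}) &= p(w_1,w_2,x_1^{j},x_2^{j})\\
&\quad\times\mathbf{1}\!\left[x_{1,j+1}=\alpha_j(w_1,x_1^j,x_2^j)\right]\\
&\quad\times\mathbf{1}\!\left[x_{2,j+1}=\beta_j(w_2,x_1^j,x_2^j)\right].
\end{align*}
Substituting the inductive factorization $p(w_1,w_2,x_1^{j},x_2^{j})=p(x_1^{j},x_2^{j})\,p(w_1\mid x_1^{j},x_2^{j})\,p(w_2\mid x_1^{j},x_2^{j})$, the right-hand side splits into one factor depending only on $(w_1,x_1^{j+1},x_2^{j+1})$ and another depending only on $(w_2,x_1^{j+1},x_2^{j+1})$ --- using that $x_1^{j}$ and $x_2^{j}$ are determined by $x_1^{j+1}$ and $x_2^{j+1}$. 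Marginalizing over $(w_1,w_2)$ identifies the normalization, and dividing gives $p(w_1,w_2\mid x_1^{j+1},x_2^{j+1})=p(w_1\mid x_1^{j+1},x_2^{j+1})\,p(w_2\mid x_1^{j+1},x_2^{j+1})$, which closes the induction.

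\textbf{Main obstacle.} The only subtle point is the observation in the reduction step that the feedback signals $\tilde Y_k^{i-1}$, although they are among the conditioning variables in $U_i$, are \emph{redundant} given the past channel inputs; this is precisely what collapses the conditioning to $(X_1^{i-1},X_2^{i-1})$ and lets the induction close in the clean product form above. (If stochastic feedback or channel encoders were allowed, one would additionally have to carry along the independent local randomness at each of the two sides, but the argument would be otherwise unchanged.)
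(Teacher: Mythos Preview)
Your proof is correct but takes a genuinely different route from the paper. Both arguments begin with the same two observations: (i) because the channel is deterministic, the feedback signals $\tilde Y_k^{i-1}$ are functions of $(X_1^{i-1},X_2^{i-1})$, so conditioning on $U_i$ collapses to conditioning on the past input pair; and (ii) each $X_{k,j+1}$ is then a deterministic function of $(W_k,X_1^j,X_2^j)$. From there the paper invokes the \emph{dependence-balance} sandwich of Willems: it writes $I(W_1;W_2\mid U_i)=I(W_1;W_2\mid U_i)-I(W_1;W_2)$, expands the difference as $-H(U_i)+H(U_i\mid W_1)+H(U_i\mid W_2)$ (using $H(U_i\mid W_1,W_2)=0$), and reorganizes it into a telescoping sum each of whose terms is nonpositive by ``conditioning reduces entropy,'' forcing the mutual information to vanish. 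Your argument instead performs a direct induction on the factorization of the joint pmf $p(w_1,w_2,x_1^j,x_2^j)$, using the indicator structure afforded by the deterministic maps $\alpha_j,\beta_j$ to propagate the product form. Your route is more elementary and makes explicit why deterministic encoders/channel are exactly what lets the conditional independence survive each step; the paper's route has the advantage of tying the claim to a named technique that generalizes beyond this setting and that yields the two-sided inequality structure in one stroke.
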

\begin{proof}
The proof is based on the dependence-balance-bound technique~\cite{Willems:it82, Willems:it89}.
For completeness we describe details. We first show that $I(W_1;W_2|U_i)=0$,  implying that $W_1$ and $W_2$ are independent given $U_i$. We will then show that $X_{1i}$ and $X_{2i}$ are conditionally independent given $U_i$.

Consider
\begin{align*}
0 &\leq I(W_1;W_2|U_i) \overset{(a)}{=} I(W_1;W_2|U_i) - I(W_1;W_2) \\
& \overset{(b)}{=} - H(W_1) - H(W_2) - H(U_i) + H(W_1, W_2) \\
&+ H(W_1,U_i) + H(W_2,U_i) - H(W_1,W_2, U_i) \\
& \overset{(c)}{=} - H(U_i) + H(U_i|W_1) + H(U_i|W_2) \\
& \overset{(d)}{=} \sum_{j=1}^{i-1} \left[ - H(X_{1j},X_{2j} |X_1^{j-1}, X_2^{j-1}) \right. \\
&\left. \qquad + H(X_{1j},X_{2j}|W_1, X_1^{j-1}, X_2^{j-1}) \right. \\
&\left. \qquad +   H(X_{1j},X_{2j} |W_2, X_1^{j-1}, X_2^{j-1}) \right] \\
& \overset{(e)}{=} \sum_{j=1}^{i-1} \left[ - H(X_{1j},X_{2j} |X_1^{j-1}, X_2^{j-1})  \right. \\
&\left. \qquad + H(X_{2j}|W_1, X_1^{j}, X_2^{j-1}) + H(X_{1j}|W_2, X_1^{j-1}, X_2^{j} ) \right] \\
& = \sum_{j=1}^{i-1} \left[ - H(X_{1j} |X_1^{j-1}, X_2^{j-1} ) + H(X_{1j} |W_2, X_1^{j-1}, X_2^{j}) \right. \\&\left. \qquad - H(X_{2j}|X_1^{j}, X_2^{j-1}) + H(X_{2j}|W_1, X_1^{j}, X_2^{j-1}) \right]  \overset{(f)} \leq 0,
\end{align*}
where ($a$) follows from $I(W_1;W_2)=0$; ($b$) follows from the chain rule; ($c$) follows from the chain rule and $H(U_i|W_1,W_2)=0$; ($d$) follows from the fact that $(\tilde{Y}_{1j},\tilde{Y}_{2j})$ is a function of $(\tilde{X}_{1j},\tilde{X}_{2j})$;
($e$) follows from the fact that $X_{kj}$ is a function of $(W_k,X_1^{j-1}, X_2^{j-1})$; ($f$) follows from the fact that conditioning reduces entropy.
Therefore, $I(W_1;W_2|U_i)=0$, which shows the independence of $W_1$ and $W_2$ given $U_i$. 

Notice that $X_{ki}$ is a function of $(W_k,X_1^{i-1}, X_2^{i-1})$. Hence, it easily follows that $
I(X_{1i};X_{2i}|U_i) = I(X_{1i};X_{2i}|X_{1}^{i-1}, X_2^{i-1} ) = 0$. This proves the independence of $X_{1i}$ and $X_{2i}$ given $U_i$.
\end{proof} 

\section{Achievability Proof of Theorem~\ref{theorem:DICcapacity}}
\label{Appendix:DICcapacity}
With the choice of distribution given in (\ref{choice}), we have
\begin{subequations}
\begin{eqnarray}
& &\delta_1 = I(\hat{Y}_1; Y_1|U, U_2, X_1) = 0, \\
& & \delta_2 = I(\hat{Y}_2; Y_2|U, U_1, X_2) = 0, \\
\label{eq:Th1_element1}
& &I(U,V_2, X_1;Y_1) = \max (n_{11}, n_{21}), \\
& &I(U,V_1, X_2;Y_2) = \max (n_{22}, n_{12}), \\
& &I(X_1;Y_1|U,V_1, V_2) = (n_{11} - n_{12} )^+, \\
& &I(X_2;Y_2|U,V_1, V_2) = (n_{22} - n_{21} )^+, \\
\label{eq:Th1_element3}
& &I(U_2; Y_1|U, X_1) = \min( n_{21}, C_{\sf FB1}), \\
& &I(U_1; Y_2|U, X_2) = \min( n_{12}, C_{\sf FB2}), \\
\label{eq:Th1_element4}
& &I(X_1;Y_1|U, U_1, V_2) = (n_{11} - n_{12})^+  \nonumber \\
& &+ \min  \left \{  n_{11}, (n_{12} - C_{\sf FB2})^+ \right \},   \\
& &I(X_2;Y_2|U, U_2, V_1) = (n_{22} - n_{21})^+  \nonumber \\
& &+ \min  \left \{  n_{22}, (n_{21} - C_{\sf FB1})^+ \right \}, \\
\label{eq:Th1_element5}
& &I(X_1,V_2;Y_1|U, V_1, U_2)  =  \\
& &(n_{21} - C_{\sf FB1})^+ + \left[ (n_{11}- n_{12} )^+ -  n_{21} \right]^+  \nonumber \\
& &\;\;+ \min \left\{ (n_{11}- n_{12})^+, \min ( n_{21}, C_{\sf FB1}) \right \}, \nonumber \\
& &I(X_2,V_1;Y_2|U, V_2, U_1)  =  \\
& &(n_{12} - C_{\sf FB2})^+ + \left[ (n_{22}- n_{21} )^+ -  n_{12} \right]^+  \nonumber \\
& &\;\;+ \min \left\{ (n_{22}- n_{21})^+, \min ( n_{12}, C_{\sf FB2}) \right \}, \nonumber \\
\label{eq:Th1_element6}
& &I(X_1,V_2;Y_1|U, U_1, U_2) = \\
& &(n_{11} - n_{12} )^+ + \min \left \{ n_{11}, (n_{12} - C_{\sf FB2})^+ \right \}  \nonumber \\
& &\;\;+ \left[ n_{21} - \max \left\{ n_{11}, \min (n_{21}, C_{\sf FB1}) \right\} \right]^+ \nonumber \\
& &\;\;+ \left[  \min \left \{ n_{21}, \left[ n_{11} - (n_{12} - C_{\sf FB2})^+ \right]^+ \right \} \right. \nonumber \\
& &\left. - \max \left\{ \min (n_{21}, C_{\sf FB1}), (n_{11} - n_{12})^+ \right \} \right]^+, \nonumber \\
& &I(X_2,V_1;Y_2|U, U_1, U_2) = \\
& &(n_{22} - n_{21} )^+ + \min \left \{ n_{22}, (n_{21} - C_{\sf FB1})^+ \right \}  \nonumber \\
& &\;\;+ \left[ n_{12} - \max \left\{ n_{22}, \min (n_{12}, C_{\sf FB2}) \right\} \right]^+ \nonumber \\
& &\;\;+ \left[  \min \left \{ n_{12}, \left[ n_{22} - (n_{21} - C_{\sf FB1})^+ \right]^+ \right \} \right. \nonumber \\
& &\left. - \max \left\{ \min (n_{12}, C_{\sf FB2}), (n_{22} - n_{21})^+ \right \} \right]^+. \nonumber
\end{eqnarray}
\end{subequations}
Using this computation, one can show that the inequalities of (\ref{lemmaeq:R1R2-dummy1}) and (\ref{lemmaeq:R1R2-dummy2}) are implied by (\ref{lemmaeq:R1-2}), (\ref{lemmaeq:R2-2}), (\ref{lemmaeq:R1R2-1}) and (\ref{lemmaeq:R1R2-2}); the inequality (\ref{lemmaeq:2R1R2-dummy}) is implied by (\ref{lemmaeq:R1-2}), (\ref{lemmaeq:R2-2}) and (\ref{lemmaeq:2R1R2}); and the inequality (\ref{lemmaeq:R1-2R2-dummy}) is implied by (\ref{lemmaeq:R1-2}), (\ref{lemmaeq:R2-2}) and (\ref{lemmaeq:R1-2R2}). We omit the tedious calculation. With further computation, we get:
\begin{subequations}
\begin{eqnarray}
\label{eq2:R1bound-1}
  R_1 & \leq &  \max( n_{11}, n_{21}) \\
\label{eq2:R1bound-2}
  R_1 & \leq & (n_{11} -n_{12})^+ + \min \left\{ n_{11}, (n_{12}-C_{\sf FB2})^+ \right\} \nonumber \\
  &+& \min (n_{12}, C_{\sf FB 2} )   \\
\label{eq2:R2bound-1}
    R_2 & \leq &  \max( n_{22}, n_{12}) \\
\label{eq2:R2bound-2}
  R_2 & \leq & (n_{22} -n_{21})^+ + \min \left\{ n_{22}, (n_{21}-C_{\sf FB1})^+ \right\} \nonumber \\
  &+& \min (n_{21}, C_{\sf FB 1} ) \\
\label{eq2:R1R2bound-1}
  R_1 + R_2 &\leq&  (n_{11} - n_{12})^+  +  \max( n_{22}, n_{12})  \\
\label{eq2:R1R2bound-2}
  R_1 + R_2 & \leq &(n_{22} - n_{21})^+  +  \max( n_{11}, n_{21}) 
\end{eqnarray}
\begin{eqnarray}
\label{eq2:R1R2bound-3}
R_1 + R_2 & \leq & \max \left \{ (n_{11}-n_{12})^+,  n_{21} \right \} \\
&+&  \max \left \{ (n_{22}-n_{21})^+, n_{12} \right \} \nonumber \\
&+& \min \left\{ (n_{11} - n_{12})^+, n_{21}, C_{\sf FB1} \right \} \nonumber \\
&+& \min \left\{ (n_{22} - n_{21})^+, n_{12}, C_{\sf FB2} \right \} \nonumber \\
\label{eq2:2R1R2bound}
2R_1 + R_2 & \leq & (n_{11} - n_{12})^+  +  \max( n_{11}, n_{21}) \\
&+&  \max \left \{ (n_{22}-n_{21})^+,  n_{12}  \right \} \nonumber \\
&+& \min \left\{ (n_{22} - n_{21})^+, n_{12}, C_{\sf FB2} \right \} \nonumber \\
\label{eq2:R12R2bound}
R_1 + 2R_2  & \leq & (n_{22} - n_{21})^+  +  \max( n_{22}, n_{12}) \\
&+& \max \left\{ (n_{11}-n_{12})^+,  n_{21} \right\} \nonumber \\
&+& \min \left\{ (n_{11} - n_{12})^+, n_{21}, C_{\sf FB1} \right \}. \nonumber
\end{eqnarray}
\end{subequations}

We will show that the inequalities developed above are equivalent to the capacity region in Theorem~\ref{theorem:DICcapacity}. Note that (\ref{eq2:R1bound-2}) can be written as
\begin{align*}
R_1 \leq \left\{
  \begin{array}{ll}
n_{11} + C_{\sf FB2}, & \hbox{ $n_{11} + C_{\sf FB2} \leq n_{12}$;} \\
    \max( n_{11}, n_{12}),
& \hbox{otherwise.}
  \end{array}
\right.
\end{align*}
This shows that this inequality is implied by (\ref{eq:R1bound-1}) and (\ref{eq:R1bound-2}). Similarly, (\ref{eq2:R2bound-2}) is implied by (\ref{eq:R2bound-1}) and (\ref{eq:R2bound-2}).
Next consider (\ref{eq2:R1R2bound-3}),

\noindent $\bullet$ if $C_{\sf FB1} \leq \min \left\{ (n_{11} - n_{12})^+, n_{21} \right\}$, \\ $C_{\sf FB2} \leq \min \left\{ (n_{22} - n_{21})^+, n_{12} \right\}$:
\begin{align}
R_1 + R_2 &\leq \max \left \{ (n_{11}-n_{12})^+,  n_{21} \right \}  \\
&+  \max \left \{ (n_{22}-n_{21})^+, n_{12} \right \}  + C_{\sf FB1} + C_{\sf FB2}, \nonumber
\end{align}
\noindent $\bullet$ if $C_{\sf FB1} > \min \left\{ (n_{11} - n_{12})^+, n_{21} \right\}$, \\ $C_{\sf FB2} > \min \left\{ (n_{22} - n_{21})^+, n_{12} \right\}$:
\begin{align}
R_1 + R_2 \leq \max( n_{11}, n_{12}) + \max( n_{21}, n_{22}),
\end{align}
\noindent $\bullet$ if $C_{\sf FB1} \leq \min \left\{ (n_{11} - n_{12})^+, n_{21} \right\}$, \\ $C_{\sf FB2} > \min \left\{ (n_{22} - n_{21})^+, n_{12} \right\}$:
\begin{align}
R_1 + R_2 &\leq \max \left \{ (n_{11}-n_{12})^+,  n_{21} \right \} + C_{\sf FB1} + n_{12} \\
&+ (n_{22}-n_{21})^+, \nonumber
\end{align}
\noindent $\bullet$ and finally, if $C_{\sf FB1} > \min \left\{ (n_{11} - n_{12})^+, n_{21} \right\}$, \\ $C_{\sf FB2} \leq \min \left\{ (n_{22} - n_{21})^+, n_{12} \right\}$:
\begin{align}
R_1 + R_2 &\leq \max \left \{ (n_{22}-n_{21})^+,  n_{12} \right \} + C_{\sf FB2} + n_{21} \\
&+ (n_{11}-n_{12})^+. \nonumber
\end{align}

Note that the first case is implied by (\ref{eq:R1R2bound-3}); and the second case is implied by (\ref{eq:R1bound-1}) and (\ref{eq:R2bound-1}). Also notice that the third case is implied by (\ref{eq:R2bound-1}) and (\ref{eq:R12R2bound}); and the last case is implied by  (\ref{eq:R1bound-1}) and (\ref{eq:2R1R2bound}). Lastly, we consider (\ref{eq2:2R1R2bound}):
\begin{align*}
2R_1 +R_2 \leq \left\{
  \begin{array}{ll}
(n_{11} - n_{12})^+  +  \max( n_{11}, n_{21}) \\
+  \max \left \{ (n_{22}-n_{21})^+,  n_{12}  \right \} + C_{\sf FB2}, \\
 \qquad  \hbox{ if $C_{\sf FB2} \leq \min \left\{ (n_{22} - n_{21})^+, n_{12} \right\}$;} \\
    \max ( n_{11}, n_{12})  +  \max( n_{11}, n_{21}) \\
    +  (n_{22}-n_{21})^+, \\
 \qquad  \hbox{ if $C_{\sf FB2} > \min \left\{ (n_{22} - n_{21})^+, n_{12} \right\}$.}
  \end{array}
\right.
\end{align*}

Note that the first case is implied by (\ref{eq:2R1R2bound}); and the second case is implied by (\ref{eq:R1bound-1}) and (\ref{eq:R1R2bound-2}).
Similarly, it can be shown that (\ref{eq2:R12R2bound}) is implied by (\ref{eq:R12R2bound}), (\ref{eq:R2bound-1}) and (\ref{eq:R1R2bound-1}). Therefore, the inequalities of (\ref{eq2:R1bound-1})-(\ref{eq2:R12R2bound}) are equivalent to those of (\ref{eq:R1bound-1})-(\ref{eq:R12R2bound}), thus proving the achievablity of Theorem~\ref{theorem:DICcapacity}. 

\section{Proof of Theorem~\ref{theorem:Gaussianouterbound}}
\label{Appendix:Gaussian}
\textbf{Proof of~(\ref{eq:outR1_1}) and~(\ref{eq:outR1_2}):} Starting with Fano's inequality, we get
\begin{align*}
\begin{split}
N( R_1 - \epsilon_N) & \leq I(W_1;Y_1^{N}) \\
& \leq \sum  [h(Y_{1i}) - h(Y_{1i}|W_1,Y_{1}^{i-1},X_{1i})] \\
& = \sum  [h(Y_{1i}) - h(Z_{1i})],
\end{split}
\end{align*}
where the second inequality follows from the fact that conditioning reduces entropy and $X_{1i}$ is a function of $(W_1, Y_1^{i-1})$; and the third equality follows from the memoryless property of the channel. If $(R_1,R_2)$ is achievable, then $\epsilon_N \rightarrow 0$ as $N \rightarrow \infty$. Assume that $X_1$ and $X_2$ have covariance $\rho$, \emph{i.e.}, $\rho = \mathbb{E}[X_1X_2^*]$. We can then obtain~(\ref{eq:outR1_1}).

To obtain~(\ref{eq:outR1_2}), consider
\begin{align*}
\begin{split}
N&(R_1 - \epsilon_N) \leq I(W_1;Y_1^N, Y_2^N|W_2)  \\
& = \sum h(Y_{1i}, Y_{2i}|W_2,Y_1^{i-1},Y_2^{i-1}) - h(Y_1^N, Y_2^N |W_1, W_2) \\
&\overset{(a)}{=} \sum h(Y_{1i}, Y_{2i}|W_2,Y_1^{i-1}, Y_2^{i-1},X_{2i}) \\
&- \sum [h(Z_{1i}) + h(Z_{2i})]\\
&\overset{(b)}{=} \sum h( Y_{2i}|W_2,Y_1^{i-1},Y_2^{i-1}, X_{2i}) \\
&+ \sum h(Y_{1i}|W_2, Y_2^{i}, X_{2i},S_{1i}) - \sum [h(Z_{1i}) + h(Z_{2i}) ]  \\
&\overset{(c)}{\leq} \sum \left[ h(Y_{2i}|X_{2i} ) - h(Z_{2i}) \right] \\
&+ \sum \left[ h(Y_{1i}|X_{2i}, S_{1i}) - h(Z_{1i}) \right],
\end{split}
\end{align*}
where ($a$) follows from the fact that $X_{2}^{i}$ is a function of $(W_2,Y_2^{i-1})$ and $h(Y_1^{N},Y_2^{N}|W_1,W_2)= \sum [h(Z_{1i}) + h(Z_{2i})]$ (see Claim~\ref{claim-GaussianNoise} below); $(b)$ follows from the fact that $S_{1}^{i}:=h_{12} X_1^i + Z_2^i$ is a function of $(Y_{2}^{i},X_{2}^{i})$; ($c$) follows from the fact that conditioning reduces entropy. Hence, we get
\begin{align*}
\begin{split}
R_1 &\leq  h(Y_{2}|X_{2}) - h (Z_{2}) + h(Y_{1}|X_{2},S_{1}) - h (Z_{1}) \\
&\overset{(a)}{\leq} \log \left( 1 +   (1- |\rho|^2) \mathsf{INR}_{12} \right) \\
&+   \log \left( 1 +  \frac{ (1- |\rho|^2) \mathsf{SNR}_1}{ 1 + (1- |\rho|^2) \mathsf{INR}_{12}} \right),
\end{split}
\end{align*}
where $(a)$ follows from the fact that
\begin{align}
h(Y_{2}|X_{2}) &\leq \log 2 \pi e \left( 1 +   (1- |\rho|^2) \mathsf{INR}_{12} \right), \\
\label{eq:h_Y1_X2S1}
h(Y_1|X_2,S_1) & \leq \log 2 \pi e \left( 1+  \frac{(1-|\rho|^2)\mathsf{SNR}_1}{1 + (1-|\rho|^2) \mathsf{INR}_{12} } \right).
\end{align}
The inequality of~(\ref{eq:h_Y1_X2S1}) is obtained as follows. Given $(X_2,S_1)$, the variance of $Y_1$ is upper-bounded by
\begin{align*}
\begin{split}
\textrm{Var} \left[ Y_1| X_2, S_1 \right] &\leq K_{Y_1} - K_{Y_1 (X_2, S_1)}K_{(X_2,S_1)}^{-1}K_{Y_1 (X_2, S_1)}^{*},
\end{split}
\end{align*}
where
\begin{align}
\begin{split}
K_{Y_1} &= E\left[ |Y_1|^2 \right] \\
& = 1  + \mathsf{SNR}_1 + \mathsf{INR}_{21} +  \rho h_{11}^* h_{21} + \rho^* h_{11}h_{21}^*, \\
K_{Y_1 (X_2,S_1)} &= E \left[ Y_1 [X_2^*, S_1^*]  \right] \\
& = \left[\rho h_{11} + h_{21}, h_{12}^* h_{11} + \rho^* h_{21} h_{12}^* \right], \\
K_{(X_2,S_1)} & = E \left[ \left[
                             \begin{array}{cc}
                               |X_2|^2 & X_2 S_1^* \\
                               X_2^* S_1 & |S_1|^2 \\
                             \end{array}
                           \right]
 \right] \\
 & = \left[
             \begin{array}{cc}
               1 & \rho^* h_{12}^* \\
               \rho h_{12} & 1+ \mathsf{INR}_{12} \\
             \end{array}
           \right].
\end{split}
\end{align}
By further calculation, we can get~(\ref{eq:h_Y1_X2S1}).

\begin{claim}
\label{claim-GaussianNoise}
$h(Y_1^N, Y_2^N |W_1, W_2) = h(Y_1^N, S_1^N|W_1, W_2) = \sum [h(Z_{1i}) + h(Z_{2i})]$.
\end{claim}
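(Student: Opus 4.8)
The plan is to prove both equalities by the chain rule, exploiting the fact that once \emph{both} messages are revealed the channel evolves deterministically apart from the injected noise, so the outputs carry exactly the entropy of the noise. The structural ingredient we need — the analogue of Claim~\ref{claim-3_Gaussian} — is that conditioning on an appropriate prefix of the outputs renders the current channel inputs $X_{1i},X_{2i}$ deterministic.

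\emph{First equality.} Write, by the chain rule, $h(Y_1^N,Y_2^N|W_1,W_2)=\sum_{i=1}^N h(Y_{1i},Y_{2i}|Y_1^{i-1},Y_2^{i-1},W_1,W_2)$. Given $(Y_1^{i-1},Y_2^{i-1},W_1,W_2)$, the feedback signals $\tilde Y_1^{i-1}$ and $\tilde Y_2^{i-1}$ are determined (each $\tilde Y_{k,j}$ is a function of $Y_k^{j-1}$), hence so are $X_{1i}=e_{1i}(W_1,\tilde Y_1^{i-1})$ and $X_{2i}=e_{2i}(W_2,\tilde Y_2^{i-1})$. Thus, conditionally, $Y_{1i}$ is $Z_{1i}$ plus a constant and $Y_{2i}$ is $Z_{2i}$ plus a constant, so $h(Y_{1i},Y_{2i}|\,\cdot\,)=h(Z_{1i},Z_{2i}|\,\cdot\,)$; since $(Z_{1i},Z_{2i})$ is independent of $(Y_1^{i-1},Y_2^{i-1},W_1,W_2)$ (a function of past noise and the messages only) and $Z_{1i}\perp Z_{2i}$, this equals $h(Z_{1i})+h(Z_{2i})$. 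Summing over $i$ gives $\sum_i[h(Z_{1i})+h(Z_{2i})]$.

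\emph{Second equality.} Apply the same scheme to $h(Y_1^N,S_1^N|W_1,W_2)=\sum_i h(Y_{1i},S_{1i}|Y_1^{i-1},S_1^{i-1},W_1,W_2)$, recalling $S_{1i}=h_{12}X_{1i}+Z_{2i}$. The only new point is recovering $X_{2i}$ from $(Y_1^{i-1},S_1^{i-1},W_1,W_2)$, since $Y_2^{i-1}$ is not directly present in the conditioning. For this, use $Y_{2,j}=S_{1,j}+h_{22}X_{2,j}$ and a short induction on $j$: $X_{2,1}$ depends only on $W_2$, hence $Y_{2,1}$ is determined, hence $\tilde Y_{2,1}$ and $X_{2,2}$, and so on, so $(X_2^{i-1},Y_2^{i-1})$ is a function of $(S_1^{i-1},W_2)$; therefore $\tilde Y_2^{i-1}$ and $X_{2i}$ are determined, and $X_{1i}$ is determined as before. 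Then $Y_{1i}=h_{11}X_{1i}+h_{21}X_{2i}+Z_{1i}$ and $S_{1i}=h_{12}X_{1i}+Z_{2i}$ are each a constant plus independent noise given the conditioning, so $h(Y_{1i},S_{1i}|\,\cdot\,)=h(Z_{1i})+h(Z_{2i})$, and summation again yields $\sum_i[h(Z_{1i})+h(Z_{2i})]$. Since both entropies equal this common value, the claim follows. The main (and essentially only) delicate step is the sub-induction in the second part showing $Y_2^{i-1}$ is reconstructible from $(S_1^{i-1},W_2)$; everything else is translation-invariance of differential entropy together with independence of the current noise from past outputs and from the messages.
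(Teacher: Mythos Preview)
Your proof is correct and follows essentially the same approach as the paper: chain rule, add $X_{1i},X_{2i}$ to the conditioning since they are functions of the past, then invoke translation invariance and independence of the current noise. The paper dispatches the second equality with a one-word ``similarly'' (implicitly leaning on Claim~\ref{claim-3_Gaussian}), whereas you spell out the sub-induction $Y_{2,j}=S_{1,j}+h_{22}X_{2,j}$ that reconstructs $Y_2^{i-1}$ from $(S_1^{i-1},W_2)$; this is exactly the content of Claim~\ref{claim-3_Gaussian}, so your exposition is a bit more self-contained but not a different argument.
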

\begin{proof}
\begin{align*}
\begin{split}
h&(Y_1^{N},Y_2^{N}|W_1,W_2) = \sum h(Y_{1i},Y_{2i}|W_1,W_2,Y_1^{i-1},Y_2^{i-1}) \\
&\overset{(a)}{=} \sum h(Y_{1i},Y_{2i}|W_1,W_2,Y_1^{i-1},Y_2^{i-1},X_{1i},X_{2i}) \\
&\overset{(b)}{=}  \sum h(Z_{1i},Z_{2i}|W_1,W_2,Y_1^{i-1},Y_2^{i-1},X_{1i},X_{2i}) \\
&\overset{(c)}{=}  \sum \left[ h(Z_{1i}) + h(Z_{2i}) \right],
\end{split}
\end{align*}
where ($a$) follows from the fact that $X_{1i}$ is a function of $(W_1, Y_1^{i-1})$ and $X_{2i}$ is a function of $(W_2,Y_{2}^{i-1})$; ($b$) follows from the fact that $Y_{1i} = h_{11} X_{1i} + h_{21} X_{2i} + Z_{1i}$ and $S_{1i}:= h_{12} X_{1i} + Z_{2i}$; ($c$) follows from the memoryless property of the channel and the independence assumption of $Z_{1i}$ and $Z_{2i}$. Similarly, one can show that $ h(Y_1^N, S_1^N|W_1, W_2) = \sum [h(Z_{1i}) + h(Z_{2i})]$.
\end{proof}

\textbf{Proof of~(\ref{eq:outR1_3}):} Starting with Fano's inequality, we get
\begin{align*}
\begin{split}
N&(R_1 - \epsilon_N) \leq I(W_1;Y_1^N, \tilde{Y}_2^N|W_2)  \\
& = \sum h(Y_{1i}, \tilde{Y}_{2i}|W_2,Y_1^{i-1},\tilde{Y}_2^{i-1}) - h(Y_1^N, \tilde{Y}_2^N |W_1, W_2) \\
&\overset{(a)}{\leq} \sum h(Y_{1i}, \tilde{Y}_{2i}|W_2,Y_1^{i-1}, \tilde{Y}_2^{i-1},X_{2i}) - \sum h(Z_{1i}) \\
&{=} \sum H(\tilde{Y}_{2i}|W_2,Y_1^{i-1},\tilde{Y}_2^{i-1}, X_{2i}) \\
&+ \sum h(Y_{1i}|W_2, Y_1^{i-1}, \tilde{Y}_{2}^{i}, X_{2i}) - \sum h(Z_{1i})  \\
&\overset{(b)}{\leq} NC_{\sf FB2} + \sum [h(Y_{1i}|X_{2i}) - h(Z_{1i}) ],
\end{split}
\end{align*}
where ($a$) follows from the fact that $h(Y_1^N, \tilde{Y}_2^N |W_1, W_2) \geq \sum h(Z_{1i})$ (see Claim~\ref{claim-GaussianNoise2} below) and $X_{2i}$ is a function of $(W_2, \tilde{Y}_{2}^{i-1})$; ($c$) follows from the fact that $H(\tilde{Y}_{2i}) \leq C_{\sf FB2}$ and conditioning reduces entropy.
So we get
\begin{align*}
\begin{split}
R_1 &\leq  h(Y_{1}|X_{2}) - h (Z_{1}) + C_{\sf FB2} \\
& \leq \log \left( 1 +   (1- |\rho|^2) \mathsf{SNR}_{1} \right) +   C_{\sf FB2}.
\end{split}
\end{align*}

\begin{claim}
\label{claim-GaussianNoise2}
$h(Y_1^N, \tilde{Y}_2^N |W_1, W_2) \geq \sum h(Z_{1i})$.
\end{claim}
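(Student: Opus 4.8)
The plan is to follow the same peeling argument as in the proof of Claim~\ref{claim-GaussianNoise}, but to settle for an inequality rather than an equality: in the Gaussian setting, unlike the fully deterministic case, $X_{2i}$ is no longer a deterministic function of $(W_1,W_2,Y_1^{i-1})$, because the feedback signal $\tilde{Y}_2$ is allowed to depend on the noise $Z_2^N$. The first step is to discard the feedback component entirely. Since $\tilde{Y}_2^N$ takes values in a finite alphabet, its conditional entropy is non-negative, so
\[
h(Y_1^N,\tilde{Y}_2^N\mid W_1,W_2) = h(Y_1^N\mid W_1,W_2) + H(\tilde{Y}_2^N\mid Y_1^N,W_1,W_2) \geq h(Y_1^N\mid W_1,W_2),
\]
and it therefore suffices to prove $h(Y_1^N\mid W_1,W_2)\geq \sum_i h(Z_{1i})$.

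Next I would apply the chain rule, $h(Y_1^N\mid W_1,W_2)=\sum_i h(Y_{1i}\mid Y_1^{i-1},W_1,W_2)$, and handle each term separately. Since $X_{1i}=e_{1,i}(W_1,\tilde{Y}_1^{i-1})$ is a deterministic function of $(W_1,Y_1^{i-1})$, it is constant given the conditioning, so $h(Y_{1i}\mid Y_1^{i-1},W_1,W_2)=h(h_{21}X_{2i}+Z_{1i}\mid Y_1^{i-1},W_1,W_2)$. Then I would use ``conditioning reduces entropy'' in the favorable direction: adding $X_{2i}$ to the conditioning can only decrease this quantity, and given $X_{2i}$ the term $h_{21}X_{2i}$ is again a constant shift, hence
\[
h(Y_{1i}\mid Y_1^{i-1},W_1,W_2) \;\geq\; h(h_{21}X_{2i}+Z_{1i}\mid Y_1^{i-1},W_1,W_2,X_{2i}) \;=\; h(Z_{1i}\mid Y_1^{i-1},W_1,W_2,X_{2i}).
\]
Finally I would argue that $Z_{1i}$ is independent of $(Y_1^{i-1},W_1,W_2,X_{2i})$, so this last term equals $h(Z_{1i})$; summing over $i$ then closes the argument, and a symmetric statement with the roles swapped gives $h(Y_1^N,S_1^N\mid W_1,W_2)$ if needed.

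The point that needs care is precisely this last independence claim. The key observation is that $Y_1^{i-1}$ is a function of $(W_1,W_2)$ and the noise realizations $\{Z_{1j},Z_{2j}\}_{j\le i-1}$, while $X_{2i}=e_{2,i}(W_2,\tilde{Y}_2^{i-1})$, upon tracing $\tilde{Y}_2^{i-1}$ back through $Y_2^{i-2}$ exactly as in Claim~\ref{claim-3_Gaussian}, is a function of $(W_1,W_2)$ and the noise realizations up to time $i-2$; both are therefore independent of $Z_{1i}$ by the memoryless property of the channel and the independence of the messages from the noise. This is also the reason one must condition on $X_{2i}$ alone rather than on the full vector $X_2^N$: conditioning on $X_2^N$ would pull $Z_1^N$ into the picture through the feedback loop $Y_1^{j}\to\tilde{Y}_1^{j}\to X_{1,j+1}\to Y_{2,j+1}\to\tilde{Y}_{2,j+2}\to X_{2,j+2}$, destroying the independence with $Z_{1i}$. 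I expect the bookkeeping of these dependencies to be the only genuine obstacle; everything else is a routine application of the chain rule and the fact that conditioning reduces differential entropy.
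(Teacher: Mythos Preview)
Your proof is correct and follows essentially the same idea as the paper's. The only difference is in how the chain rule is ordered: the paper splits as $H(\tilde{Y}_2^{N}|W_1,W_2)+h(Y_1^{N}|W_1,W_2,\tilde{Y}_2^{N})$ and keeps $\tilde{Y}_2^{i-1}$ in the per-term conditioning, so that $X_{2i}$, being a function of $(W_2,\tilde{Y}_2^{i-1})$, can be inserted as an equality; you instead peel off $\tilde{Y}_2^N$ first via $H(\tilde{Y}_2^N\mid Y_1^N,W_1,W_2)\ge 0$ and then add $X_{2i}$ as extra conditioning via an inequality. Both routes land on $h(Z_{1i}\mid\cdot)=h(Z_{1i})$ by the same causality observation, and your remark that one must condition on $X_{2i}$ rather than $X_2^N$ (to avoid pulling future $Z_1$ realizations back through the feedback loop) is precisely the point that makes either argument valid.
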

\begin{proof}
\begin{align*}
\begin{split}
h&(Y_1^{N},\tilde{Y}_2^{N}|W_1,W_2) = H(\tilde{Y}_2^{N}|W_1,W_2) \\
&+ h(Y_1^{N}|W_1,W_2,\tilde{Y}_2^{N}) \\
&\overset{(a)}{\geq} \sum h(Y_{1i}|W_1,W_2,Y_1^{i-1},\tilde{Y}_2^{i-1},X_{1i},X_{2i}) 
\end{split}
\end{align*}
\begin{align*}
\begin{split}
&=  \sum h(Z_{1i}|W_1,W_2,Y_1^{i-1},\tilde{Y}_2^{i-1},X_{1i},X_{2i}) \\
&\overset{(b)}{=}  \sum h(Z_{1i}),
\end{split}
\end{align*}
where ($a$) follows from the fact that entropy is nonnegative and $X_{1i}$ is a function of $(W_1, Y_1^{i-1})$ and $X_{2i}$ is a function of $(W_2,\tilde{Y}_{2}^{i-1})$; ($b$) follows from the memoryless property of the channel.
\end{proof}

\textbf{Proof of (\ref{eq:outR12_2}):}
\begin{align*}
\begin{split}
&N(R_1 + R_2 - \epsilon_N) \leq I(W_1;Y_1^{N}) + I(W_2;Y_2^{N})  \\
&\overset{(a)}{\leq} I(W_1;Y_1^{N},S_1^{N}, W_2) + I(W_2;Y_2^{N}) \\
&\overset{(b)}= h(Y_1^{N},S_1^{N}|W_2)  - h(Y_1^{N},S_1^{N}|W_1,W_2) + I(W_2;Y_2^{N}) \\
&\overset{(c)}{=} h(Y_1^{N},S_1^{N}|W_2) - \sum \left[ h(Z_{1i}) + h(Z_{2i}) \right]  + I(W_2;Y_2^{N}) \\
&\overset{(d)}{=} h(Y_1^{N}|S_1^{N},W_2) - \sum  h(Z_{1i}) + h(Y_2^{N})  - \sum h(Z_{2i})  \\
&\overset{(e)}{=} h(Y_1^{N}|S_1^{N},W_2,X_2^{N}) - \sum h(Z_{1i}) + h(Y_2^{N}) - \sum h(Z_{2i})  \\
&\overset{(f)}{\leq} \sum_{i=1}^{N} \left[ h(Y_{1i}|S_{1i},X_{2i}) - h(Z_{1i}) + h(Y_{2i}) - h(Z_{2i}) \right],
\end{split}
\end{align*}
where ($a$) follows from the fact that adding information increases mutual information (providing a \emph{genie}); ($b$) follows from the independence of $W_1$ and $W_2$; ($c$) follows from $h(Y_1^{N},S_1^{N}|W_1,W_2) =\sum \left[ h(Z_{1i}) + h(Z_{2i}) \right]$ (see Claim \ref{claim-GaussianNoise}); $(d)$ follows from  $h(S_1^N|W_2)=h(Y_2^N|W_2)$ (see Claim \ref{claim-4}); ($e$) follows from the fact that $X_2^{N}$ is a function of $(W_2,S_1^{N-1})$ (see Claim \ref{claim-3_Gaussian}); ($f$) follows from the fact that conditioning reduces entropy. Hence, we get
\begin{align*}
R_1 + R_2 &\leq  h(Y_{1}|S_{1},X_{2}) - h(Z_1) + h(Y_{2}) - h(Z_2).
\end{align*}
Note that
\begin{align}
\begin{split}
\label{eq:entropyofY2}
h(Y_2) 
\leq \log 2 \pi e \left( 1 + \mathsf{SNR}_2 + \mathsf{INR}_{12} + 2 |\rho| \sqrt{\mathsf{SNR}_2 \cdot \mathsf{INR}_{12} } \right).
\end{split}
\end{align}
From (\ref{eq:h_Y1_X2S1}) and (\ref{eq:entropyofY2}),  we get the desired upper bound.

\textbf{Proof of (\ref{eq:out2R1R2}):}
\begin{align*}
N&(2R_1 + R_2- \epsilon_N) \leq I(W_1;Y_1^{N}) \\
&+ I(W_1;Y_1^{N}|W_2)  + I(W_2;Y_2^{N})  \\
&\overset{(a)}{\leq} [h(Y_1^{N}) - h(Y_1^N|W_1)]  + I(W_1;Y_1^{N}, S_1^{N}|W_2) \\
&+ [h(Y_2^{N}) -h (Y_2^{N}|W_2)] \\
&\overset{(b)}{=} [h(Y_1^{N}) - h(Y_1^N|W_1)]  +h(Y_1^{N}, S_1^{N}|W_2) \\
&- \sum [h(Z_{1i}) + h(Z_{2i})] + [h(Y_2^{N}) -h (Y_2^{N}|W_2)] \\
&\overset{(c)}{=}   [h(Y_1^{N}) - h(S_2^N|W_1)]  +h(Y_2^{N}) + h(Y_1^{N}|W_2, S_1^{N}) \\
&- \sum [h(Z_{1i}) + h(Z_{2i})] \\
&= h(Y_1^{N}) - \sum [h(Z_{1i}) + h(Z_{2i})] + h(Y_1^N|W_2, S_1^N)  \\
&+ I(W_1; S_2^N) - h(S_2^N) + h(Y_2^{N}) 
\end{align*}
\begin{align}
&= h(Y_1^{N}) - \sum [h(Z_{1i}) + h(Z_{2i})] + h(Y_1^N|W_2, S_1^N) \nonumber \\
&+ I(W_1; S_2^N) + I(S_2^N;Y_2^N) + h(Y_2^N|S_2^N) \nonumber \\
&- I(S_2^N;Y_2^N) - h(S_2^N|Y_2^N) \nonumber \\
&\overset{(d)}\leq   {\small \underbrace{h(Y_1^{N}) - h(Z_{1}^N) - h(Z_{2}^N) + h(Y_1^N|W_2, S_1^N) + h(Y_2^{N}|S_2^N)}_{T}} \nonumber \\
&+ I\left( S_2^N, \tilde{Y}_2^N,W_2; W_1 \right) - h\left( S_2^N| Y_2^N, W_2, \tilde{Y}_2^N \right) \nonumber \\
&= T + I\left( \tilde{Y}_2^N ; W_1 | W_2 \right) + I\left( S_2^N; W_1 | W_2,\tilde{Y}_2^N  \right) \nonumber \\
&- h\left( S_2^N| Y_2^N, W_2, \tilde{Y}_2^N \right) \nonumber \\
&= T + I\left( \tilde{Y}_2^N ; W_1 | W_2 \right) + I\left( Z_1^N; W_1 | W_2,\tilde{Y}_2^N  \right) \nonumber \\
&- h\left( Z_1^N| S_1^N, W_2, \tilde{Y}_2^N \right) \nonumber \\
&= \underbrace{T - h(Z_1^N)}_{T^\prime} + I\left( \tilde{Y}_2^N ; W_1 | W_2 \right) + h(Z_1^N) \nonumber \\
&- h\left( Z_1^N| W_1, W_2, \tilde{Y}_2^N \right) + I(Z_1^N;S_1^N|W_2,\tilde{Y}_2^N) \nonumber \\
&\overset{(e)}= T^\prime + I(Z_1^N;S_1^N|W_2,\tilde{Y}_2^N) + I\left( \tilde{Y}_2^N ; W_1,Z_1^N | W_2 \right) \nonumber \\
&= T^\prime + I(Z_1^N;S_1^N,\tilde{Y}_2^N|W_2) + I\left( \tilde{Y}_2^N ; W_1| W_2,Z_1^N \right) \nonumber \\
&\overset{(f)}\leq T^\prime + I\left( \tilde{Y}_2^N ; W_1| W_2,Z_1^N \right) \nonumber \\
&+ I(Z_1^N;\tilde{Y}_2^N,W_1,\tilde{Y}_1^N,Z_2^N|W_2) \nonumber \\
&\overset{(g)}= T^\prime + I\left( \tilde{Y}_2^N ; W_1| W_2,Z_1^N \right) \nonumber \\
&+ I(Z_1^N;\tilde{Y}_1^N|W_1,W_2,Z_2^N) \nonumber \\
&\overset{(h)}= h(Y_1^N) - h(Z_1^N) + h(Y_2^N|S_2^N) - h(Z_2^N) \nonumber \\
&+ h(Y_1^N|W_2, S_1^N,X_2^N) - h(Z_1^N) \nonumber \\
&+ I\left( \tilde{Y}_2^N ; W_1| W_2,Z_1^N \right) \nonumber \\
&+ I(\tilde{Y}_1^N;Z_1^N|W_1,W_2,Z_2^N) \nonumber \\
&\overset{(i)}{\leq} N C_{\sf FB1} + N C_{\sf FB2} + \sum [h(Y_{1i}) - h(Z_{1i})] \nonumber \\
&+ \sum [h(Y_{1i} |S_{1i}, X_{2i} ) - h(Z_{1i}) ] \nonumber \\ 
&+ \sum [h(Y_{2i} | S_{2i} ) - h(Z_{2i}) ],
\end{align}
where ($a$) follows from the non-negativity of mutual information; ($b$) follows from
$h(Y_1^N, S_1^N|W_1,W_2) = \sum [h(Z_{1i}) + h(Z_{2i})]$ (by Claim~\ref{claim-GaussianNoise}); ($c$) follows from Claim~\ref{claim-4};
($d$) follows from the non-negativity of mutual information and the fact that conditioning reduces entropy; ($e$) is true since $Z_k^N$, $W_1$, and $W_2$ are mutually independent; ($f$) follows from the fact that $S_1^N$ is a function of $(W_1,\tilde{Y}_1^{N},Z_2^N)$; ($g$) can be obtained by taking similar steps as in $(\ref{eq:zeromutual})$; ($h$) follows from Claim~\ref{claim-3_Gaussian}; ($i$) follows from the fact that $H(\tilde{Y}_k^N) \leq NC_{\sf FB k}$ and conditioning reduces entropy.

Also note that
\begin{align}
h(Y_1|X_2,S_1) \leq \log 2 \pi e \left( 1+  \frac{(1-|\rho|^2)\mathsf{SNR}_1}{1 + (1-|\rho|^2) \mathsf{INR}_{12} } \right).
\end{align}

Therefore, we get the desired bound. 

\section{Gap Analysis of Theorem~\ref{THM:Gaussian-sumrate}}
\label{Appendix:Gap}
We show that our proposed achievability strategy in Section~\ref{achievability} results in a sum-rate to within at most $14.8$ bits/sec/Hz of the outerbounds in Corollary~\ref{COL:Gaussian-sumrate}. It is sufficient to prove this for the extreme case of feedback capacity assignment, \emph{i.e.}, where $C_{\sf FB1} = C_{\sf FB}$ and $C_{\sf FB2} = 0$ (or symmetrically $C_{\sf FB1} = 0$ and $C_{\sf FB2} = C_{\sf FB}$). The reason is as follows. Consider our achievability strategy for the general feedback strategy described previously, and let $C_{\sf FB1} = \lambda C_{\sf FB}$ and $C_{\sf FB2} =( 1 - \lambda ) C_{\sf FB}$, such that $0 \leq \lambda \leq 1$. Under these assumptions, for any value of $\lambda$, the outerbounds on sum-rate in (\ref{eq:sum1}),(\ref{eq:sum2}), and (\ref{eq:sum4}) would be the same, call the minimum of them $C^\ast$. Assuming that we can achieve to within $14.8$ bits/sec/Hz of this outer-bound in the extreme cases, then, with the described achievability scheme for general feedback assignment, we can achieve 
\begin{align}
& R_{\sf SUM} = \lambda R_{\sf SUM}^{C_{\sf FB2} = 0} + \left( 1 - \lambda \right) R_{\sf SUM}^{C_{\sf FB1} = 0} \\
& \quad \geq \lambda \left( C^\ast - 14.8 \right) + \left( 1 - \lambda \right) \left( C^\ast - 14.8 \right) = \left( C^\ast - 14.8 \right). \nonumber
\end{align}

We now prove our claim for the extreme cases. By symmetry, we only need to analyze the gap in one case, say $C_{\sf FB1} = C_{\sf FB}$ and $C_{\sf FB2} = 0$. We assume that ${\sf INR} \geq 1$, since for the case when ${\sf INR} < 1$, by ignoring the feedback and treating interference as noise, we can achieve a sum-rate of
\begin{align}
\label{}
2 \log \left( 1 + \frac{\sf SNR}{1 + {\sf INR}} \right),
\end{align}
which is at most within $2.6$ bits of outerbound (\ref{eq:sum2}) in Corollary~\ref{COL:Gaussian-sumrate}:
\begin{align}
\label{}
&\; \log \left( 1 +  \frac{  \mathsf{SNR}}{ 1 +  \mathsf{INR}} \right) +  \log \left( 1+ \mathsf{SNR} + \mathsf{INR} + 2 \sqrt{ \mathsf{SNR}  \cdot \mathsf{INR}} \right) \nonumber \\
&\; - 2 \log \left( 1 + \frac{\sf SNR}{1 + {\sf INR}} \right) \nonumber \\
&\; \overset{({\sf INR} \leq 1)}{\leq} \log \left( 1+ \mathsf{SNR} + \mathsf{INR} + 2 \sqrt{ \mathsf{SNR}  \cdot \mathsf{INR}} \right) \nonumber \\
&\; - \log \left( 1 + \frac{\sf SNR}{2} \right) \nonumber \\
&\; \overset{({\sf INR} \leq 1)}{\leq} \log \left( 2 + 3 \mathsf{SNR} \right) - \log \left( 1 + {\sf SNR} \right) + 1 \nonumber \\
&\; = \log \left( \frac{2 + 3 \mathsf{SNR}}{1 + {\sf SNR}} \right) + 1 \nonumber \\
&\; \leq \log \left( 3 \right) + 1 \leq 2.6.
\end{align}

We consider five different subcases.

\noindent {\bf Case (a) $\log \left( {\sf INR} \right) \leq \frac{1}{2} \log \left( {\sf SNR} \right)$:} 

For this case, we pick the following set of power levels\footnotemark \footnotetext{Remember that starting the beginning of Section~\ref{achievability}, we have assumed that ${\sf INR} \geq 1$. Hence, we are not encountering division by zero in power assignments of (\ref{pwr1}).}:
\begin{equation}
\label{pwr1}
\left\{ \begin{array}{ll}
& P_1^{(1)} = \left( \frac{1}{\sf INR} - \frac{1}{\sf SNR} \min \{ 2^{C_{\sf FB}}, {\sf INR} - 1 \} \right)^+ \\
& P_1^{(2)} = \frac{1}{\sf SNR} \min \{ 2^{C_{\sf FB}}, {\sf INR} - 1 \} \\ 
& P_1^{(3)} = \frac{1}{\sf INR} \min \{ 2^{C_{\sf FB}}, {\sf INR} - 1 \} \\ 
& P_2^{(1)} = \frac{1}{\sf INR} \\
& P_2^{(2)} = \frac{1}{2\sf INR} \min \{ 2^{C_{\sf FB}}, {\sf INR} - 1 \} 
\end{array} \right.
\end{equation}

Note that the power levels are non-negative, and at transmitter $1$, we have
\begin{align}
P_1 &= P_1^{(1)} + P_1^{(2)} + P_1^{(3)} \nonumber \\
&= \left( \frac{1}{\sf INR} - \frac{1}{\sf SNR} \min \{ 2^{C_{\sf FB}}, {\sf INR} - 1 \} \right)^+ \nonumber \\
&+ \frac{1}{\sf SNR} \min \{ 2^{C_{\sf FB}}, {\sf INR} - 1 \} + \frac{1}{\sf INR} \min \{ 2^{C_{\sf FB}}, {\sf INR} - 1 \} \nonumber \\
&\overset{(a)}{=} \frac{1}{\sf INR} - \frac{1}{\sf SNR} \min \{ 2^{C_{\sf FB}}, {\sf INR} - 1 \} \nonumber \\
&+ \frac{1}{\sf SNR} \min \{ 2^{C_{\sf FB}}, {\sf INR} - 1 \} + \frac{1}{\sf INR} \min \{ 2^{C_{\sf FB}}, {\sf INR} - 1 \} \nonumber \\
&= \frac{1}{\sf INR} + \frac{1}{\sf INR} \min \{ 2^{C_{\sf FB}}, {\sf INR} - 1 \} \nonumber \\
&\leq \frac{1}{\sf INR} + \frac{{\sf INR} - 1}{\sf INR} \leq 1,
\end{align}
where $(a)$ follows from the fact that
\begin{align}
\frac{1}{\sf SNR} \min \{ 2^{C_{\sf FB}}, {\sf INR} - 1 \} \leq \frac{{\sf INR} - 1}{\sf SNR} \overset{\left( {\sf INR}^2 \leq {\sf SNR} \right)}{\leq} \frac{1}{\sf INR}.
\end{align}
At transmitter $2$,
\begin{align}
P_2 &= P_2^{(1)} + P_2^{(2)} \nonumber \\
&= \frac{1}{\sf INR} + \frac{1}{2{\sf INR}} \min \{ 2^{C_{\sf FB}}, {\sf INR} - 1 \} \nonumber \\
&\leq \frac{1}{\sf INR} + \frac{{\sf INR} - 1}{\sf INR} \leq 1.
\end{align}

By plugging the given values of power levels into our achievable sum-rate $R_{\sf SUM}^{(a)}$ defined in (\ref{eq:rsuma}), we get
\begin{align}
R_{\sf SUM}^{(a)} &= \log \left( 1 + \frac{ {\sf SNR} P_1^{(1)}}{1 + {\sf INR} P_2^{(1:2)} + {\sf SNR} P_1^{(2)} } \right) \nonumber \\
& + \log \left( \frac{{\sf SNR} P_1^{(2)}}{2} \right)^+ + \log \left( 1 + \frac{ {\sf SNR} P_2^{(1)}}{2} \right) \nonumber \\
& + \log \left( \frac{ {\sf INR} P_2^{(2)}}{1 + {\sf INR} P_2^{(1)} } \right)^+ \nonumber \\
& = \log \left( \frac{ 2 + {\sf INR} P_2^{(2)} + {\sf SNR} P_1^{(1:2)}}{2 + {\sf INR} P_2^{(2)} + {\sf SNR} P_1^{(2)} } \right) \nonumber \\
& + \log \left( \frac{ {\sf SNR} P_1^{(2)}}{2} \right)^+ + \log \left( 1 + \frac{\sf SNR}{2 {\sf INR}} \right) \nonumber \\
& + \log \left( \frac{ {\sf INR} P_2^{(2)}}{1 + {\sf INR} P_2^{(1)} } \right)^+ \nonumber \\
& \overset{(a)}{\geq} \log \left( \frac{ 2 + {\sf INR} P_2^{(2)} + {\sf SNR} P_1^{(1:2)}}{2 \left( 1 + {\sf SNR} P_1^{(2)} \right)} \times \frac{1 + {\sf SNR} P_1^{(2)}}{4} \right) \nonumber \\
& + \log \left( 1 + \frac{\sf SNR}{2 {\sf INR}} \right) + \log \left( \frac{ {\sf INR} P_2^{(2)}}{1 + {\sf INR} P_2^{(1)} } \right)^+ \nonumber \\
& = \log \left( 2 + {\sf INR} P_2^{(2)} + {\sf SNR} P_1^{(1:2)} \right) + \log \left( 1 + \frac{\sf SNR}{2 {\sf INR}} \right)  \nonumber \\
& + \log \left( \frac{\min \{ 2^{C_{\sf FB}}, {\sf INR} - 1 \} }{4} \right)^+ - 3 \nonumber 
\end{align}
\begin{align}
\label{acha}
& \geq \log \left( 2 + {\sf INR} P_2^{(2)} + {\sf SNR} P_1^{(1:2)} \right) + \log \left( 1 + \frac{\sf SNR}{\sf INR} \right) \nonumber \\
& + \min \{ C_{\sf FB}, \log \left( {\sf INR} - 1 \right)^+ \} - 6 \nonumber \\
& \geq \log \left( 2 + {\sf INR} P_2^{(2)} + {\sf SNR} P_1^{(1:2)} \right) + \log \left( 1 + \frac{\sf SNR}{\sf INR} \right) \nonumber \\
& + \min \{ C_{\sf FB}, \log \left( 1 + {\sf INR} \right) \} - \log \left( 3 \right) - 6, 
\end{align}
where $(a)$ follows from the assumption ${\sf SNR} \geq {\sf INR} \geq 1$, and the last inequality holds since
\begin{equation}
\log \left( {\sf INR} - 1 \right)^+ \geq \log \left( 1 + {\sf INR} \right) - 3 \hspace{2mm} \forall~{\sf INR} \geq 1.
\end{equation} 

\noindent $\bullet$ If $C_{\sf FB} \leq \log \left( 1 + {\sf INR} \right)$: Considering the outerbound in (\ref{eq:sum4}), in this case we can write
\begin{align}
\label{outa}
&R_1 + R_2 \leq 2 \log \left( 1 + {\sf INR} + \frac{ {\sf SNR} }{1+ {\sf INR}} \right) + C_{\sf FB} \nonumber \\
&\; \leq 2 \log \left( 1 + {\sf INR} + \frac{ {\sf SNR} }{\sf INR} \right) + C_{\sf FB} \nonumber \\
&\; = 2 \log \left( 1 + \frac{ {\sf INR}^2 +  {\sf SNR} }{\sf INR} \right) + C_{\sf FB} \nonumber \\
&\; \overset{\left( {\sf INR}^2 \leq {\sf SNR} \right)}{\leq} 2 \log \left( 1 + \frac{ 2 {\sf SNR} }{\sf INR} \right) + C_{\sf FB} \nonumber \\
&\; \leq 2 \log \left( 1 + \frac{\sf SNR}{\sf INR} \right) + C_{\sf FB} + 2.
\end{align}

The gap between the achievable sum-rate of (\ref{acha}) and the outerbound in (\ref{outa}), is upper bounded by
\begin{align*}
\begin{split}
&\; 8 + \log \left( 3 \right) + 2 \log \left( 1 + \frac{\sf SNR}{\sf INR} \right) + C_{\sf FB} \\
&\; - \log \left( 2 + {\sf INR} P_2^{(2)} + {\sf SNR} P_1^{(1:2)} \right) \nonumber \\
&\; - \log \left( 1 + \frac{\sf SNR}{\sf INR} \right) - C_{\sf FB} \nonumber \\
&\; \leq 8 + \log \left( 3 \right) + \log \left( 1 + \frac{\sf SNR}{\sf INR} \right) - \log \left( 2 + \frac{\sf SNR}{\sf INR} \right) \nonumber \\
&\; \leq 8 + \log \left( 3 \right).
\end{split}
\end{align*}

\noindent $\bullet$ If $C_{\sf FB} > \log \left( 1 + {\sf INR} \right)$: Considering the outerbound in (\ref{eq:sum2}), in this case we can write
\begin{align}
\label{outa1}
&R_1 + R_2 \leq \log \left( 1 +  \frac{  \mathsf{SNR}}{ 1 +  \mathsf{INR}} \right) \nonumber \\
&\; + \log \left( 1+ \mathsf{SNR} + \mathsf{INR} + 2 \sqrt{ \mathsf{SNR}  \cdot \mathsf{INR}} \right) \nonumber \\
&\; \leq \log \left( 1 + \frac{\sf SNR}{\sf INR} \right) \nonumber \\
&\; + \log \left( 1+ \mathsf{SNR} + \mathsf{INR} + 2 \sqrt{ \mathsf{SNR}  \cdot \mathsf{INR}} \right).
\end{align}

The gap between the achievable sum-rate of (\ref{acha}) and the outerbound in (\ref{outa1}), is upper bounded by
\begin{align*}
\begin{split}
&\; 6 + \log \left( 3 \right) + \log \left( 1 + \frac{\sf SNR}{\sf INR} \right) \nonumber \\
&\; + \log \left( 1+ \mathsf{SNR} + \mathsf{INR} + 2 \sqrt{ \mathsf{SNR}  \cdot \mathsf{INR}} \right) \nonumber \\
&\; - \log \left( 1 + \frac{\sf SNR}{\sf INR} \right) - \log \left( 2 + \frac{\sf SNR}{\sf INR} \right) \nonumber \\
&\; - \log \left( 1 + {\sf INR} \right) \nonumber \\
&\; \leq 6 + \log \left( 3 \right) - \log \left( 2 + \frac{\sf SNR}{\sf INR} \right) \nonumber \\
&\; + \log \left( 1 + \frac{{\sf SNR} + 2 \sqrt{ \mathsf{SNR}  \cdot \mathsf{INR}}}{\sf INR} \right) \nonumber \\
&\; \leq 6 + \log \left( 3 \right) - \log \left( 2 + \frac{\sf SNR}{\sf INR} \right) \nonumber \\
&\; + \log \left( 1 + \frac{3{\sf SNR}}{\sf INR} \right) \nonumber \\
&\; \leq 6 + 2 \log \left( 3 \right).
\end{split}
\end{align*}

Hence, we conclude that the gap between the inner-bound and the outer-bound is at most $8 + \log \left( 3 \right) \leq 9.6$ bits/sec/Hz.

\noindent {\bf Case (b) $\frac{1}{2} \log \left( {\sf SNR} \right) \leq \log \left( {\sf INR} \right) \leq \frac{2}{3} \log \left( {\sf SNR} \right)$:}

For this case, we pick the following set of power levels\footnotemark \footnotetext{${\sf INR} \geq 1$.}:
\begin{equation}
\label{}
\left\{ \begin{array}{ll}
& P_1^{(1)} = \left( \frac{1}{\sf INR} - \frac{1}{\sf SNR} \min \{ 2^{C_{\sf FB}}, \frac{{\sf SNR}^2}{{\sf INR}^3} - 1 \} \right)^+ \\
& P_1^{(2)} = \frac{1}{\sf SNR} \min \{ 2^{C_{\sf FB}}, \frac{{\sf SNR}^2}{{\sf INR}^3} - 1 \} \\ 
& P_1^{(3)} = \frac{1}{\sf INR} \min \{ 2^{C_{\sf FB}}, \frac{{\sf SNR}^2}{{\sf INR}^3} - 1 \} \\
& P_1^{(4)} = \left( 1 - P_1^{(1:3)} \right)^+ \\ 
& P_2^{(1)} = \frac{1}{\sf INR} \\
& P_2^{(2)} = \frac{1}{2{\sf INR}} \min \{ 2^{C_{\sf FB}}, \frac{{\sf SNR}^2}{{\sf INR}^3} - 1 \} \\
& P_2^{(3)} = \left( 1 - P_2^{(1:2)} \right)^+
\end{array} \right.
\end{equation}

All the power levels are non-negative. Also, we have
\begin{align}
& P_1^{(1)} + P_1^{(2)} + P_1^{(3)} \nonumber \\
&= \left( \frac{1}{\sf INR} - \frac{1}{\sf SNR} \min \{ 2^{C_{\sf FB}}, \frac{{\sf SNR}^2}{{\sf INR}^3} - 1 \} \right)^+ \nonumber \\
&+ \frac{1}{\sf SNR} \min \{ 2^{C_{\sf FB}}, \frac{{\sf SNR}^2}{{\sf INR}^3} - 1 \} + \frac{1}{\sf INR} \min \{ 2^{C_{\sf FB}}, \frac{{\sf SNR}^2}{{\sf INR}^3} - 1 \} \nonumber \\
&\overset{(a)}{=} \frac{1}{\sf INR} - \frac{1}{\sf SNR} \min \{ 2^{C_{\sf FB}}, \frac{{\sf SNR}^2}{{\sf INR}^3} - 1 \} \nonumber \\
&+ \frac{1}{\sf SNR} \min \{ 2^{C_{\sf FB}}, \frac{{\sf SNR}^2}{{\sf INR}^3} - 1 \} + \frac{1}{\sf INR} \min \{ 2^{C_{\sf FB}}, \frac{{\sf SNR}^2}{{\sf INR}^3} - 1 \} \nonumber \\
&= \frac{1}{\sf INR} + \frac{1}{\sf INR} \min \{ 2^{C_{\sf FB}}, \frac{{\sf SNR}^2}{{\sf INR}^3} - 1 \} \nonumber \\
&\leq \frac{{\sf SNR}^2}{{\sf INR}^4} \overset{(\sqrt{\sf SNR} \leq {\sf INR})}{\leq} \frac{\sqrt{\sf SNR}}{\sf INR} \leq 1,
\end{align}
where $(a)$ follows from the fact that
\begin{align}
\frac{1}{\sf SNR} \min \{ 2^{C_{\sf FB}}, \frac{{\sf SNR}^2}{{\sf INR}^3} - 1 \} \leq \frac{1}{\sqrt {\sf SNR}} \leq \frac{1}{\sf INR}.
\end{align}

Since $P_1^{(4)} = \left( 1 - P_1^{(1:3)} \right)^+$, we conclude that $P_1 \leq 1$. Similarly, we can show that $P_2 \leq 1$.

By plugging the given values of power levels into our achievable sum-rate $R_{\sf SUM}^{(b)}$ defined in (\ref{eq:rsumb}), we have
\begin{align}
\label{achb}
& R_{\sf SUM}^{(b)} = \log \left( \frac{1 + {\sf INR} P_2^{(1)} + {\sf SNR} P_1^{(1:2)}}{1 + {\sf INR} P_2^{(1)}} \right) \nonumber \\
&\; + \log \left( \frac{2 {\sf INR} + {\sf SNR} + {\sf INR}^2 - 2 {\sf INR}}{2 {\sf INR} + {\sf SNR}} \right) \nonumber \\
&\; + \log \left( 1 + \frac{\sf SNR}{2 {\sf INR}} \right) \nonumber \\
&\; + \log \left( \frac{2 {\sf INR} + {\sf SNR} + {\sf INR}^2 - 3/2 {\sf INR}}{2 {\sf INR} + {\sf SNR}} \right) \nonumber \\
&\; + \min \{ C_{\sf FB}, \log \left( \frac{{\sf SNR}^2}{{\sf INR}^3} - 1 \right)^+ \} - 3 \nonumber \\
&\; \geq \log \left( \frac{2 + {\sf SNR} P_1^{(1:2)}}{2} \right) \nonumber \\
&\; + \log \left( 1 + \frac{\sf SNR}{2 {\sf INR}} \right) + 2 \log \left( \frac{{\sf INR}^2}{3 {\sf SNR}} \right) \nonumber \\
&\; + \min \{ C_{\sf FB}, \log \left( \frac{{\sf SNR}^2}{{\sf INR}^3} - 1 \right)^+ \} - 3 \nonumber \\
&\; \geq 2 \log \left( 1 + \frac{\sf SNR}{2 {\sf INR}} \right) + 2 \log \left( \frac{ {\sf INR}^2}{3 {\sf SNR}} \right) \nonumber \\
&\; + \min \{ C_{\sf FB}, \log \left( \frac{{\sf SNR}^2}{{\sf INR}^3} - 1 \right)^+ \} - 3 \nonumber \\
&\; \geq 2 \log \left( \frac{1 + {\sf SNR}}{2 {\sf INR}} \right) + 2 \log \left( {\sf INR}^2 \right)  - 2 \log \left( 3 {\sf SNR} \right) \nonumber \\
&\; + \min \{ C_{\sf FB}, \log \left( \frac{{\sf SNR}^2}{{\sf INR}^3} - 1 \right)^+ \} - 3 \nonumber \\
&\; = 2 \log \left( 1 + {\sf SNR} \right) - 2 \log \left( {\sf INR} \right) \nonumber\\
&\; + 4 \log \left( {\sf INR} \right) - 2 \log \left( 3 {\sf SNR} \right) \nonumber \\
&\; + \min \{ C_{\sf FB}, \log \left( \frac{{\sf SNR}^2}{{\sf INR}^3} - 1 \right)^+ \} - 3 \nonumber \\
&\; \geq  2 \log \left( 1 + {\sf INR} \right) \nonumber\\
&\; + \min \{ C_{\sf FB}, \log \left( \frac{{\sf SNR}^2}{{\sf INR}^3} - 1 \right)^+ \} - 5 - 2 \log \left( 3 \right). 
\end{align}

We first simplify the outerbounds in (\ref{eq:sum2}) and (\ref{eq:sum4}). Considering the outerbound in (\ref{eq:sum4}), for this case we can write
\begin{align}
\label{outb1}
& R_1 + R_2 \leq 2 \log \left( 1 + {\sf INR} + \frac{ {\sf SNR} }{1+ {\sf INR}} \right) + C_{\sf FB} \nonumber \\
&\; \leq 2 \log \left( 1 + {\sf INR} + \frac{ {\sf SNR} }{{\sf INR}} \right) + C_{\sf FB} \nonumber \\
&\; \overset{({\sf SNR} \leq {\sf INR}^2)}{\leq} 2 \log \left( 1 + {\sf INR} + {\sf INR} \right) + C_{\sf FB} \nonumber \\
&\; = 2 \log \left( 1 + 2 {\sf INR} \right) + C_{\sf FB} \nonumber \\
&\; \leq 2 \log \left( 1 + {\sf INR} \right) + C_{\sf FB} + 2,
\end{align}
and for the outerbound in (\ref{eq:sum2}), we have
\begin{align}
\label{outb2}
& R_1 + R_2 \leq \log \left( 1 +  \frac{  \mathsf{SNR}}{ 1 +  \mathsf{INR}} \right) \nonumber \\
&+  \log \left( 1+ \mathsf{SNR} + \mathsf{INR} + 2 \sqrt{ \mathsf{SNR}  \cdot \mathsf{INR}} \right) \nonumber \\
&\overset{({\sf INR} \leq {\sf SNR})}{\leq} \log \left( 1 + {\sf INR} + {\sf SNR} \right) - \log \left( 1 + {\sf INR} \right) \nonumber \\
&+  \log \left( 1+ 4 {\sf SNR} \right) \nonumber \\
&\overset{({\sf INR} \leq {\sf SNR})}{\leq} 2 \log \left( 1 + {\sf SNR} \right) - \log \left( 1 + {\sf INR} \right) + 3.
\end{align}

We consider two possible scenarios based on $C_{\sf FB}$,

\noindent (1) $C_{\sf FB} \leq \log \left( \frac{{\sf SNR}^2}{{\sf INR}^3} - 1 \right)^+$:

With this assumption, we have
\begin{align}
\label{achb1}
R_{\sf SUM}^{(b)}  \geq 2 \log \left( 1 + {\sf INR} \right) + C_{\sf FB} - 5 - 2 \log \left( 3 \right),
\end{align}
note that, in this case
\begin{align}
C_{\sf FB} &\leq \log \left( \frac{{\sf SNR}^2}{{\sf INR}^3} - 1 \right)^+ \leq \log \left( 1 + \frac{{\sf SNR}^2}{{\sf INR}^3} \right) \nonumber \\
&\overset{(\sqrt{\sf SNR} \leq {\sf INR})}{\leq} \log \left( 1 + {\sf INR} \right),
\end{align}
therefore, from (\ref{outb1}) we get
\begin{align}
\label{outb11}
R_1 + R_2 & \leq 2 \log \left( 1 + {\sf INR} \right) + C_{\sf FB} + 2.
\end{align}

Hence, the gap between the achievable sum-rate of (\ref{achb1}) and the outer-bound in (\ref{outb11}) is at most $7 + 2 \log \left( 3 \right) \leq 10.2$ bits/sec/Hz.

\noindent (2) $C_{\sf FB} \geq \log \left( \frac{{\sf SNR}^2}{{\sf INR}^3} - 1 \right)^+$:

We have
\begin{align}
\label{achb2}
R_{\sf SUM}^{(b)} &\geq 2 \log \left( 1 + {\sf INR} \right) + \log \left( 1 + \frac{{\sf SNR}^2}{{\sf INR}^3} \right)^+ - 5 - 3 \log \left( 3 \right) \nonumber \\
&\geq 2 \log \left( 1 + {\sf INR} \right) + 2 \log \left( {\sf SNR} \right) \nonumber \\
&- 3 \log \left( {\sf INR} \right) - 5 - 3 \log \left( 3 \right) \\
&\geq 2 \log \left( 1 + {\sf INR} \right) + 2 \log \left( 1 + {\sf SNR} \right) \nonumber \\
&- 3 \log \left( 1 + {\sf INR} \right) - 7 - \log \left( 3 \right) \nonumber \\
&= 2 \log \left( 1 + {\sf SNR} \right) - \log \left( 1 + {\sf INR} \right) - 7 - 3 \log \left( 3 \right). \nonumber
\end{align}

Hence, the gap between the achievable sum-rate of (\ref{achb2}) and the outer-bound in (\ref{outb2}) is at most $10 + 3 \log \left( 3 \right) \leq 14.8$ bits/sec/Hz. As a result,  the gap between the inner-bound and the minimum of the outer-bounds in (\ref{eq:sum2}) and (\ref{eq:sum4}) is at most $14.8$ bits/sec/Hz.

\noindent {\bf Case (c) $2 \log \left( {\sf SNR} \right) \leq \log \left( {\sf INR} \right)$:}

\noindent $\bullet$ If ${\sf SNR \leq 1}$, pick 
$$P_2^{(2)} = P_1^{(3)} = \frac{1}{\sf INR} \min \{ 2^{C_{\sf FB}}, {\sf INR} \},$$ 
and set all other power levels equal to zero. By plugging the given values of power levels into our achievable sum-rate $R_{\sf SUM}^{(c)}$ defined in (\ref{eq:rsumc}), we get
\begin{align}
\label{eq:rsumc1}
& R_{\sf SUM}^{(c)} =  \log \left( \frac{\min \{ 2^{C_{\sf FB}}, {\sf INR} \} }{2} \right) \nonumber \\
&\; = \min \{ C_{\sf FB}, \log \left( {\sf INR} \right) \} - 1 \nonumber \\
&\; \geq \min \{ C_{\sf FB}, \log \left( 1 + {\sf INR} \right) \} - 2. 
\end{align}

Consider the outerbounds in (\ref{eq:sum1}) and (\ref{eq:sum2}), under the assumptions of case (c) and ${\sf SNR \leq 1}$, we have
\begin{align}
\label{}
& R_1 + R_2 \leq \min \{ 2 \log \left( 1 + \mathsf{SNR} \right) + C_{\sf FB}, \nonumber \\
& \log \left( 1 +  \frac{  \mathsf{SNR}}{ 1 +  \mathsf{INR}} \right) \nonumber \\
& + \log \left( 1+ \mathsf{SNR} + \mathsf{INR} + 2 \sqrt{ \mathsf{SNR}  \cdot \mathsf{INR}} \right) \} \nonumber \\
& \leq \min \{ 2 \log \left( 2 \right) + C_{\sf FB}, \log \left( 2 \right) +   \log \left( 2 + 3 \mathsf{INR} \right) \} \nonumber \\
& \leq \min \{ 2 + C_{\sf FB}, 1 +  \log \left( 3 \right) + \log \left( 1 + \mathsf{INR} \right) \} \nonumber \\
& \leq \min \{ C_{\sf FB}, \log \left( 1 + \mathsf{INR} \right) \} + 2.6.
\end{align}

Therefore, with the given choice of power levels the achievable sum-rate of (\ref{eq:rsumc1}) is within $2.6$ bits/sec/Hz of the minimum of the outerbounds in (\ref{eq:sum1}) and (\ref{eq:sum2}). 

\noindent $\bullet$ If ${\sf SNR \geq 1}$, pick 

Pick the following set of power levels:
\begin{equation}
\label{}
\left\{ \begin{array}{ll}
& P_1^{(1)} = 0 \\
& P_1^{(2)} = 0 \\ 
& P_1^{(3)} = \frac{\sf SNR}{\sf INR} \min \{ 2^{C_{\sf FB}}, \frac{\sf INR}{{\sf SNR}^2} \} \\
& P_1^{(4)} = 1 - P_1^{(3)} \\ 
& P_2^{(1)} = 0 \\
& P_2^{(2)} = \frac{1}{\sf INR} \min \{ 2^{C_{\sf FB}}, \frac{\sf INR}{{\sf SNR}^2} \} \\
& P_2^{(3)} = 1 - P_2^{(2)}
\end{array} \right.
\end{equation}

It is straight forward to check that the power levels are non-negative and they satisfy the power constraint at the transmitters. By plugging the given values of power levels into our achievable sum-rate $R_{\sf SUM}^{(c)}$ defined in (\ref{eq:rsumc}), we get
\begin{align}
\label{}
& R_{\sf SUM}^{(c)} = \log \left( 1 + \frac{{\sf SNR} (1 - \frac{1}{\sf SNR})}{2} \right) \nonumber \\
&\; + \log \left( 1 + \frac{{\sf SNR} (1 - \frac{1}{\sf SNR})}{2} \right) + \log \left( \frac{\min \{ 2^{C_{\sf FB}}, \frac{\sf INR}{{\sf SNR}^2} \}}{2} \right) \nonumber \\
&\; \geq 2 \log \left( 1 + {\sf SNR} \right) + \min \{ C_{\sf FB}, \log \left( \frac{\sf INR}{{\sf SNR}^2} \right) \} -3 \nonumber \\
&\; = \min \{ 2 \log \left( 1 + {\sf SNR} \right) + C_{\sf FB}, \nonumber \\
&\; 2 \log \left( 1 + {\sf SNR} \right) + \log \left( \frac{\sf INR}{{\sf SNR}^2} \right) \} - 3 \nonumber \\
&\; \geq \min \{ 2 \log \left( 1 + {\sf SNR} \right) + C_{\sf FB}, \log \left( 1 + {\sf INR} \right) \} - 4.
\end{align}

Consider the outerbounds in (\ref{eq:sum1}) and (\ref{eq:sum2}), under the assumptions of case (c), we have
\begin{align}
\label{outc}
& R_1 + R_2 \leq \min \{ 2 \log \left( 1 + \mathsf{SNR} \right) + C_{\sf FB}, \nonumber \\
& \log \left( 1 +  \frac{  \mathsf{SNR}}{ 1 +  \mathsf{INR}} \right) \nonumber \\
& + \log \left( 1+ \mathsf{SNR} + \mathsf{INR} + 2 \sqrt{ \mathsf{SNR}  \cdot \mathsf{INR}} \right) \} \nonumber \\
& \leq \min \{ 2 \log \left( 1 + \mathsf{SNR} \right) + C_{\sf FB}, \nonumber \\
& \log \left( 2 \right) + \log \left( 1 + 4 \mathsf{INR} \right) \} \nonumber \\
& \leq \min \{ 2 \log \left( 1 + \mathsf{SNR} \right) + C_{\sf FB}, 3 + \log \left( 1 + \mathsf{INR} \right) \} \nonumber \\
& \leq \min \{ 2 \log \left( 1 + \mathsf{SNR} \right) + C_{\sf FB}, \log \left( 1 + \mathsf{INR} \right) \} + 3.
\end{align}

Therefore, with the given choice of power levels the achievable sum-rate of (\ref{eq:rsumc}) is within $7$ bits/sec/Hz of the minimum of the outerbounds in (\ref{eq:sum1}) and (\ref{eq:sum2}). 

\noindent {\bf Case (d) $\frac{2}{3} \log \left( {\sf SNR} \right) \leq \log \left( {\sf INR} \right) \leq \log \left( {\sf SNR} \right)$:}

In this case feedback is not needed. The achievability scheme of \cite{Etkin} for Gaussian IC without feedback, results in a sum-rate to within $1$ bit/sec/Hz of
\begin{equation}
\log \left( 1 + {\sf SNR} \right) + \log \left( 1 + \frac{\sf SNR}{1 + {\sf INR}} \right).
\end{equation}

For the outerbound in (\ref{eq:sum2}), in this case we have
\begin{align}
\label{outc}
R_1 + R_2 &\leq \log \left( 1 +  \frac{  \mathsf{SNR}}{ 1 +  \mathsf{INR}} \right) \nonumber \\
&+  \log \left( 1+ \mathsf{SNR} + \mathsf{INR} + 2 \sqrt{ \mathsf{SNR}  \cdot \mathsf{INR}} \right) \nonumber \\
&\leq \log \left( 1 + \frac{\sf SNR}{1 + {\sf INR}} \right) + \log \left( 1 + 4 {\sf SNR} \right) \nonumber \\
&\leq \log \left( 1 + \frac{\sf SNR}{1 + {\sf INR}} \right) + \log \left( 1 + {\sf SNR} \right) + 2.
\end{align}

Therefore, the achievable sum-rate of \cite{Etkin} is within $3$ bits/sec/Hz of the outerbound in (\ref{eq:sum2}).

\noindent {\bf Case (e) $\log \left( {\sf SNR} \right) \leq \log \left( {\sf INR} \right) \leq 2 \log \left( {\sf SNR} \right)$: }

In this case feedback is not needed. The achievability scheme of \cite{Etkin} for Gaussian IC without feedback, results in a sum-rate to within $1$ bit/sec/Hz of
\begin{equation}
\log \left( 1 + {\sf SNR} + {\sf INR} \right).
\end{equation}

For the outerbound in (\ref{eq:sum2}), in this case we have
\begin{align}
\label{outd}
R_1 + R_2 &\leq \log \left( 1 +  \frac{  \mathsf{SNR}}{ 1 +  \mathsf{INR}} \right) \nonumber \\
&+  \log \left( 1+ \mathsf{SNR} + \mathsf{INR} + 2 \sqrt{ \mathsf{SNR}  \cdot \mathsf{INR}} \right) \nonumber \\
&\overset{({\sf SNR} \leq {\sf INR})}{\leq} \log \left( 2 \right) + \log \left( 1+ 4 \mathsf{INR} \right) \nonumber \\
&\leq \log \left( 1 + {\sf INR} \right) + 3.
\end{align}

Therefore, the achievable sum-rate of \cite{Etkin} is within $4$ bits/sec/Hz of the outerbound in (\ref{eq:sum2}).

\bibliographystyle{ieeetr}
\bibliography{bib_feedback}

\begin{biography}[{\includegraphics[width=1in,height=1.25in,clip,keepaspectratio]{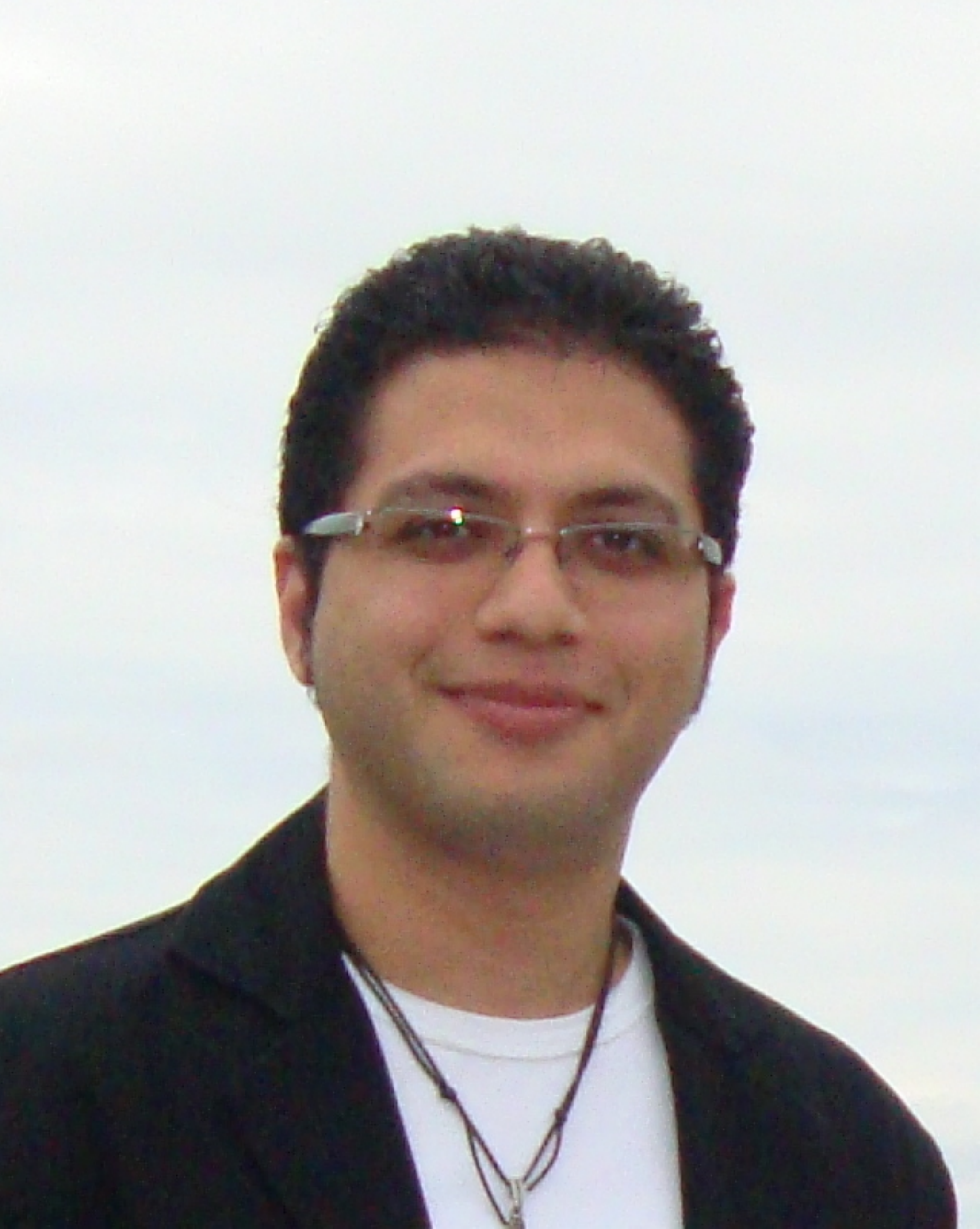}}] {Alireza Vahid} received his B.Sc. degree in electrical engineering from Sharif University of Technology, Tehran, Iran, in 2009. He is currently a Ph.D. student at the School of Electrical and Computer Engineering, Cornell University, Ithaca, NY. His research interests include information theory and wireless communications.

He has received the Director's Ph.D. Teaching Assistant Award in 2010, and Jacobs Scholar Fellowship in 2009.
\end{biography}


\begin{biography}[{\includegraphics[width=1in,height=1.25in,clip,keepaspectratio]{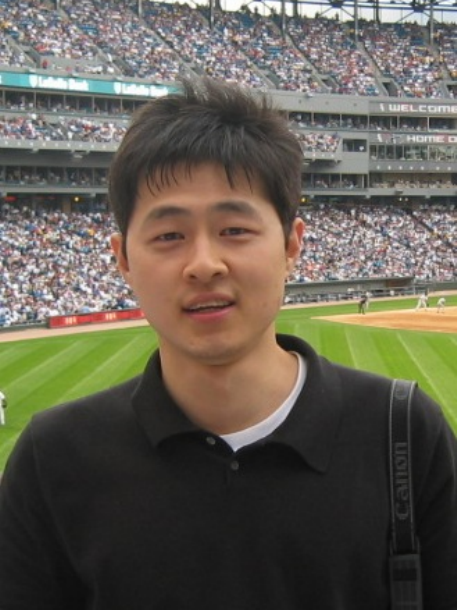}}] {Changho Suh} is a postdoctoral associate at MIT. He received the Ph.D. degree in Electrical Engineering and Computer Sciences from UC-Berkeley in 2011. He received the B.S. and M.S. degrees in Electrical Engineering from Korea Advanced Institute of Science and Technology in 2000 and 2002 respectively. Before joining Berkeley, he had been with the Telecommunication R\&D Center, Samsung Electronics.

Dr. Suh received the David J. Sakrison Memorial Prize for outstanding doctoral research from the UC-Berkeley EECS Department in 2011, the Best Student Paper Award of the IEEE International Symposium on Information Theory in 2009 and the Outstanding Graduate Student Instructor Award in 2010. He was awarded several fellowships, including the Vodafone U.S. Foundation Fellowship in 2006 and 2007; the Kwanjeong Educational Foundation Fellowship in 2009; and the Korea Government Fellowship from 1996 to 2002.
\end{biography}


\begin{biography}[{\includegraphics[width=1in,height=1.25in,clip,keepaspectratio]{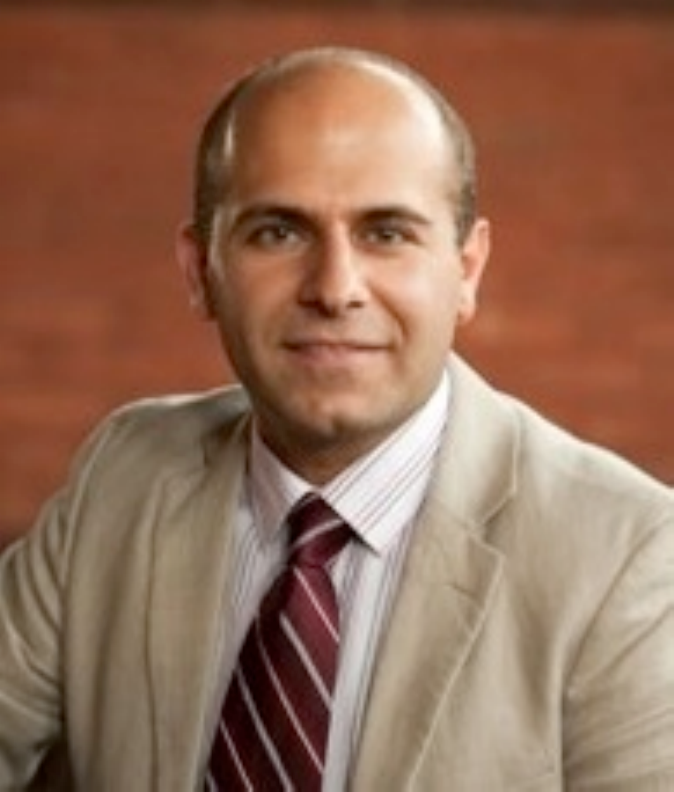}}] {A. Salman Avestimehr} received the B.S. degree in electrical engineering from Sharif University of Technology, Tehran, Iran, in 2003 and the M.S. degree and Ph.D. degree in electrical engineering and computer science, both from the University of California, Berkeley, in 2005 and 2008, respectively.

He is currently an Assistant Professor at the School of Electrical and Computer Engineering at Cornell University, Ithaca, NY. He was also a postdoctoral scholar at the Center for the Mathematics of Informa- tion (CMI) at the California Institute of Technology, Pasadena, in 2008. His research interests include information theory, communications, and networking.

Dr. Avestimehr has received a number of awards, including the Presidential Early Career Award for Scientists and Engineers (PECASE) in 2011, the Young Investigator Program (YIP) award from the U. S. Air Force Office of Scientific Research (2011), the National Science Foundation CAREER award (2010), the David J. Sakrison Memorial Prize from the U.C. Berkeley EECS Department (2008), and the Vodafone U.S. Foundation Fellows Initiative Research Merit Award (2005). He has been a Guest Associate Editor for the IEEE Transactions on Information Theory Special Issue on Interference Networks.
\end{biography}

\end{document}